\definecolor{darkred}{rgb}{0.8,0,0}
\newtheorem{theorem}{Theorem}[section]
\newtheorem{corollary}[theorem]{Corollary}
\newtheorem{lemma}[theorem]{Lemma}
\newtheorem{prop}[theorem]{Proposition}
\theoremstyle{definition}
\theoremstyle{remark}
\newtheorem{remark}[theorem]{Remark}
\numberwithin{equation}{section}
\let\Re\undefined
\DeclareMathOperator{\Re}{\rm Re}
\DeclareMathOperator{\II}{\mathbb I}
\DeclareMathOperator{\KK}{\mathbb K}
\DeclareMathOperator{\LL}{\mathbb L}
\DeclareMathOperator{\TT}{\mathbb T}
\DeclareMathOperator{\tr}{tr}
\DeclareMathOperator{\omeg}{\Omega}
\let\O\undefined
\DeclareMathOperator{\O}{O}
\newcommand{\Om}[2]{\ensuremath{\omeg(#1):#2}}
\definecolor{mygreen}{rgb}{0,0.6,0}
\definecolor{mygray}{rgb}{0.5,0.5,0.5}
\definecolor{lightgray}{rgb}{0.925,0.925,0.925}
\definecolor{mymauve}{rgb}{0.58,0,0.82}
\definecolor{gcolor}{rgb}{0.004,0.396,0.741}
\tiny\color{mygray}, 
\begin{document}

\title[Finite size effects for spacing distributions]{Finite size effects for spacing distributions in random matrix theory: circular ensembles and Riemann zeros}

\author{Folkmar Bornemann}
\address[Folkmar Bornemann]{Zentrum Mathematik -- M3, 
  Technische Universit\"at M\"unchen, Germany}
\email{bornemann@tum.de}

\author{Peter J. Forrester} \address[Peter J. Forrester]{Department of Mathematics and Statistics, 
ARC Centre of Excellence for Mathematical \& Statistical Frontiers,
The University of Melbourne, Victoria 3010, Australia
}\email{pjforr@unimelb.edu.au}

\author{Anthony Mays} \address[Anthony Mays]{Department of Mathematics and Statistics, 
ARC Centre of Excellence for Mathematical \& Statistical Frontiers,
The University of Melbourne, Victoria 3010, Australia
}\email{anthony.mays@unimelb.edu.au}

\date{}

\begin{abstract} According to Dyson's three fold way,
from the viewpoint of global time reversal symmetry
 there are three circular ensembles of unitary random matrices relevant to the study of 
chaotic spectra in quantum mechanics.
These are the circular orthogonal, unitary and symplectic ensembles, denoted
COE, CUE and CSE respectively.
For each of these three ensembles and their thinned versions, whereby each eigenvalue is deleted independently with probability $1-\xi$,
we take up the problem of calculating the first two terms in the scaled large $N$ expansion of the spacing distributions. It is well
known that the leading term admits a characterisation in terms of both Fredholm determinants and Painlev\'e transcendents.
We show that modifications of these characterisations also remain valid for the next to leading term, and that they provide schemes
for high precision numerical computations. In the case of the CUE there is an application to the analysis of Odlyzko's data set for the
Riemann zeros, and in that case some further statistics are similarly analysed.
\end{abstract}

\maketitle
\section{Introduction}
Random matrix theory is of importance for both its conceptual and its predictive powers. At a conceptual level, there is the three fold way 
classification of global time reversal symmetries of Hamiltonians  for chaotic quantum systems \cite{Dy62c}, later
extended to ten by the inclusion of chiral and topological symmetries \cite{AZ97}. For each of the ten Hamiltonians there is an ensemble
of random Hermitian matrices, which in turn are the Hermitian part of the ten matrix Lie algebras associated with the infinite families of symmetric
spaces. The symmetric spaces themselves give rise to ten  ensembles of random
unitary matrices \cite{KS99}.

To see how the formulation of ensembles leads to a predictive statement, consider for definiteness the three fold way classification of 
Hamiltonians  for chaotic quantum systems. The basic hypothesis, initiated by Wigner \cite{Wi57} and Dyson  \cite{Dy62c}, and refined by Bohigas et al.~\cite{BGS84},
is that a quantum mechanical system for which the underlying classical mechanics is chaotic, will, for large energies have the same
statistical distribution as the bulk scaled eigenvalues of the relevant ensemble of Hermitian matrices.  The particular ensemble is 
determined by the
presence or absence of a time reversal symmetry, with the former consisting of two subcases depending on the time reversal
operator $T$ being such that $T^2 = 1$, or $T^2 = -1$; see e.g.~\cite[Ch.~1]{Fo10}. This becomes predictive upon the specification of
the statistical distributions for the bulk scaled eigenvalues of the random matrix
ensemble. Thus, according to the hypothesis, these same distributions will
be exhibited by the point process formed from the 
highly excited energy levels of a chaotic quantum system.
The point process can be realised in laboratory experiments of nuclear excitations (see e.g.~\cite{FM09a}), and microwave billiards \cite{St10}, amongst other examples. 

Arguably the most spectacular example of both the conceptual and predictive powers of random matrix theory shows itself in relation
to the Riemann zeta function $\zeta(s)$. The celebrated Riemann hypothesis asserts that all the complex zeros of $\zeta(s)$ are
of the form $s = {1 \over 2} \pm i E$, with $E> 0$. The Montgomery-Odlyzko law asserts that for large $E$ these zeros --- referred to as the
Riemann zeros --- have the same statistical properties as the bulk eigenvalues of large Hermitian random matrices with complex
elements. But from the three fold way, the latter is the ensemble of random Hermitian matrices giving the statistical properties of the
large energy levels of a chaotic quantum mechanical system without a time reversal symmetry. The implied relationship between the Riemann zeros
and quantum chaos is consistent with and extends the Hilbert-P\'olya conjecture \cite{WikiHP} asserting that the Riemann zeros
correspond to the eigenvalues of some (unknown) unbounded self-adjoint operator; see \cite{BK99,SR11} for contemporary research on this topic.

To realise the predictive consequences of this link between two seemingly unrelated quantities --- the Riemann zeros and
chaotic quantum Hamiltonians without time reversal symmetry --- requires a list of the Riemann zeros for large $E$.
Such a list has been provided by Odlyzko \cite{Od87}, who in a celebrated numerical computation has provided the
high precision evaluation of the $10^{20}$-th Riemann zero and over 70 million of its neighbours. 
From this data set the veracity of the Montgomery-Odlyzko law can be tested. Primary statistical quantities for this purpose include
the two-point correlation function $\langle \rho_{(2)}(x,x+s) \rangle$ --- which measures the density of zeros a distance $s$ from $x$,
with $x$ itself averaged over a window of zeros which itself is large, but still small relative to $x$ --- and the spacing distribution $p(k;s)$ for
the event that the $k$-th next (in consecutive order) neighbours\footnote{In an increasingly ordered list the
$k$-th next neighbour of the item $x_j$ is item $x_{j+k+1}$.} are a distance $s$ apart. As displayed in \cite{Od87}, at a graphical level the agreement
between the Riemann zero data and the predictions of random matrix theory is seemingly exact.

Although zeros of order $10^{20}$ are huge on an absolute scale, it turns out that convergence to random matrix predictions
occurs on a scale where $\log E$ plays the role of $N$, giving rise to potentially significant finite size effects.
Such effects were first considered in 
the work of Keating and Snaith  \cite{KS00a} in their study of the statistical properties of the value distribution of the logarithm of the Riemann zeta
function on $\Re s = 1/2$ with $|s|$ large. Quite unexpectedly they showed that the finite size corrections could be understood by
introducing as a model for the Riemann zeros the eigenvalues of Haar distributed unitary random matrices from U$(N)$, instead of complex Hermitian random matrices.
It had been known since the work of Dyson \cite{Dy62e} that the eigenvalues for Haar distributed unitary random matrices from U$(N)$ and
complex Hermitian random matrices with Gaussian entries have the same limiting statistical distribution. But for finite $N$ they
are fundamentally different, with the former being rotationally invariant while the spectral density of the latter is given by the
Wigner semi-circle law and thus has edge effects.

Odlyzko's work on the computation of large Riemann zeros is ongoing. In \cite{Od01} he announced a data set beginning with
the $10^{23}$-rd zero and its next $10^9$ neighbours. Taking advantage of the great statistical accuracy offered by this data set,
the finite size correction for the deviation of the empirical nearest neighbour spacing distribution and
the limiting random matrix prediction was displayed, and shown to have structure.
 To explain the functional form of this correction, in keeping
with the work of Keating and Snaith, Bogomolny and collaborators \cite{BBLM06} were able to present analytic evidence
that the correction term could again be understood in terms of a model of eigenvalues from Haar distributed U$(N)$ matrices.

More explicitly, let $p^{{\rm U}(N)}(0;s)$ refer to the spacing distribution between consecutive eigenvalues for Haar distributed unitary
random matrices, with the angles of the eigenvalues rescaled to have mean
spacing unity. The study \cite{BBLM06} calls for the functional form of $r_2(0;s)$ in the large $N$ expansion
\begin{equation}\label{1}
  p^{{\rm U}(N)}(0; s ) = p_2(0;s) + {1 \over N^2} r_2(0;s) + \cdots
\end{equation}
Here the subscript ``2" on the RHS is the label from Dyson's three fold way in the case of no time reversal symmetry.
As a straightforward application of the pioneering work of Mehta \cite{Me60} and Gaudin \cite{Ga61},
Dyson \cite{Dy62e} showed that
\begin{equation}\label{2}
p_2(0;s) = {d^2 \over d s^2} \det (\II - \KK_{s}),
\end{equation}
where $ \KK_{s}$ is the integral operator on $(0,s)$ with kernel
\begin{equation}\label{3a}
K(x,y) = {\sin \pi (x - y) \over \pi (x - y)}.
\end{equation}
Twenty years later this Fredholm determinant evaluation was put in the context of Painlev\'e theory by the
Kyoto school of Jimbo et al.~\cite{JMMS80}, who showed
\begin{equation}\label{KS}
\det ( \II - \xi \KK_s) = \exp \int_0^{\pi s} {\sigma^{(0)} (t;\xi) \over t} \, dt,
\end{equation}
where $\sigma$ satisfies the differential equation (an example of the Painlevé V equation in sigma form)
\begin{equation}\label{d1}
(t \sigma'')^2 + 4 (t \sigma' - \sigma)(t \sigma' - \sigma + (\sigma')^2) = 0
\end{equation}
with small $t$ boundary conditions 
\begin{equation}\label{d2}
\sigma^{(0)}(t;\xi) = - {\xi \over \pi} t - {\xi^2 \over \pi^2} t^2 + \O(t^3).
\end{equation}
Note that the parameter $\xi$ introduced in (\ref{KS}) only enters in the characterisation through the boundary condition;
we refer to~\cite[Ch.~8]{Fo10} for background theory relating to the Painlev\'e equations as they occur in
random matrix theory. 

In \cite{BBLM06} the leading correction term $r_2(0;s)$ was estimated using a numerical extrapolation from the
tabulation of $p^{{\rm U}(N)}(0; s ) - p_2(0;s)$. In \cite{FM15} the task of finding
analytic forms for $r_2(0;s)$ was addressed. One first notes that
analogous to the formula (\ref{2}), for finite $N$, 
\begin{equation}\label{1a}
 p^{{\rm U}(N)}(0; s ) =  {d^2 \over d s^2} \det (\II - \KK_{s}^N),
\end{equation}
where $ \KK_{s}^N$ is the integral operator on $(0,s)$ with kernel
\begin{equation}\label{3}
K^N(x,y) =   {\sin \pi (x - y) \over N \sin (\pi (x-y)/N)}.
\end{equation}
Expanding now for large $N$ gives
\begin{equation}\label{eq:KNexpansion}
K^N(x,y) = K(x,y) + \frac1{N^{2}} L(x,y) + \O \Big ( {1 \over N^4} \Big ),
\end{equation}
where the leading correction term in this expansion is given explicitly by
\begin{equation}\label{L}
L(x,y) = (\pi(x-y)/6) \sin \pi (x - y).
\end{equation}
It was observed in \cite{FM15} (cf. also  Lemma~\ref{lem:folklore} below) that this expansion lifts to
\begin{equation}\label{Tq0}
\det ( \II -  \KK_s^N) = \det ( \II -  \KK_s) +  \frac{1}{N^2} \Om{\KK_s}{\LL_s} + \O \Big ( {1 \over N^4} \Big ),
\end{equation}
where $\KK_s$ is as in (\ref{3a}) and $\LL_s$ is the integral operator on $(0,s)$ with kernel $L$;
the leading correction is now given by the operator expression
\begin{equation}
\Om{\KK_s}{\LL_s} = -\det(\II - \KK_s) \tr \left((\II - \KK_s)^{-1}\LL_s \right),
\end{equation}
which is {\em linear} in $\LL$.
Substituting in (\ref{1a}) and comparing to (\ref{1}) we read off that
\begin{equation}\label{Tq1}
r_2(0;s) =    {d^2 \over d s^2} \Om{\KK_s}{\LL_s}.
\end{equation}
In Section~\ref{sub:numerical} of the present paper the problem addressed is how to use such formulae to provide a high precision numerical tabulation of
$r_2(k;s)$. In the case $k=1$ we exhibit the resulting functional form in Odlyzko's data set for the Riemann zeros; see Section \ref{sect:riemann}.

An expression for the large $N$ expansion of the Fredholm determinant in (\ref{1a}) involving Painlev\'e transcendents was also given
in \cite{FM15}. Thus it was shown that
\begin{multline}\label{2.13d}
\det ( \II - \xi \KK_s^N) \\ = 
\exp \left( \int_0^{\pi s} {\sigma^{(0)}(t;\xi) \over t} \, dt \right)
\left( 1 + {1 \over N^2} \int_0^{\pi s} {\sigma^{(1)}(t;\xi) \over t} \, dt + \O \Big ( {1 \over N^4} \Big ) \right ).
\end{multline}
As noted above, $\sigma^{(0)}(t;\xi)$  satisfies the particular Painlev\'e V equation in sigma form
(\ref{d1}) with boundary condition (\ref{d2}), with only the latter depending on $\xi$. The  function $\sigma^{(1)}(t;\xi)$ was characterised as the solution of
 the second order, linear differential equation
\begin{equation}\label{Sig2}
A(t;\xi) y'' + B(t;\xi) y' + C(t;\xi) y  = D(t;\xi),
\end{equation}
where, with $\sigma = \sigma^{(0)}(t;\xi)$,
\begin{align}
\nonumber A(t;\xi) & = 2 t^2 \sigma'',\\
\nonumber B(t;\xi) & = - 8  \sigma' \sigma + 12 t (\sigma')^2 + 8t \left( t \sigma' - \sigma \right),\\
\nonumber C(t;\xi) & = - 4 (\sigma')^2 - 8 \left( t \sigma' - \sigma \right), \\
\nonumber D(t;\xi) & = -\frac{4} {3} t^2 \sigma'' \left( \sigma- t  \sigma' - { t^2 \over 2} \sigma'' \right) \\
\nonumber & \quad - \frac{4} {3} (t  \sigma' - \sigma  )\left( 3 \sigma^2 + 2t \sigma \left( t- \sigma'  \right)- 2t^2 \sigma'  \left( t+ \sigma'  \right) \right),
\end{align}
and fulfilling the $t \to 0^+$ boundary condition 
\begin{align}\label{bcs}
\sigma^{(1)}(t;\xi) & = - \Big ( t^4 {\xi^2 \over 9 \pi^2} + t^5 {5 \xi^3 \over 36 \pi^3} + \O(t^6) \Big ).
\end{align}
Substituting (\ref{2.13d}) in  (\ref{1a}) and comparing to (\ref{1}) we read off that
\begin{equation}\label{Tq2}
r_2(0;s) =    {d^2 \over d s^2} \exp \left( \int_0^{\pi s} {\sigma^{(0)}(t;1) \over t} \, dt \right)  \int_0^{\pi s} {\sigma^{(1)}(t;1) \over t} \, dt.
\end{equation}
By using a nested power series method to solve the differential equation (\ref{Sig2}) numerically, 
this expression was used in  \cite{FM15} to determine the graphical form of $r_2(0;s)$, and also to determine
its small and large $s$ expansions. However, the operator approach advocated in the present paper is simpler
and more straightforward to use numerically.

Random unitary matrices from U$(N)$ chosen with Haar measure form one of Dyson's three circular ensembles
of unitary random matrices coming from the considerations of the three fold way, and in this context is
referred to as the circular unitary ensemble (CUE).
The other two circular ensembles are the circular orthogonal ensemble (COE) of symmetric unitary matrices,
and the circular symplectic ensemble (CSE) of self dual unitary matrices.
We will show that our methods can be adapted to provide a systematic
investigation of the leading correction in the large $N$ expansion of
spacing distributions for each of these examples. 

\subsection{Outline of the paper}
After some preparatory material on the expansion of operator determinants and numerical
methods for evaluating the corresponding terms in Section~\ref{sect:prep}, we begin in Section~\ref{sect:CUE} by extending the study initiated in \cite{FM15} on this problem as it applies to CUE matrices. 
The first question addressed is the computation of the leading order correction term in the large $N$
expansion of $p^{{\rm U}(N)}(k; s ) - p_2(k;s)$, or equivalently
$p^{{\rm CUE}_N}(k; s ) - p_2(k;s)$, in terms of an integral operator formula extending
(\ref{Tq1}). We then do the same in the circumstance that each eigenvalue has been deleted independently
at random with probability $(1 - \xi)$. Such a thinning procedure, well known in the theory of point
processes (see e.g.~\cite{IPSS08}), was introduced into random matrix theory by Bohigas and Pato \cite{BP04},
and has been applied to Odlyzko's data set for the Riemann zeros in \cite{FM15}. 
Continuing with the point process perspective, next we consider the setting of a translationally invariant one-dimensional point process, normalised to have average spacing unity, superimposed with a Poisson point process.
Our interest is in the distribution of the minimum distance
from each Poisson point to a point in the original process.
In an ensemble setting, this is equivalent to computing the minimum distance 
from the origin of
a point in the original process. 
For CUE matrices the distribution can be expressed in a form analogous to \eqref{1a}, which allows the statistic to be expanded
for large $N$.
In the final subsection of Section~\ref{sect:CUE}, the statistic for the nearest neighbor spacing, that is, the minimum of the spacing distance between left and right neighbours
in the CUE is similarly studied. In Section~\ref{sect:riemann} our results are compared against
empirical findings from Odlyzko's data set for the Riemann zeros. There we will use a unique $\O(N^{-3})$ correction of
the CUE correlation kernel that was observed in
the study \cite{BBLM06} to match the corresponding $\O(N^{-3})$ terms of the two- and three-point correlation functions. Including this term in the numerical calculations systematically improves the fit to the empirical data.

The study of spacing distributions for matrices from the COE and CSE is more involved than for CUE matrices. The reason
is that the latter is based on the single integral operator $\mathbb{K}_{s}^N$ with kernel (\ref{3}), whereas the pathway to tractable expressions
in the case of the COE and CSE makes use of inter-relations between gap probabilities. The necessary theory is
covered in the first part of Section~\ref{sect:COECSE}.
In the second part of Section~\ref{sect:COECSE} we give the leading correction term for the large $N$ expansion of 
\[
p^{{\rm C}\beta {\rm E}}_\xi(k; s ) - p_{\beta, \xi}(k;s),
\]
where $\beta = 1$ for the COE and $\beta = 4$ for the CSE, in terms of a characterisation analogous to 
(\ref{2.13d}) and (\ref{Sig2}). The parameter $\xi$ specifies the thinning probability.

\subsection{A note on sampling sizes of empirical data}
In this paper we discuss finite size effects up to an error $\O(N^{-4})$, with $N$ in the Odlyzko data set
of Riemann zeros being effectively $N\approx 10$. We choose the same order of magnitude of $N$ for the simulations of the
circular ensembles. Now, by the law of large numbers, sampling is known to introduce a statistical error of about $\O(M^{-1/2})$
where $M$ denotes the sample size. Hence, pushing the sampling error to the same order of magnitude as the remaining
finite size error thus requires a sampling size of $M = N^8 \approx 10^8$. This was actually the choice for our simulations 
and is well matched by the Odlyzko data set of a little more than $10^9 \approx 10^8\cdot N$ zeros. 
However, observing structure also in the $\O(N^{-4})$ remainder term would hence require to increase the sampling size by at least
two to four orders of magnitude. This is only possible with parallel processing and a subtle memory management of the
resulting gigantic data sets (the raw material provided by Odlyzko is already about 22GB in size).

\section{Expansions of determinants and their numerical evaluation}\label{sect:prep}
\subsection{Expansions of operator determinants}

The integral operator formulae of this paper are based on the following {\em folklore} lemma, which we prove here for 
ease of reference.  

\begin{lemma}\label{lem:folklore} Let $J$ be a bounded interval and let $K_h(x,y)$, $K(x,y)$, $L(x,y)$ be continuously differentiable kernels on $J\times J$. If the expansion
\[
K_h(x,y) = K(x,y) + h L(x,y) + O(h^2)
\]
holds uniformly up to the first derivatives as $h\to 0$, then it lifts as an expansion
\[
\KK_h = \KK + h \LL + O(h^2)
\]
of the induced integral operators on $L^2(J)$ in trace-class norm. If $1\not\in \mathrm{Spec} (\KK)$, the operator determinant expands as
\begin{subequations}
\begin{equation}
\det(\II - \KK_h) = \det(\II - \KK) + h \Om{\KK}{\LL}  + O(h^2)
\end{equation}
with the leading correction term given by\footnote{Note that $\Omega(\KK) = -\det(\II-\KK)(\II-\KK)^{-1}$
is the Fréchet derivative of the map $\KK \mapsto \det(\II-\KK)$ under the identification of the dual of the space of
trace-class operators with the space of bounded operators, see \cite[Cor.~5.2]{Si05}. The derivative 
of that map in the direction of
the bounded operator $\LL$ is thus given by $\Om{\KK}{\LL} = \tr(\Omega(\KK)\LL)$. By analytic continuation this interpretation of the expression $\Om{\KK}{\LL}$ extends well to the case that $1\in \mathrm{Spec}(\KK)$.}
\begin{equation}
\Om{\KK}{\LL} = -\det(\II - \KK) \tr \left((\II - \KK)^{-1}\LL \right),
\end{equation}
\end{subequations}
an expression that is linear in $\LL$.
\end{lemma}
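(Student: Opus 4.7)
The plan is to execute three steps in sequence, tracking how the three layers (kernel $\to$ operator $\to$ determinant) are controlled by the $C^1$ norm.

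Step 1: lift the pointwise kernel expansion to a trace-class operator expansion. The key ingredient I would invoke is the standard fact that on a bounded interval $J$, a continuously differentiable kernel $K\in C^1(J\times J)$ induces a trace-class integral operator on $L^2(J)$ whose trace norm is controlled by the $C^1$ sup-norm of the kernel, say $\|\KK\|_1 \leq C(|J|)\,\|K\|_{C^1}$. One convenient way to see this is to expand $K$ in the product Fourier basis of $L^2(J)$: the assumption $K\in C^1$ ensures that the Fourier coefficients $c_{m,n}$ decay like $1/((1+|m|)(1+|n|))$ after one integration by parts in each variable, and a Cauchy--Schwarz argument then gives $\sum_{m,n}|c_{m,n}|<\infty$, which dominates the trace norm of the associated operator. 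Applying this estimate to the remainder kernel $R_h := K_h - K - h L$, whose $C^1$-norm is $O(h^2)$ by hypothesis, yields $\|\KK_h - \KK - h\LL\|_1 = O(h^2)$.

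Step 2: deduce the determinant expansion. I would use that the Fredholm determinant $\KK\mapsto\det(\II-\KK)$ is holomorphic from the trace ideal to $\CC$ and, as noted in the footnote of the statement, its Fréchet derivative at $\KK$ (when $1\notin\mathrm{Spec}(\KK)$) is the trace-class functional $\LL\mapsto -\det(\II-\KK)\,\tr((\II-\KK)^{-1}\LL)$. Combining this with the trace-class expansion from Step~1 and a first-order Taylor expansion of the determinant around $\II-\KK$ gives
\[
\det(\II - \KK_h) = \det(\II - \KK) - h\,\det(\II-\KK)\,\tr\!\bigl((\II-\KK)^{-1}\LL\bigr) + O(h^2),
\]
which is exactly the claimed formula with $\Om{\KK}{\LL}$ as stated, and is manifestly linear in $\LL$.

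The main obstacle is Step~1: uniform convergence of kernels alone only guarantees Hilbert--Schmidt (and not trace-class) convergence of the associated integral operators, so the $C^1$ hypothesis together with its quantitative control on trace norms is essential. Once the trace-class estimate is in hand, Step~2 is a direct application of standard Fredholm determinant theory (Simon, \cite{Si05}). The promotion of the formula to the case $1\in\mathrm{Spec}(\KK)$ by analytic continuation is then automatic, since for any trace-class $\LL$ the map $z\mapsto \Omega(z\KK;\LL)$ is entire in a neighbourhood of $z=1$ by the Fredholm theory, even when $z=1$ happens to be a zero of $\det(\II-z\KK)$.
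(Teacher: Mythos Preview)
Your overall two-step plan matches the paper's proof, and Step~2 (differentiability of $\KK\mapsto\det(\II-\KK)$ in trace norm) is fine; the paper phrases it slightly differently---factoring $\det(\II-\KK_h)=\det(\II-\KK)\det(\II-h(\II-\KK)^{-1}\LL+O(h^2))$ by multiplicativity and then truncating the exterior-product series---but this is equivalent to your Taylor/Fr\'echet-derivative argument.

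The gap is in your justification of Step~1. The Fourier argument you sketch does not go through under a bare $C^1$ hypothesis. First, obtaining the decay $|c_{m,n}|\lesssim 1/((1+|m|)(1+|n|))$ by ``one integration by parts in each variable'' requires the \emph{mixed} partial $\partial_x\partial_y K$, which $C^1$ does not provide. Second, even granting that decay, it does not yield $\sum_{m,n}|c_{m,n}|<\infty$ (you would be summing the product of two harmonic series), so the Cauchy--Schwarz step as stated fails as well. You therefore have neither the claimed decay nor the claimed summability.

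The paper avoids this by a different, more elementary device: on the bounded interval $J=[a,b]$ write, via the fundamental theorem of calculus,
\[
K(x,y)=K(a,y)+\int_a^x \partial_1 K(t,y)\,dt,
\]
so that $\KK$ splits as a rank-one operator (kernel $K(a,y)$) plus a composition of two Hilbert--Schmidt operators (the Volterra operator with kernel $\mathbf 1_{\{t\le x\}}$ composed with the operator with kernel $\partial_1 K(t,y)$). This gives $\|\KK\|_1\le C(|J|)\|K\|_{C^1}$ using only first derivatives, and applying it to $R_h=K_h-K-hL$ yields the trace-class expansion you need. Swapping your Fourier sketch for this FTC decomposition closes the gap.
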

\begin{proof}
On {\em bounded}\/ intervals $J$, continuously differentiable kernels $K(x,y)$ induce integral operators $\KK$ on $L^2(J)$ that are trace-class: using the fundamental theorem of calculus one can represent them straightforwardly as the sum of a rank-one operator and a product of two Hilbert--Schmidt operators, see, e.g., \cite[p.~879]{Bo08}. By the same construction the
expansion lifts from kernels in $C^1$-norm to operators in trace-class norm. Now, using the multiplicativity of the operator determinant
we get first that
\[
\det(\II - \KK_h) = \det(\II - \KK) \det\left(\II - h(\II - \KK)^{-1}\LL + \O(h^2)\right).
\]
Truncating the series definition of the operator determinant in terms of exterior products, see, e.g., \cite[Eq.~(3.5)]{Si05},
\[
\det(\II + z \TT) = 1 + \sum_{n=1}^\infty z^n \tr \textstyle\bigwedge^n \TT,
\]
 at the second term we get next that
\[
\det\left(\II - h(\II - \KK)^{-1}\LL + \O(h^2)\right)  = 1- h \tr \left((\II - \KK)^{-1}\LL )\right) + \O(h^2),
\]
which completes the proof.
\end{proof}

\subsection{Numerical evaluation of operator terms}\label{sub:numerical} 
The computations for all figures of this paper are based on the numerical
evaluation of operator terms such as $\det(\II - \KK)$ or $\Om{\KK}{\LL}$. In fact, there is an
extension of the numerical 
method introduced in \cite{Bo08} for the operator determinant that facilitates the efficient highly-accurate evaluation of such terms
with exponential convergence\footnote{There is a constant $c>0$ such that
the approximation error is $\O(e^{-cn})$ where $n$ denotes the dimension parameter of the
method (e.g., the number of quadrature points).} if the kernels extend analytically into the complex domain.
Given a quadrature
method
\[
\int_J f(x) \,dx \approx \sum_{j=1}^n f(x_j) w_j 
\]
with {\em positive} weights $w_j$, we associate to an integral operator $\KK$ on $L^2(J)$ with the kernel $K(x,y)$ the {\em Nyström matrix}
\[
\KK_w = \Big(K(x_j,x_k) w_k\Big)_{j,k=1}^n.
\]
For compact intervals $J$ the best choice for smooth $f$ is Gauss--Legendre quadrature, for which
recently a super-fast algorithm of $\O(n)$ complexity has been found \cite{Bg14}; it converges
exponentially fast if $f$ extends analytically into the complex domain.

As was shown in  \cite[Thm.~6.2]{Bo08}, these convergence properties are inherited by the determinant of the
Nyström matrix as an approximation of the operator determinant, namely
\[
\det(\II-\KK) \approx \det(\II - \KK_w),
\]
where $\II$ denotes the identity operator in $L^2(J)$ on the left and the $n\times n$ identity matrix on the right. In particular, the convergence is exponential if the kernel $K$ extends analytically into the complex domain. Now, the same method of replacing the integral operator by the associated Nyström matrix applies to the approximation of
quite general operator terms. We give four examples:
\begin{itemize}
\item The trace is approximated straightforwardly by the quadrature formula as
\[
\tr(\KK) = \int_J K(x,x)\,dx \approx \sum_{j=1}^n K(x_j,x_j) w_j = \tr(\KK_w).
\]
\item Operator products are approximated by matrix products, that is,
\[
(\KK \LL)_w \approx \KK_w \LL_w.
\]
This follows from
\begin{align*}
((\KK\LL)_w)_{jk} &= w_k \int_J K(x_j,x') L(x',x_k) \,dx' \\
&\approx \sum_{l=1}^n K(x_j,x_l) w_l L(x_l,x_k) w_k
= \sum_{l=1}^n (\KK_w)_{jl} (\LL_w)_{lk}.
\end{align*}

\item The resolvent kernel $R(x,x')$ of the operator $(\II - \KK)^{-1}$ satisfies the integral equation
\[
u(x) = \int_J R(x,x') u(x') \,dx'  - \int_J\int_J K(x,z) R(z,x') u(x')\,dx' dz
\]
for all $u$. Application of the quadrature rule gives with $u_j=u(x_j)$
\[
u_j \approx \sum_{k=1}^n R(x_j,x_k) w_k \cdot u_k - \sum_{l,k=1}^n K(x_j,x_l) w_l\cdot R(x_l,x_k) w_k \cdot u_k,
\]
that is $\II \approx (\II-\KK_w) \left((\II-\KK)^{-1}\right)_w$ and therefore $\left((\II-\KK)^{-1}\right)_w\approx (\II- \KK_w)^{-1}$.
\item By combining all the examples so far, we get 
\begin{align*}
\Om{\KK}{\LL} &= -\det(\II - \KK) \tr \left((\II - \KK)^{-1}\LL \right) \\
 &\approx -\det(\II - \KK_w) \tr \left((\II - \KK_w)^{-1}\LL_w\right) = \Om{\KK_w}{\LL_w}. 
\end{align*}
\end{itemize}
In all the above the convergence properties are inherited from the underlying quadrature formula; in particular, 
the convergence is exponential if the kernels extend analytically to the complex domain. The proof is straightforward in the first two examples and follows from the theory of the
Nyström method for integral equations in the third one; the fourth is a combination of all the previous results.

The numerical derivatives with respect to the $s$ and $z$ variables of the generating functions for the spacing distributions are computed in exactly the same fashion as discussed in \cite{Bo09}, that  is, based
on Chebyshev expansions with respect to $s$ and contour integration with respect to $z$.

The actual implementation and use of this methodology within the Matlab toolbox of \cite{Bo09} is described in the appendix
of the present paper.

\section{Spacing distributions for the CUE}\label{sect:CUE}
\subsection{Preliminaries}
The sequence of angles, to be referred to as eigen-angles, specifying the eigenvalues of a unitary random matrix
from any of the three circular ensembles forms a point process on $(-\pi, \pi]$ with uniform density $N/2\pi$.
Due to the rotational symmetry, 
the spacing distributions of the angles, $p^{{\rm C}\beta{\rm E}}(k;s)$
with $\beta = 1,2, 4$ for the COE, CUE, CSE respectively, and $k=0,1,\ldots$, are thus independent of the absolute location of the eigenvalues.
If there is an eigen-angle at $x$ and at $x+s$, we may rotate all the
angles so that $x$ becomes the origin, and speak of a spacing of size $s$.
Let us do this, and also normalise the angles so that the mean spacing is unity.

Fundamental to the study of spacing distributions 
$p^{{\rm C}\beta{\rm E}}(k;s)$
are the so-called gap probabilities $E^{{\rm C}\beta{\rm E}}(l;s)$.
The latter specify the probability
that an interval of size $s$ contains exactly $l$ eigen-angles for ${\rm C}\beta{\rm E}$ matrices.
Specifically, let us introduce the generating functions
\begin{align}
p^{{\rm C}\beta {\rm E}}(s;z) & := \sum_{n=0}^N (1 - z)^n p^{{\rm C}\beta {\rm E}}(n;s), \label{G1} \\
E^{{\rm C}\beta {\rm E}}(s;z) & := \sum_{n=0}^N (1 - z)^n E^{{\rm C}\beta {\rm E}}(n;s). \label{G2}
\end{align}
Of course the naming on the LHS's represent an abuse of notation, due to the same
functional forms also appearing on the RHS; however the appearance of the generating function symbol $z$ is enough to distinguish the different quantities.
We have the relationship between generating functions (see e.g.~\cite[Prop.~8.1]{Fo10})
\begin{equation}\label{G3}
p^{{\rm C}\beta {\rm E}}(s;z) = {1 \over z^2} {d^2 \over d s^2} E^{{\rm C}\beta {\rm E}}(s;z),
\end{equation}
or equivalently the formula
\begin{equation}\label{G4}
p^{{\rm C}\beta {\rm E}}(n;s) =  {d^2 \over d s^2} \sum_{j=0}^n ( n - j + 1)  E^{{\rm C}\beta {\rm E}}(j;s).
\end{equation}

Also fundamental is the relationship between the generating function (\ref{G2}) and the $k$-point  correlation functions
$\rho_{(n)}^{{\rm C}\beta {\rm E}}(x_1,\dots,x_n)$. Thus we have (see e.g.~\cite[Prop.~9.1.1]{Fo10})
\begin{equation}\label{ER}
E^{{\rm C}\beta {\rm E}}(s;z) = 1 + \sum_{n=1}^N {(-z)^n \over n!} \int_0^s dx_1 \cdots  \int_0^s dx_n \,
\rho_{(n)}^{{\rm C}\beta {\rm E}}(x_1,\dots,x_n).
\end{equation}
The eigenvalues for the CUE form a determinantal point process, while the eigenvalues for the COE and CSE form 
Pfaffian point processes; see e.g.~\cite[Chapters~5 and 6]{Fo10}. 
Specifically, in the case of the CUE we have
\begin{equation}\label{RK}
\rho_{(n)}^{{\rm CUE}}( x_1,\dots, x_n) = \det [ K^N(x_j,x_l)]_{j,l=1,\dots,k},
\end{equation}
where $ K^N$ is given by (\ref{3}), while the analogous formula for the COE and CSE involves a Pfaffian of a $2 \times 2$
anti-symmetric kernel function.
This structural difference distinguishes the case
of the CUE and thus makes it simpler. 

\subsection{Integral operator formulae}\label{S2.2}
Substituting (\ref{RK}) in (\ref{ER}), and making use of a standard expansion formula in the theory
of Fredholm integral operators (see e.g.~\cite[Eq.~(3.14)]{Si05})  gives
\begin{equation}\label{RK1}
E^{{\rm CU E}}( s ;z) = \det ( \II - z \KK_s^N).
\end{equation}
Now substituting the expansion (\ref{Tq0}) in (\ref{RK1}) and substituting the result in (\ref{G3}) we read off the large $N$ expansion
to second order.

\begin{prop}\label{P1}
We have
\begin{equation}\label{RK2}
p^{{\rm CU E}}(s ;z)  = p_2(s;z) + {1 \over N^2} r_2(s;z) + 
\O\Big ( {1 \over N^4}\Big ),
\end{equation}
where
\begin{equation}\label{RK3}
p_2(s;z)  = {1 \over z^2} {d^2 \over d s^2}  \det ( \II - z \KK_s)
\end{equation}
and
\begin{equation}\label{RK4}
r_2(s;z) =  {1 \over z} {d^2 \over d s^2} \Om{z \KK_s}{\LL_s}.
\end{equation}
In particular
\begin{equation}
p^{{\rm CU E}}(0; s )  = p_{2}(s;z)  \Big |_{z=1} + {1 \over N^2} r_{2} (s;z)  \Big |_{z=1} + 
\O\Big ( {1 \over N^4} \Big ),
\end{equation}
and
\begin{equation}
  p^{{\rm CU E}} (1; s )  =- {d \over d z}  p_{2}(s;z) \Big |_{z=1} - {1 \over N^2} {d \over dz} r_{2} (s;z) \Big |_{z=1}+ \O\Big ( {1 \over N^4} \Big ).
\end{equation}
\end{prop}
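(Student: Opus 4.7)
The plan is to route everything through the Fredholm-determinant representation of the gap-probability generating function and then apply Lemma~\ref{lem:folklore} to the large-$N$ expansion of the kernel $K^N$. Substituting the determinantal correlation identity \eqref{RK} into the generating-function formula \eqref{ER} and recognising the resulting multi-integral series as a Fredholm expansion yields the starting point
\[
E^{\rm CUE}(s;z) = \det(\II - z \KK_s^N),
\]
namely \eqref{RK1}. From here the two remaining tasks are (i) to expand this determinant to order $N^{-2}$ with an $\O(N^{-4})$ remainder, and (ii) to convert that expansion into statements about $p^{\rm CUE}(s;z)$, $p^{\rm CUE}(0;s)$, and $p^{\rm CUE}(1;s)$ via \eqref{G3} and \eqref{G1}.

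For task (i) I would invoke Lemma~\ref{lem:folklore} with parameter $h = 1/N^2$, operator $\KK = z\KK_s$, and perturbation $\LL = z\LL_s$. To check its hypothesis, the kernel expansion \eqref{eq:KNexpansion} must hold uniformly up to the first derivatives on $[0,s]\times[0,s]$, with an $\O(N^{-4})$ remainder rather than $\O(N^{-3})$. Writing $u = \pi(x-y)/N$ one has $K^N(x,y) = K(x,y)\cdot u/\sin u$, and since $u/\sin u$ is an even function of $u$ its Taylor expansion around $0$ contains only even powers; hence after subtracting the $u^2$ term the remainder is genuinely $\O(u^4) = \O(N^{-4})$, and the same holds for any finite number of partial derivatives because the kernels are entire. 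Lemma~\ref{lem:folklore} then gives
\[
\det(\II - z\KK_s^N) = \det(\II - z\KK_s) + \frac{1}{N^2}\,\Om{z\KK_s}{z\LL_s} + \O(N^{-4}),
\]
and linearity of $\Om{\KK}{\cdot}$ in its second slot lets me pull out a factor $z$.

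Task (ii) is then mechanical. Substituting into \eqref{G3} and differentiating twice in $s$ reproduces \eqref{RK3} and \eqref{RK4}, the differentiation passing through the remainder since the kernels extend smoothly in $s$ to a neighbourhood of the interval. For the specialisations, \eqref{G1} realises $p^{\rm CUE}(s;z)$ as a polynomial in $(1-z)$ of degree at most $N$, so $p^{\rm CUE}(s;z)|_{z=1}$ isolates the $n=0$ term, giving $p^{\rm CUE}(0;s)$, while $-\frac{d}{dz}p^{\rm CUE}(s;z)|_{z=1}$ isolates the $n=1$ term, giving $p^{\rm CUE}(1;s)$; applying the same two operations to the expansion \eqref{RK2} yields the two displayed identities.

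The only subtle point — the main obstacle worth flagging — is the spectral hypothesis $1 \notin \mathrm{Spec}(z\KK_s)$ of Lemma~\ref{lem:folklore}. Since the sine-kernel operator on a bounded interval has spectrum strictly inside $(0,1)$, this is automatic for all real $z \in [0,1]$ and in particular for the evaluation at $z=1$. The $z$-derivative is then handled either by first establishing the expansion for $z$ in a real neighbourhood where the hypothesis is satisfied, or by appealing to the analytic-continuation remark in the footnote of Lemma~\ref{lem:folklore}, since both $E^{\rm CUE}(s;z)$ and the right-hand side of its claimed expansion are entire in $z$.
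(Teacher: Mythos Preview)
Your proposal is correct and follows essentially the same route as the paper: derive \eqref{RK1} from \eqref{RK} and \eqref{ER}, expand the determinant via Lemma~\ref{lem:folklore} (equivalently \eqref{Tq0} with the extra factor $z$), and substitute into \eqref{G3}. You simply supply more detail than the paper does, in particular the evenness argument for the $\O(N^{-4})$ remainder, the mechanism behind the $z=1$ specialisations from \eqref{G1}, and the verification of the spectral hypothesis; none of this departs from the paper's approach.
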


\subsection{Thinning}\label{S2.3}
The operation of thinning applied to a point process refers to the procedure of independently deleting each point in a sample
with probability $(1 - \xi)$, $0 < \xi \le 1$. With the $k$-point correlation function before thinning being given by $\rho_{(k)}(x_1,\dots,x_k)$,
the $k$-point correlation function after thinning is simply $\xi^k \rho_{(k)}(x_1,\dots,x_k)$,
due to the independence.
 It thus follows from (\ref{ER}), that the generating function for the
gap probability in the presence of thinning, $E_\xi^{{\rm C}\beta{\rm E}}(s;z)$ say, is related to the generating function for the
gap probability without thinning by
\begin{equation}\label{Th}
E_\xi^{{\rm C}\beta{\rm E}}(s;z) = E^{{\rm C}\beta {\rm E}}(s;\xi z).
\end{equation}
Specialising further to the CUE we can substitute (\ref{RK1}) in the RHS of this expression to conclude
\begin{equation}\label{Th1}
E_\xi^{{\rm CUE}}(s;z) =  \det(\II - z \xi \KK_s^N).
\end{equation}
The analogues of (\ref{G3}) and (\ref{G4}) for the spacing probabilities are
\begin{equation}\label{Th2}
p^{{\rm C}\beta {\rm E}}_\xi(s;z) \!= \!{1 \over \xi z^2} {d^2 \over d s^2} E^{{\rm C}\beta {\rm E}}(s;\xi z), \: \: \:
p^{{\rm C}\beta {\rm E}}_\xi(n;s) \!= \!{1\over \xi} {d^2 \over d s^2} \sum_{j=0}^n ( n - j + 1)  E_\xi^{{\rm C}\beta {\rm E}}(j;s).
\end{equation}

Using the first of these, together with  (\ref{RK1}) and the expansion (\ref{Tq0}) we can write down the analogue
of Proposition \ref{P1}.

\begin{prop}\label{P1a}
 We have
\begin{equation}\label{RK5}
p^{{\rm CU E}}_\xi(s;z )  = p_{2, \xi}(s;z) + {1 \over N^2} r_{2, \xi} (s;z) + \O\Big ( {1 \over N^4} \Big ),
\end{equation}
where
\begin{equation}\label{RK6}
p_{2, \xi}(s;z)  = {1 \over \xi z^2} {d^2 \over d s^2}  \det ( \II - \xi z \KK_s)
\end{equation}
and
\begin{equation}\label{RK7}
r_{2,\xi}(s;z) =  {1 \over z} {d^2 \over d s^2} \Om{\xi z \KK_s}{\LL_s}.
\end{equation}
In particular
\begin{equation}
p^{{\rm CU E}}_\xi (0; s )  = p_{2, \xi}(s;1) + {1 \over N^2} r_{2, \xi} (s;1) + \O\Big ( {1 \over N^4} \Big ),
\end{equation}
and
\begin{equation}
p^{{\rm CU E}}_\xi (1; s)  =- {d \over d z}  p_{2, \xi}(s;z) \Big |_{z=1} - {1 \over N^2} {d \over dz} r_{2, \xi} (s;z) \Big |_{z=1}+ \O \Big ( {1 \over N^4} \Big ).
\end{equation}
\end{prop}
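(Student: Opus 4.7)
The plan is to adapt the proof of Proposition~\ref{P1} to the thinned setting by exploiting the simple observation that thinning only rescales the generating function parameter through $z \mapsto \xi z$, as recorded in (\ref{Th}) and (\ref{Th1}). Concretely, I would begin from the identity
\[
E_\xi^{{\rm CUE}}(s;z) = \det(\II - \xi z\, \KK_s^N)
\]
and apply Lemma~\ref{lem:folklore} to the family $\KK_h = \xi z\, \KK_s^N$ with expansion parameter $h = 1/N^2$. By (\ref{eq:KNexpansion}) and (\ref{L}), this family expands as $\xi z \KK_s + h\, \xi z\, \LL_s + \O(h^2)$, uniformly in $(x,y)$ up to first derivatives, so the lemma yields
\[
\det(\II - \xi z\KK_s^N) = \det(\II - \xi z\KK_s) + \frac{1}{N^2}\Om{\xi z \KK_s}{\xi z\LL_s} + \O(N^{-4}).
\]
By the linearity of $\Om{\KK}{\LL}$ in the $\LL$-slot established in Lemma~\ref{lem:folklore}, the leading correction simplifies to $(\xi z/N^2)\,\Om{\xi z\KK_s}{\LL_s}$.

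Next I would substitute this expansion into the first identity in (\ref{Th2}), namely $p^{{\rm CUE}}_\xi(s;z) = (1/(\xi z^2))\,\partial_s^2 E_\xi^{{\rm CUE}}(s;z)$. The prefactor $1/(\xi z^2)$ absorbs the factor $\xi z$ coming from the linearity step, leaving the overall coefficient $1/(z N^2)$ in front of the correction term; this recovers precisely the expressions (\ref{RK6}) and (\ref{RK7}). Commuting $\partial_s^2$ past the $\O(N^{-4})$ error is legitimate because the kernels $K^N$, $K$ and $L$ extend analytically in $(x,y)$, so the remainder is smooth in $s$ with derivatives that remain $\O(N^{-4})$ on compact sets.

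Finally, the two specialisations follow by evaluating the expansion at $z=1$ and by applying $-d/dz$ at $z=1$, respectively. From the generating-function definition (\ref{G1}) one has $p^{{\rm CUE}}_\xi(s;z)|_{z=1} = p^{{\rm CUE}}_\xi(0;s)$ (only the $n=0$ term survives) and $(-d/dz)\,p^{{\rm CUE}}_\xi(s;z)|_{z=1} = p^{{\rm CUE}}_\xi(1;s)$ (only the $n=1$ term survives). Since the sum in (\ref{G1}) is finite, each coefficient in the $1/N^2$ expansion is polynomial in $z$, so these operations commute with the asymptotic expansion termwise. There is no real obstacle here: the argument is essentially a verbatim transcription of the proof of Proposition~\ref{P1} with $z$ replaced by $\xi z$, and the only subtlety is the linearity of $\Om{\KK}{\cdot}$ in the second slot, which is what allows the spurious factor $\xi z$ to be cleanly extracted from the leading correction before it is cancelled by the prefactor $1/(\xi z^2)$.
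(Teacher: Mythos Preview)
Your proposal is correct and follows essentially the same approach as the paper, which simply notes that one combines (\ref{Th1}), the expansion (\ref{Tq0}) with $z\mapsto\xi z$, and the first identity in (\ref{Th2}) to obtain the analogue of Proposition~\ref{P1}. Your explicit invocation of the linearity of $\Om{\KK}{\cdot}$ to extract the factor $\xi z$ from the correction term is exactly the bookkeeping the paper leaves implicit.
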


Implementation of these formulae according to the method of Section~\ref{sub:numerical} will be carried
out in Section~\ref{sect:riemann}, in the context of a comparison with the corresponding statistics for
Odlyzko's data set of the Riemann zeros.

\begin{figure}
\hspace*{-0.5cm}
\includegraphics[width=0.55\textwidth]{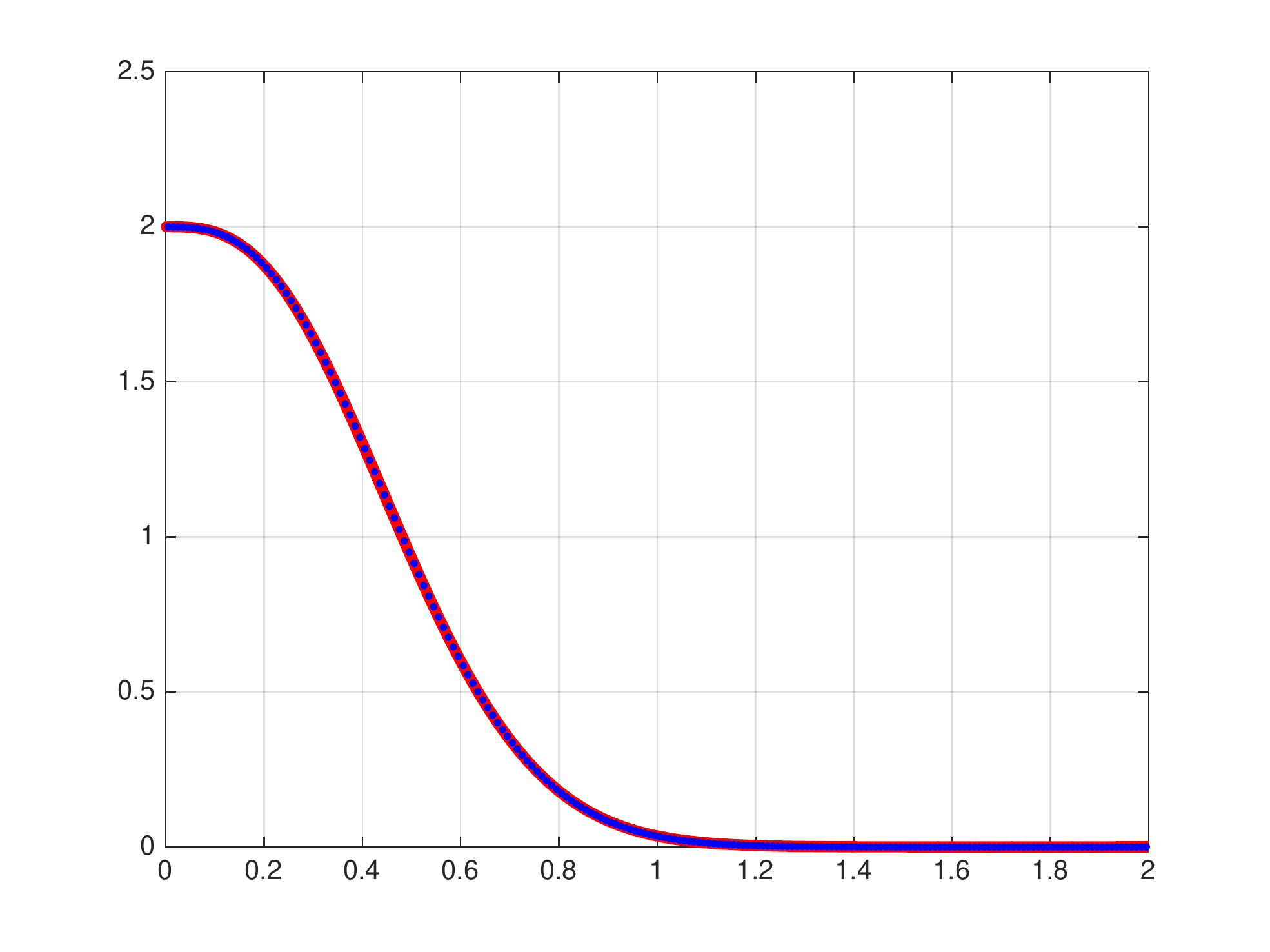}\hspace*{-0.5cm}
\includegraphics[width=0.55\textwidth]{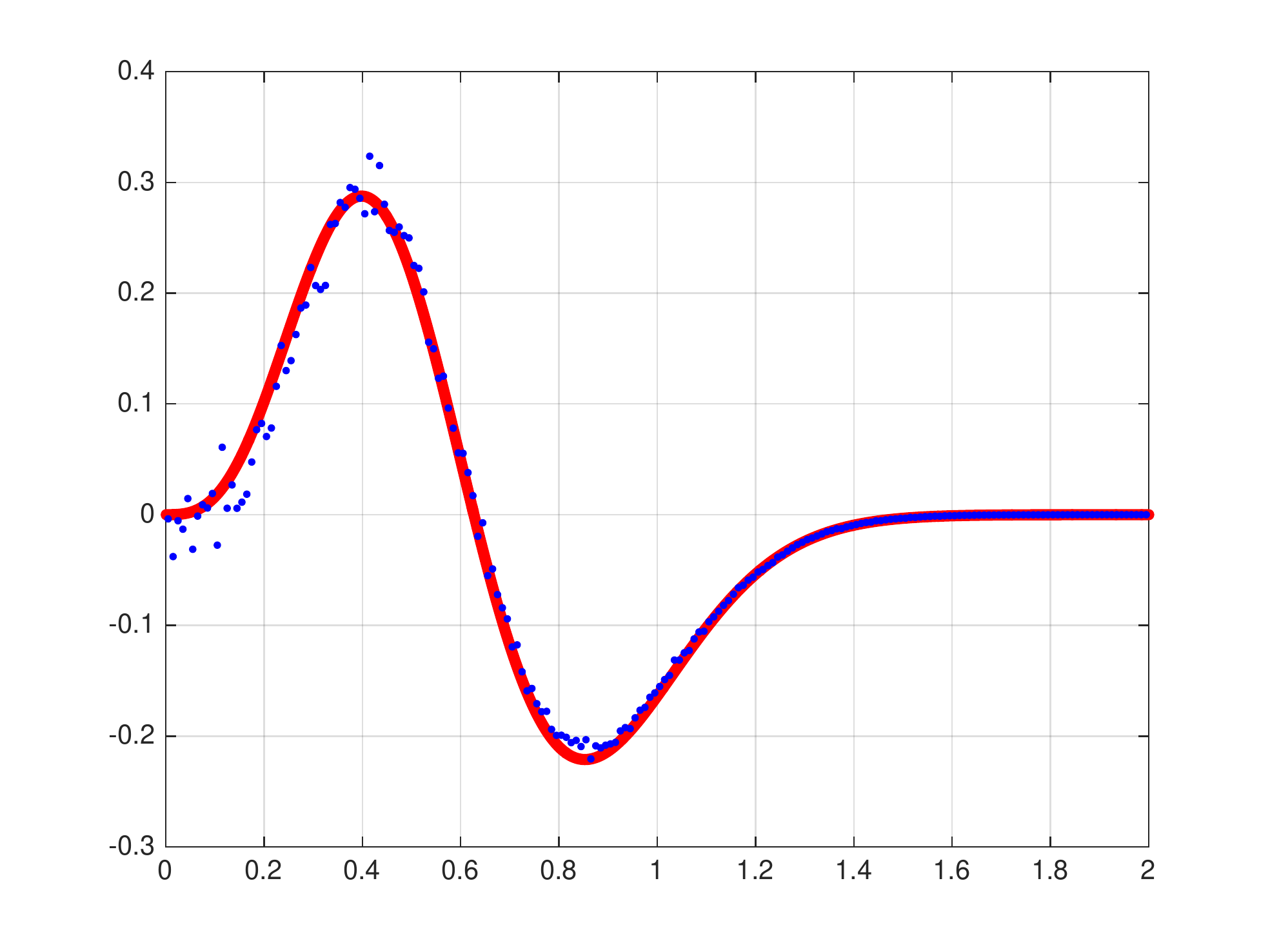}
\caption{Minimum spacing
from a randomly chosen origin: simulation vs. theory for finite size CUE, no thinning ($\xi=1$). Left panel: a histogram of empirical
data from ${\rm CUE}_{N}$ with $N=20$ scaled to unit mean spacing, computed using a bin size of 0.01 and $10^8$ samples from ${\rm CUE}_{N}$, for each of which $1000$ samples of a uniformly distributed random origin were drawn (blue dots); the large $N$ limit $p_{{\rm Origin},\xi}(s)$ (red solid line). Right panel: the simulation data minus $p_{{\rm Origin},\xi}(s)$  scaled
by $N^2$ (blue dots); the leading correction term $r_{{\rm Origin},\xi}(s)$ (red solid line).}\label{fig:5}
\end{figure}

\begin{figure}
\hspace*{-0.5cm}
\includegraphics[width=0.55\textwidth]{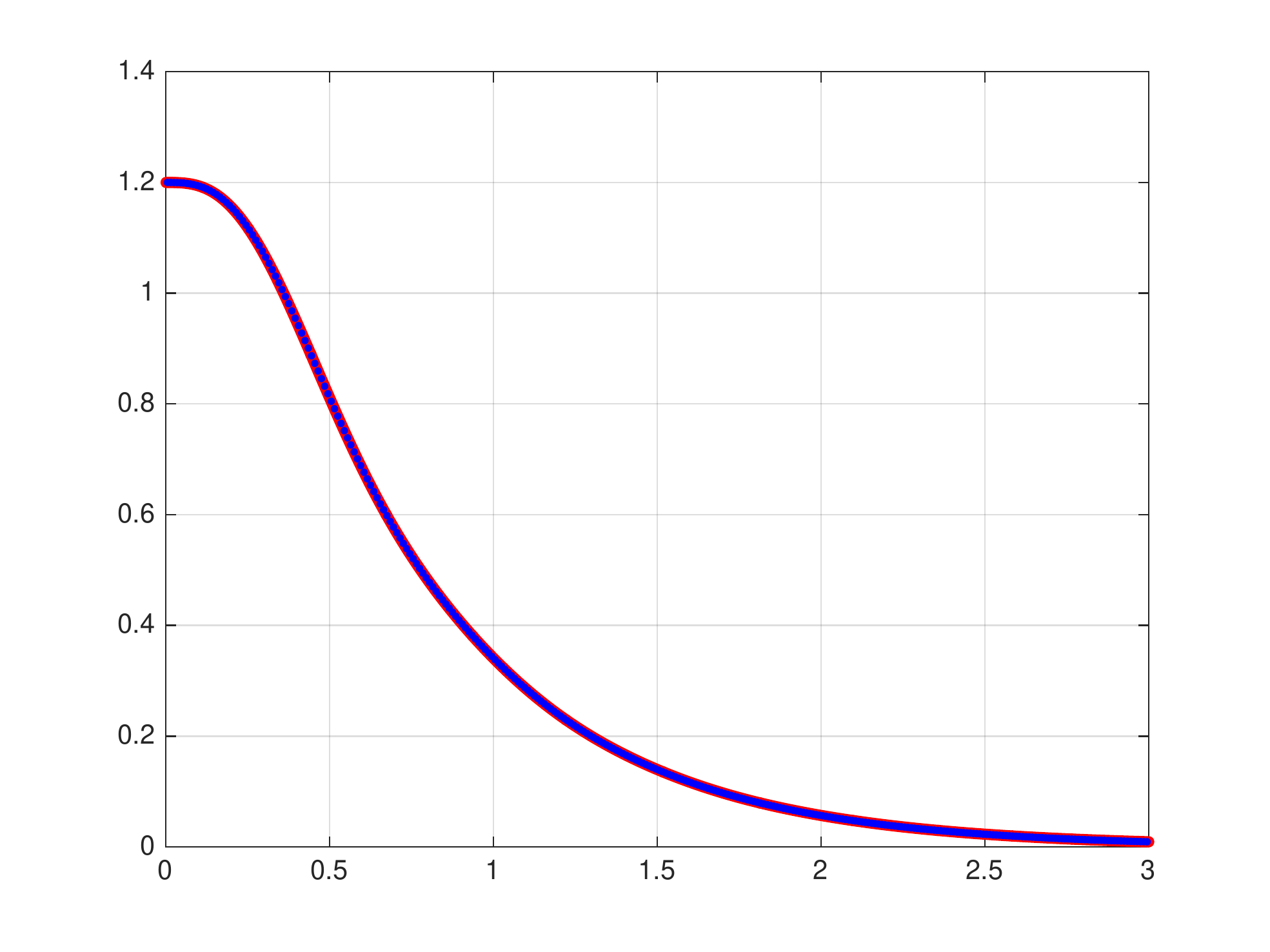}\hspace*{-0.5cm}
\includegraphics[width=0.55\textwidth]{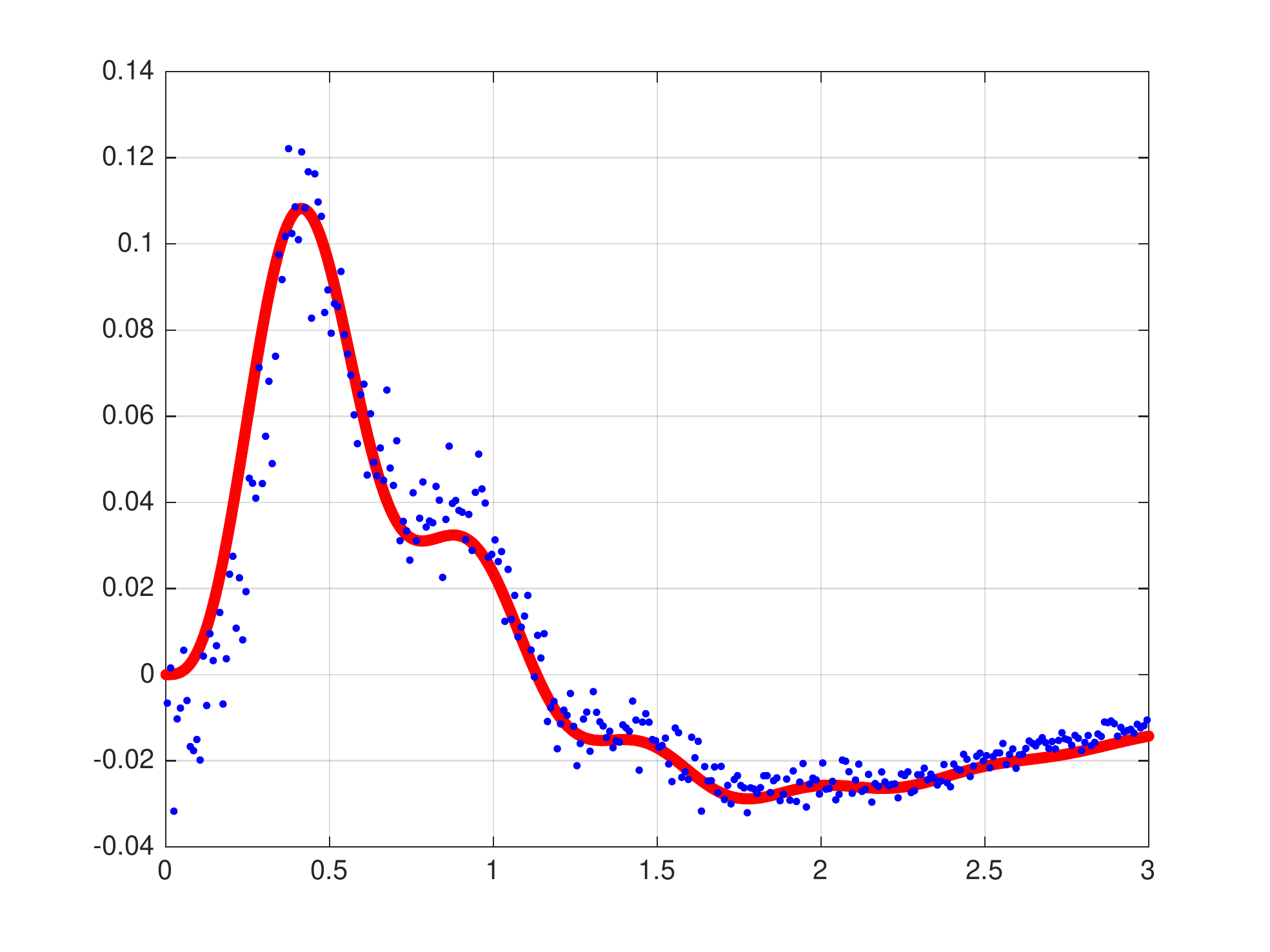}
\caption{As in Figure \ref{fig:5} but with thinning $\xi=0.6$.}\label{fig:6}
\end{figure}

\subsection{Minimum spacing distribution from a randomly chosen origin}\label{sub:origin}
The quantity $E^{{\rm C}\beta{\rm E}}(0;(- s , s ))$ is the probability that there are no eigenvalues in the interval 
$(- s , s )$ of the scaled 
C$\beta$E. Differentiating this quantity with respect to $s$ gives the probability density function
$p_{\rm Origin}^{{\rm C}\beta{\rm E}}(s)$ for the scaled eigen-angle closest to the origin,
$$
p_{\rm Origin}^{{\rm C}\beta{\rm E}}(s) = - {d \over d s} E^{{\rm C}\beta{\rm E}}(0;(-s,s)).
$$
In the case $\beta = 2$ with thinning, it follows from (\ref{Th1}),  as well as the translation invariance of the sine kernel, that
\begin{equation}\label{2.20a}
p_{\rm Origin , \xi}^{{\rm CU E}}(s) = - {d \over d s}  \det ( \II - \xi \KK_{2 s}^N).
\end{equation}
Use of (\ref{2.13d}) gives a characterisation of the leading two terms of the large $N$ expansion of this quantity.

\begin{prop}
We have
\begin{equation}\label{RK10}
p_{\rm Origin , \xi}^{{\rm CUE}}( s )  =  p_{\rm Origin , \xi}(s) + {1 \over N^2} r_{\rm Origin, \xi}(s) + \O\Big ( {1 \over N^4} \Big )
\end{equation}
where in terms of integral operators
\begin{align}\label{2.25x}
 p_{\rm Origin , \xi}(s) & =  - {d \over d s} \det ( \II - \xi \KK_{2 s} ) \nonumber \\
 r_{\rm Origin , \xi}(s) & = -\xi {d \over d s} \Om{\xi\KK_{2s}}{\LL_{2s}}.
\end{align}
Alternatively, in terms of $\sigma^{(0)}$ and $\sigma^{(1)}$ defined in \eqref{2.13d}
\begin{equation}\label{RK11}
 p_{\rm Origin , \xi}(s) =  - {d \over d s}
\exp  \left( \int_0^{2\pi s} {\sigma^{(0)}(t;\xi) \over t} \, dt \right)
\end{equation}
and
\begin{equation}\label{RK12}
 r_{\rm Origin, \xi}(s) = - {d \over d s} 
\Big ( \int_0^{2\pi s} {\sigma^{(1)}(t;\xi) \over t} \, dt  \Big )
\exp  \left( \int_0^{2\pi s} {\sigma^{(0)}(t;\xi) \over t} \, dt \right).
\end{equation}
\end{prop}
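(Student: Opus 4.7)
The starting point is the identity \eqref{2.20a}, which expresses $p_{\rm Origin,\xi}^{\rm CUE}(s)$ as the $s$-derivative of $\det(\II-\xi\KK_{2s}^N)$. The plan is to substitute the asymptotic expansion of this Fredholm determinant, which follows from the kernel expansion \eqref{eq:KNexpansion}, and then commute differentiation in $s$ with the large $N$ expansion.

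Concretely, I would first invoke Lemma~\ref{lem:folklore} with $h=1/N^2$ and interval $J=(0,2s)$, applied to the kernel $\xi K^N$ written as $\xi K + (1/N^2)\xi L + \O(N^{-4})$. The hypotheses are satisfied because $K$ and $L$ from \eqref{3a} and \eqref{L} are entire, so the kernel expansion holds uniformly together with all derivatives on any bounded subinterval. This lifts \eqref{Tq0} to
\begin{equation*}
\det(\II - \xi\KK_{2s}^N) = \det(\II - \xi\KK_{2s}) + \frac{1}{N^2}\,\Omega(\xi\KK_{2s}):\xi\LL_{2s} + \O(N^{-4}),
\end{equation*}
where the factor $\xi$ in front of $\LL_{2s}$ is pulled out by the linearity of $\Omega(\cdot):\cdot$ in its second argument. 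Differentiating in $s$ then yields the first displayed formulas for $p_{\rm Origin,\xi}$ and $r_{\rm Origin,\xi}$ in \eqref{2.25x}.

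The passage from operator form to Painlevé form is simply a matter of substituting \eqref{2.13d}, applied on the interval $(0,2s)$ rather than $(0,s)$, into the expansion of $\det(\II-\xi\KK_{2s}^N)$. Matching the two expansions for the determinant (both being asymptotic expansions in $1/N^2$, hence unique) identifies
\begin{equation*}
\det(\II - \xi\KK_{2s}) = \exp\!\left(\int_0^{2\pi s}\frac{\sigma^{(0)}(t;\xi)}{t}\,dt\right)
\end{equation*}
and
\begin{equation*}
\Omega(\xi\KK_{2s}):\LL_{2s} = \frac{1}{\xi}\left(\int_0^{2\pi s}\frac{\sigma^{(1)}(t;\xi)}{t}\,dt\right)\exp\!\left(\int_0^{2\pi s}\frac{\sigma^{(0)}(t;\xi)}{t}\,dt\right),
\end{equation*}
after which taking $-d/ds$ produces \eqref{RK11} and \eqref{RK12}.

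The one step needing care is justifying that the asymptotic expansion of $\det(\II-\xi\KK_{2s}^N)$ can be differentiated in $s$. I expect this to be the only real obstacle. The cleanest remedy is to observe that both sides of the expansion are real-analytic in $s$ on any compact subinterval, and that the expansion is uniform on such subintervals (since the trace-class norm bound from Lemma~\ref{lem:folklore} depends only on $C^1$-norms of the kernels over the enclosing square, which vary continuously with $s$); then Cauchy's estimates applied to the analytic extension of $\det(\II-\xi\KK_{2s}^N)$ in a complex neighbourhood of $s$ give the same expansion for $d/ds$ with uniform remainder, and the argument in this paragraph may be iterated once more to obtain the $s$-derivative. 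With this uniformity at hand, the remaining computations are bookkeeping.
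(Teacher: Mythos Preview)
Your proposal is correct and follows exactly the route the paper intends: the paper does not spell out a proof but simply says ``Use of (\ref{2.13d}) gives a characterisation of the leading two terms,'' relying on (\ref{2.20a}) together with the expansion (\ref{Tq0})/(\ref{2.13d}) and termwise differentiation in $s$, which is precisely what you do. Your extra paragraph justifying the commutation of $d/ds$ with the asymptotic expansion via analyticity and uniform trace-class bounds is more than the paper provides anywhere (the same step is taken for granted in Propositions~\ref{P1} and~\ref{P1a} as well), so your write-up is in fact more careful than the original.
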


To simulate this quantity in translationally invariant empirical data such as the ${\rm CUE}$ itself,
one can actually draw the origin uniformly and independently within the interval for which the data are defined
(that is, one superimposes a Poisson point process defining the origin).
To prepare for corresponding computations within a large set of Riemann zeros this was done, as a proof of concept, for CUE matrices in Figs.~\ref{fig:5} and Figs.~\ref{fig:6}.

\subsection{Nearest neighbour spacing distribution}\label{sub:nn}
A variant of the spacing distribution between consecutive eigenvalues is the spacing distribution between
nearest neighbour eigenvalues \cite{FO96}. For this, at each eigenvalue one measures the smallest of the
spacings to the eigenvalue immediately to the left, and the spacing immediately to the right. In a theoretical ensemble
formulation, the system is to be conditioned so that there is an eigenvalue at the origin. Consider in particular the
CUE with eigen-angles scaled to have unit spacing.
 With $\rho_{(n), \xi}^{{\rm CUE}, \theta = 0}$ denoting the $n$-point correlation function for the
conditioned system in the presence of thinning,
we have
\begin{multline}\label{2.25}
 \rho_{(n), \xi}^{{\rm CUE}, \theta = 0}( \theta_1 , \dots, 
\theta_n )  = \xi^n 
\rho_{(n+1)}^{{\rm CUE}}(\theta_1,\dots, \theta_n ,0)  \\
 = \xi^n \det \left( {\sin \pi (\theta_j - \theta_k) \over N \sin ( \pi (\theta_j - \theta_k)/N)} -
{\sin \pi \theta_j \over N \sin \pi \theta_j/N}{\sin \pi \theta_k \over N \sin \pi \theta_k/N}   \right)_{j,k=1,\dots,n},
\end{multline}
where the second line follows from (\ref{RK}) and (\ref{3}), together with elementary row operations applied to
the determinant.

Let the entry of the determinant in (\ref{2.25}) be denoted by $\tilde{K}^N(\theta_j,\theta_k)$, and denote the corresponding integral operator
supported on $(-s,s)$ by $\tilde{\KK}^N_{(-s,s)}$. Analogous to~(\ref{2.20a}), the scaled nearest neighbour spacing
in the presence of thinning is then
\begin{equation}\label{2.21a}
p_{\rm nn, \xi}^{{\rm CU E}}( s ) = - {d \over d s}  \det \Big ( \II - \xi \tilde{\KK}_{(-s,s)}^N \Big ).
\end{equation}
By observing
\[
\tilde{K}^N(x,y) = K^N(x,y) - K^N(x,0)K^N(0,y)
\]
we can read off from (\ref{2.21a}), upon expanding the kernel to order $1/N^2$, integral operator formulae for the leading
two terms analogous to (\ref{2.25x}).

\begin{prop}
In terms of the notation \eqref{3a} and \eqref{L}, let
\begin{align}\label{KL}
{K}^{\rm nn}(x,y) & = K(x,y) - K(x,0)K(0,y) \nonumber \\  {L}^{\rm nn}(x,y) & = L(x,y) - L(x,0)K(0,y) - K(x,0)L(0,y).
\end{align}
Denote by ${\KK}_{(-s,s)}^{\rm nn}$ and ${\LL}_{(-s,s)}^{\rm nn}$ the integral operators on $(-s, s)$
with kernels ${K}^{\rm nn}(x,y) $ and $ {L}^{\rm nn}(x,y)$ respectively. We have
\begin{equation}\label{2.21b}
p_{\rm nn, \xi}^{{\rm CU E}}(s) =   p_{\rm nn , \xi}(s) + {1 \over N^2} r_{\rm nn, \xi}(s)+ \O\Big ( {1 \over N^4} \Big ) ,
\end{equation}
where
\begin{align}\label{2.25a}
 p_{\rm nn , \xi}(s) & =  - {d \over d s} \det \left( \II - \xi {\KK}_{(-s, s)}^{\rm nn} \right) \nonumber \\
 r_{\rm nn , \xi}(s) & = -\xi {d \over d s} \Om{\xi{\KK}_{(-s, s)}^{\rm nn}}{{\LL}_{(-s, s)}^{\rm nn}}.
\end{align}
\end{prop}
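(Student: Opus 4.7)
The plan is to reduce the claim to an application of Lemma~\ref{lem:folklore} combined with differentiation in $s$, using the algebraic identity $\tilde K^N(x,y) = K^N(x,y) - K^N(x,0)K^N(0,y)$ noted in the excerpt and the already-established expansion \eqref{eq:KNexpansion} of $K^N$. No new analytic input is required beyond what was used to prove the earlier Propositions \ref{P1} and \ref{P1a}.

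First, I would substitute the expansion $K^N(x,y) = K(x,y) + N^{-2}L(x,y) + \O(N^{-4})$ (valid uniformly together with first derivatives on any compact set, by analyticity of $K^N$ in both variables) into each of the three occurrences of $K^N$ in the formula for $\tilde K^N$. Expanding the product $K^N(x,0)K^N(0,y)$ and collecting powers of $1/N^2$ gives
\begin{align*}
\tilde K^N(x,y) &= \bigl[K(x,y) - K(x,0)K(0,y)\bigr] \\
&\quad + \frac{1}{N^2}\bigl[L(x,y) - L(x,0)K(0,y) - K(x,0)L(0,y)\bigr] + \O(N^{-4}) \\
&= K^{\rm nn}(x,y) + \frac{1}{N^2}L^{\rm nn}(x,y) + \O(N^{-4}),
\end{align*}
matching \eqref{KL} exactly, with the remainder uniform in $(x,y)$ on any compact subset of $\RR^2$ up to first derivatives.

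Second, with this kernel expansion in hand, I would invoke Lemma~\ref{lem:folklore} on the bounded interval $J = (-s,s)$, with perturbation parameter $h = 1/N^2$ and perturbation kernel $\xi L^{\rm nn}$ (the factor $\xi$ is transparent by the linearity of $\Omega$ in its second slot). This lifts the kernel expansion to the operator determinant, giving
\[
\det\bigl(\II - \xi\tilde\KK^N_{(-s,s)}\bigr) = \det\bigl(\II - \xi\KK^{\rm nn}_{(-s,s)}\bigr) + \frac{\xi}{N^2}\,\Om{\xi\KK^{\rm nn}_{(-s,s)}}{\LL^{\rm nn}_{(-s,s)}} + \O(N^{-4}).
\]
Finally, I would apply $-d/ds$ to both sides and use \eqref{2.21a} to conclude \eqref{2.21b}–\eqref{2.25a}.

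The main subtlety is not in the algebra but in justifying that the $\O(N^{-4})$ remainder survives the $s$-derivative. This follows because the Taylor expansion of $1/\sin(\pi u/N)$ in powers of $1/N^2$ is analytic in both arguments and uniform on compacts, so Cauchy estimates upgrade uniform control of the remainder to uniform control of any fixed number of derivatives; this is precisely the standard device that underlies the analogous steps in the proofs of Propositions \ref{P1} and \ref{P1a}, and no new ingredient is needed here. Once this uniform-in-$s$ control is granted, the differentiation in $s$ commutes with the asymptotic expansion and the proposition follows.
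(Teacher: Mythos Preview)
Your proposal is correct and follows essentially the same approach as the paper: the paper simply remarks that from the identity $\tilde K^N(x,y)=K^N(x,y)-K^N(x,0)K^N(0,y)$ one can ``read off from \eqref{2.21a}, upon expanding the kernel to order $1/N^2$, integral operator formulae for the leading two terms analogous to \eqref{2.25x}.'' Your write-up spells this out in full, including the application of Lemma~\ref{lem:folklore} and the justification for differentiating the remainder in $s$, which the paper leaves implicit.
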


\begin{remark} In \cite{FO96} a Painlev\'e transcendent evaluation of $p_{\rm nn , \xi}(s)$ has been given; see also \cite[\S 9.5.2]{Fo10}.
\end{remark}

\begin{figure}
\hspace*{-0.5cm}
\includegraphics[width=0.55\textwidth]{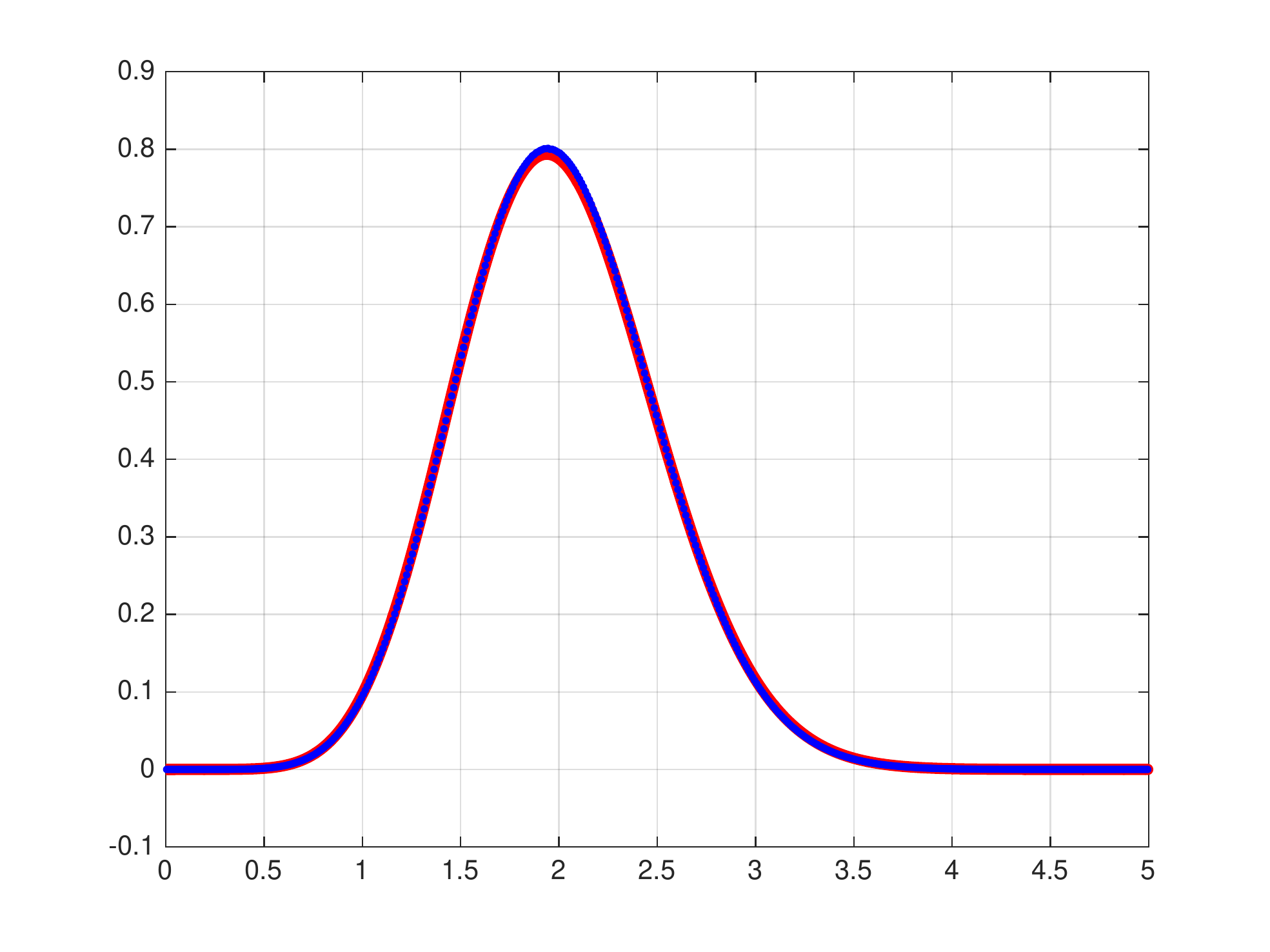}\hspace*{-0.5cm}
\includegraphics[width=0.55\textwidth]{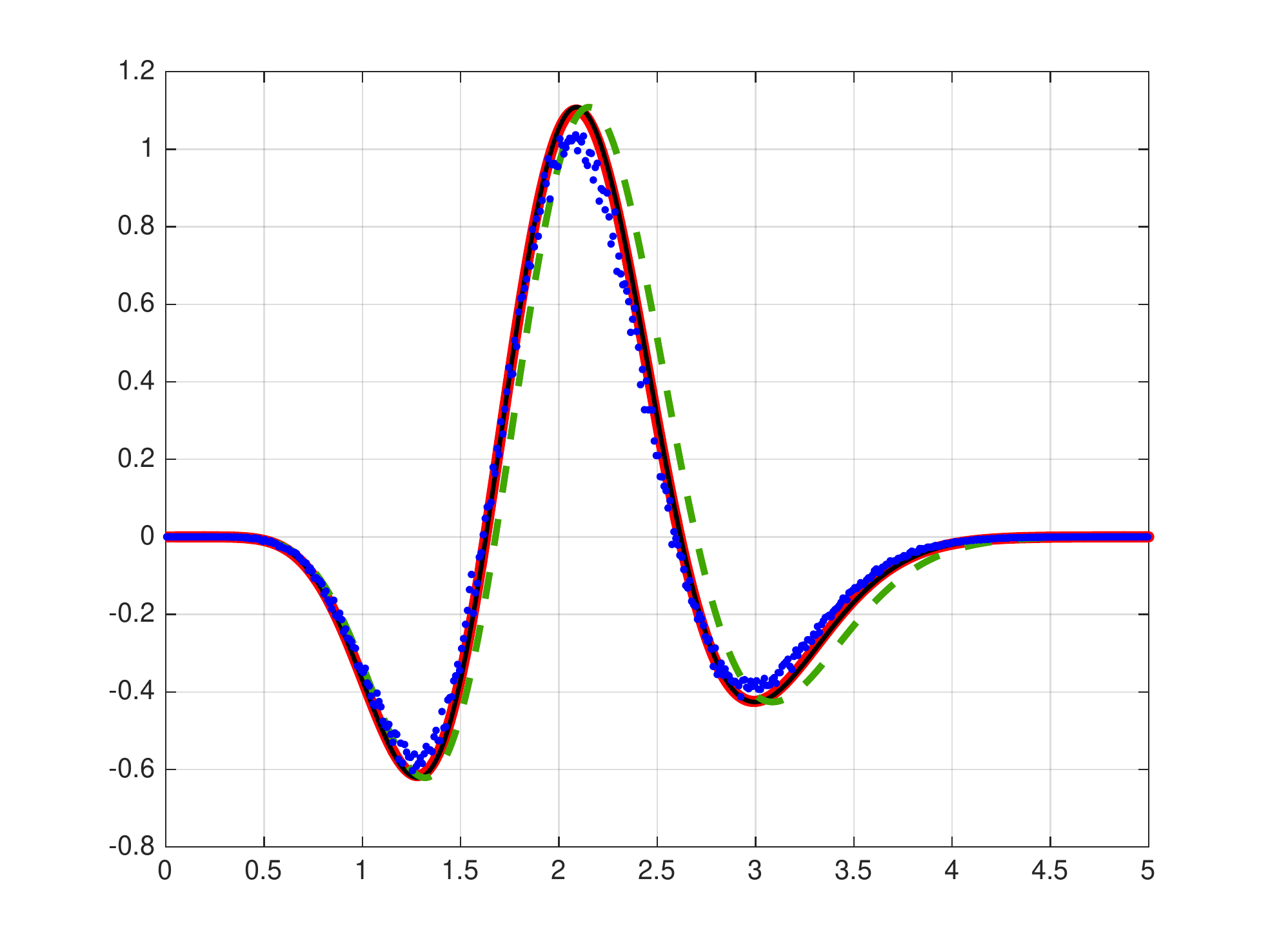}
\caption{$1$-st next neighbour spacing: Riemann zeros data vs. random-matrix based prediction with effective dimension $N=\log(E/2\pi)/\sqrt{12\Lambda}\approx11.3$, no thinning ($\xi=1$). Left panel: a histogram of the Odlyzko data set with bin size $0.01$ (blue dots); 
the large $N$ limit $p_{2,\xi}(1;s)$ (red solid line). Right panel: the Riemann zero data minus $p_{2,\xi}(1;s)$  scaled
by $N^2$ (blue dots); the leading correction term $r_{2,\xi}(1;s)$ with interior rescaling (red solid line), exterior rescaling (thin black line) and no rescaling (dashed green line).}\label{Fm1}\label{fig:1}
\end{figure}

\section{Application to the statistics of Riemann zeros}\label{sect:riemann}

\subsection{The two point correlation function}
Starting with the 1973 seminal work of Montgomery there have been significant developments on a deep (conjectural) connection between
the statistics of the fluctuation properties of the zeros of the Riemann zeta function on the 
critical axis $\Re s = 1/2$ and those of the eigen-angles of ${\rm CUE}_N$, supported by large-scale numerical
calculations based on the extensive tables of Riemann zeros provides by Odlyzko, see, e.g., \cite{Od87,FO96,BK99, KS00a, Od01, BBLM06, FM15} and the literature cited therein. Answering a question of Odlyzko \cite{Od01}, who had posed the challenge 
of understanding the structure in the numerical difference graph between the Riemann zeta spacing distribution for a set 
of zeros of large height and the (conjectured) asymptotics, finite size effects of this statistic were studied in~\cite{BBLM06}. This
study gave a precise quantitative association of Riemann zeros $1/2 + iE$ at height $E$ to a (formal) size $N$ of the corresponding ${\rm CUE}_N$. Using the Hardy--Littlewood conjecture of the distribution of prime pairs, Bogomonly and Keating \cite{BK96b} had earlier given an analytic expression for the pair correlation of the Riemann zeros. The authors of \cite{BBLM06} expanded this for large height $E$, with the local density normalized to unity, to obtain
\begin{equation}\label{eq:RRZ}
R^{\text{RZ}}_2(s) = 1 - \frac{\sin^2(\pi s)}{\pi^2 s^2} - \frac{\Lambda\sin^2 (\pi s)}{\pi^2} \bar\rho^{-2} - \frac{Q s \sin(2\pi s)}{2\pi^2}\bar\rho^{-3}+ \O(\bar\rho^{-4}),
\end{equation}
where
\[
\bar\rho = \frac{\log(E/2\pi)}{2\pi}
\]
denotes the smooth asymptotic density of zeros at $E$ and $\Lambda$, $Q$ are the following constants (with $\gamma_n$ the Stieltjes constants and $\sum_p$ summation over the primes):\footnote{The highly accurate numerical values were obtained by applying the Euler--Maclaurin formula to the Stieltjes constants,
\[
\gamma_n = \lim_{m\to\infty} \left(\sum_{k=1}^m\frac{\log^n k}{k} - \frac{\log^{n+1} m}{m+1}\right),
\]
 and the method of \cite[p.~2]{Ma09} to the sums over the primes.}
\begin{align*}
\Lambda &= \gamma_0^2 + 2 \gamma_1 + \sum_{p}\frac{\log^2 p}{(p-1)^2} = 1.57315\,10713\,24955\,\ldots,\\
Q &= \sum_{p}\frac{\log^3 p}{(p-1)^2} = 2.31584\,63849\,58803\,\ldots.
\end{align*}
On the other hand, the pair correlation function 
\[
R^{\text{CUE}_N}_2(s) = \rho_{(2)}(0,s) = \begin{vmatrix}
K^N(0,0) & K^N(0,s) \\
K^N(s,0) & K^N(s,s) 
\end{vmatrix}
\]
of ${\rm CUE}_N$, normalized to mean spacing unity, expands by \eqref{eq:KNexpansion} as
\begin{equation}\label{eq:RCUE}
R^{\text{CUE}}_2(s) = \rho_{(2)}(x,x+s) = 1 - \frac{\sin^2(\pi s)}{\pi^2 s^2} - \frac{\sin^2 (\pi s)}{3} N^{-2} + \O(N^{-4}).
\end{equation}
By matching the first correction terms of \eqref{eq:RRZ}  and \eqref{eq:RCUE} the authors of \cite{BBLM06} got as effective dimension $N$ of
a ${\rm CUE}_N$ at height $E$
\[
N = \frac{\pi}{\sqrt{3\Lambda}} \bar\rho = \frac{1}{\sqrt{12\Lambda}} \log\left(\frac{E}{2\pi}\right).
\]
This way one has the (conjectural) approximation
\[
R_2^{\text{RZ}}(s) = R_2^{\text{CUE}}(s) + \O(N^{-3}).
\]

\begin{figure}
\hspace*{-0.5cm}
\includegraphics[width=0.55\textwidth]{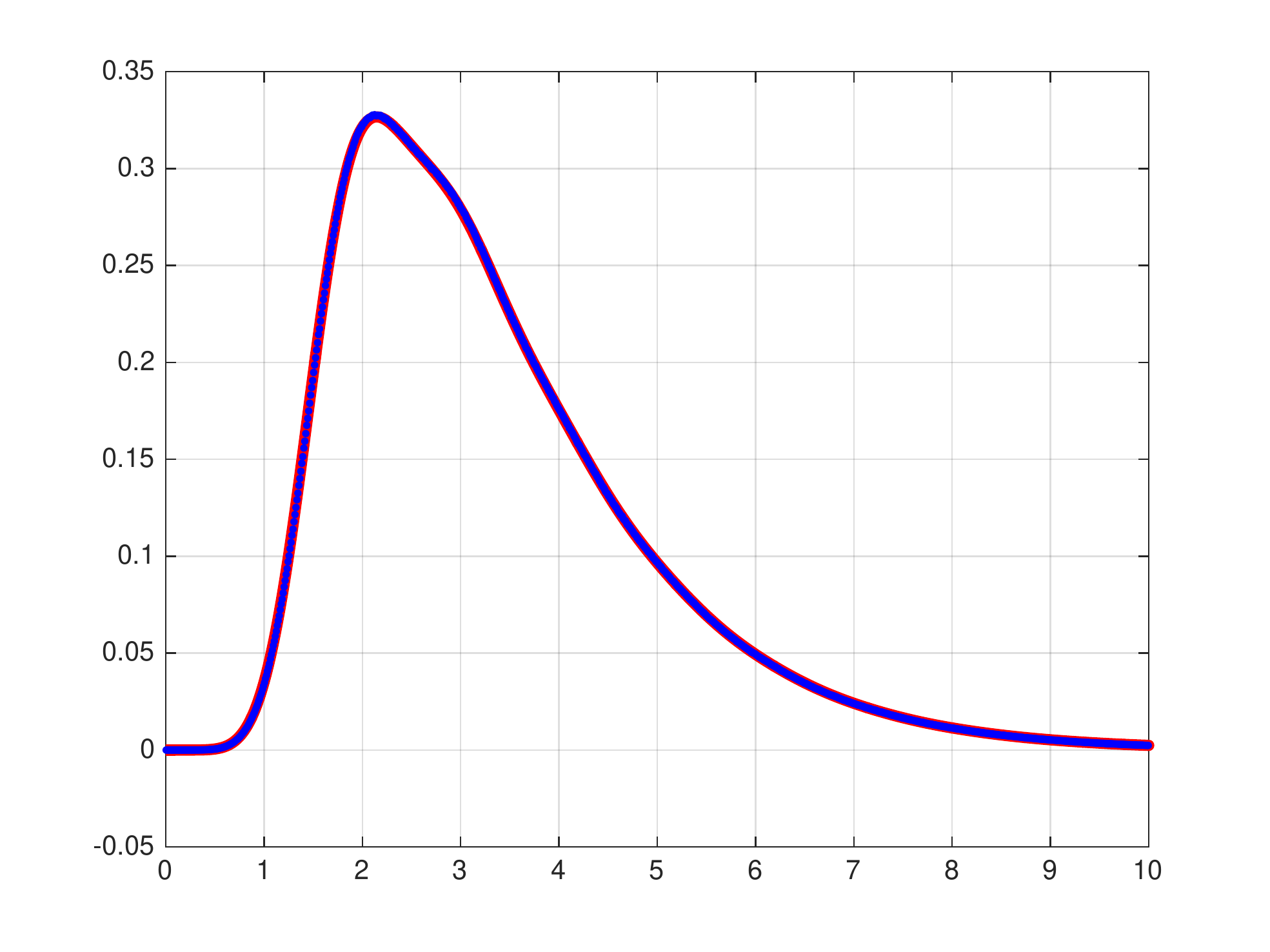}\hspace*{-0.5cm}
\includegraphics[width=0.55\textwidth]{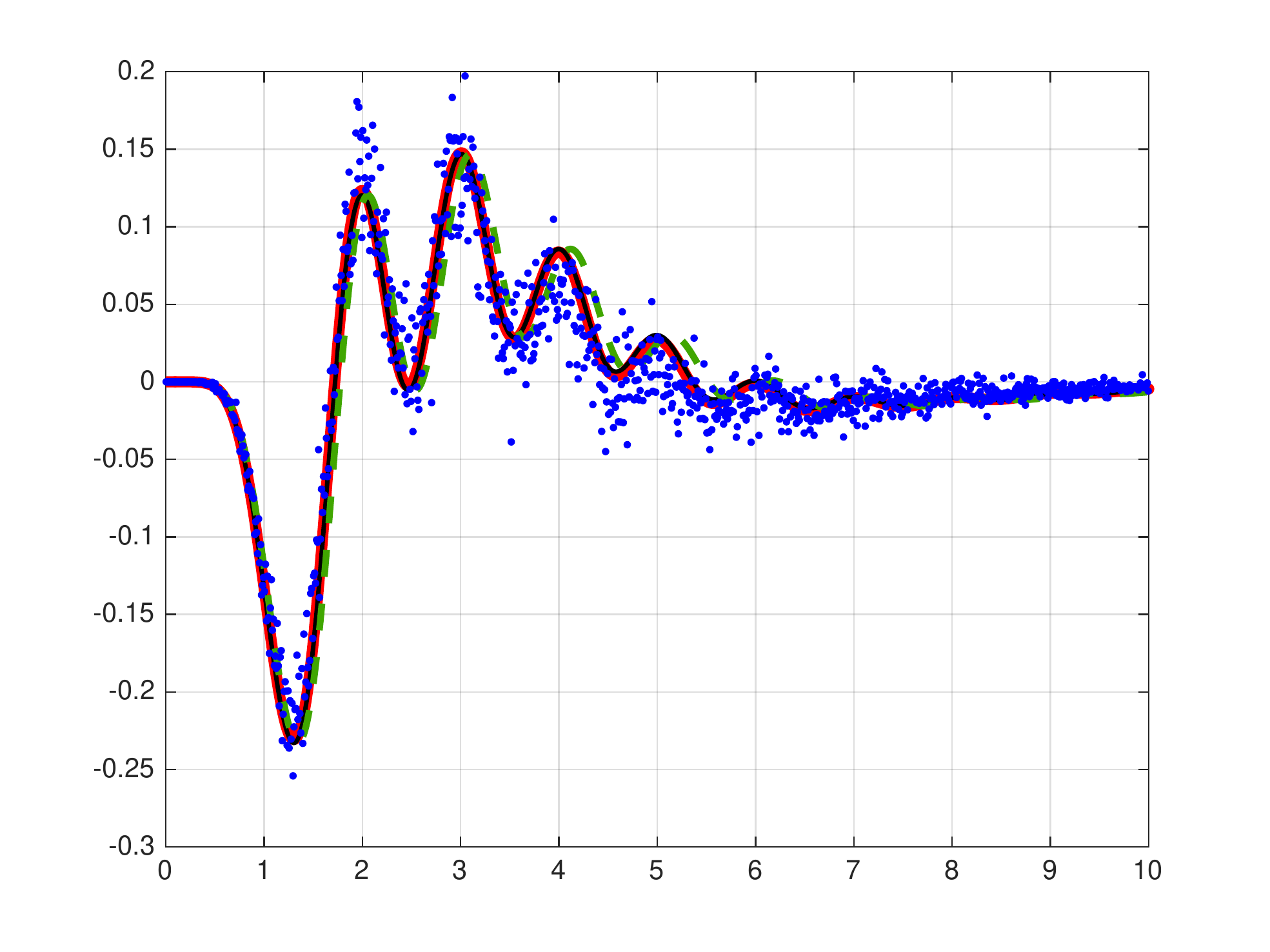}
\caption{As in Figure \ref{fig:1} but with thinning $\xi=0.6$.}\label{fig:2}
\end{figure}

\subsection{Exterior rescaling of the leading correction terms}
As noted in \cite{BBLM06} a rescaling of the $s$-variable in the leading correction term of
the $\bar\rho$-expansion of the two-point functions absorbs the $\O(\bar\rho^{-3})$ term, respectively the $\O(N^{-3})$ term:
\begin{align*}
R^{\text{RZ}}_2(s) &= 1 - \frac{\sin^2(\pi s)}{\pi^2 s^2} - \frac{\Lambda\sin^2 (\pi \alpha s)}{\pi^2} \bar\rho^{-2} + \O(\bar\rho^{-4})\\
&=1 - \frac{\sin^2(\pi s)}{\pi^2 s^2} - \frac{\sin^2 (\pi \alpha s)}{3} N^{-2} + \O(N^{-4}),
\end{align*}
where 
\[
\alpha = 1 + \frac{Q}{2\pi\Lambda\bar\rho} = 1+ \frac{Q}{\Lambda \log(E/2\pi)} = 1 + \frac{Q}{\Lambda\sqrt{12\Lambda}}N^{-1}.
\]
That is, one gets an $O(N^{-4})$ approximation of $R^{\text{RZ}}_2(s)$ by expanding $R^{\text{CUE}}_2(s)$ into powers
of $N^{-2}$ and rescaling the $s$-variable of the leading correction term by $\alpha$. 

It was suggested in \cite{BBLM06} to apply the same procedure, that is, $\alpha$-rescaling the $s$-variable of the leading correction term of a ${\rm CUE}_N$ based expansion, to improve the fit even for more general statistics, such as the 
spacing distribution originally considered by Odlyzko. In \cite{FM15} it was successfully applied to improve
the fit of the CUE leading correction term to the spacing distribution in the presence of thinning.

\begin{figure}[tbp]
\hspace*{-0.5cm}
\includegraphics[width=0.55\textwidth]{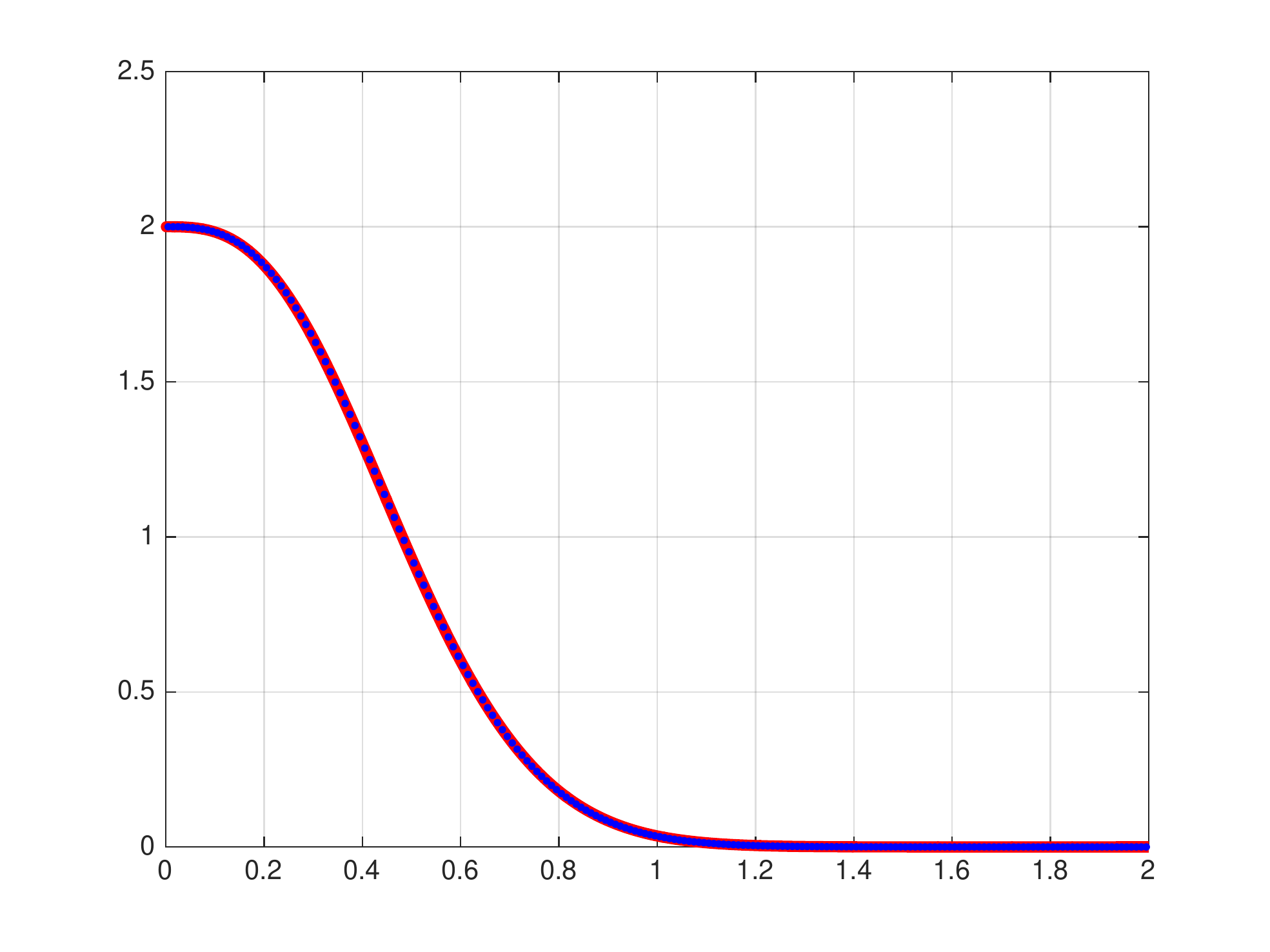}\hspace*{-0.5cm}
\includegraphics[width=0.55\textwidth]{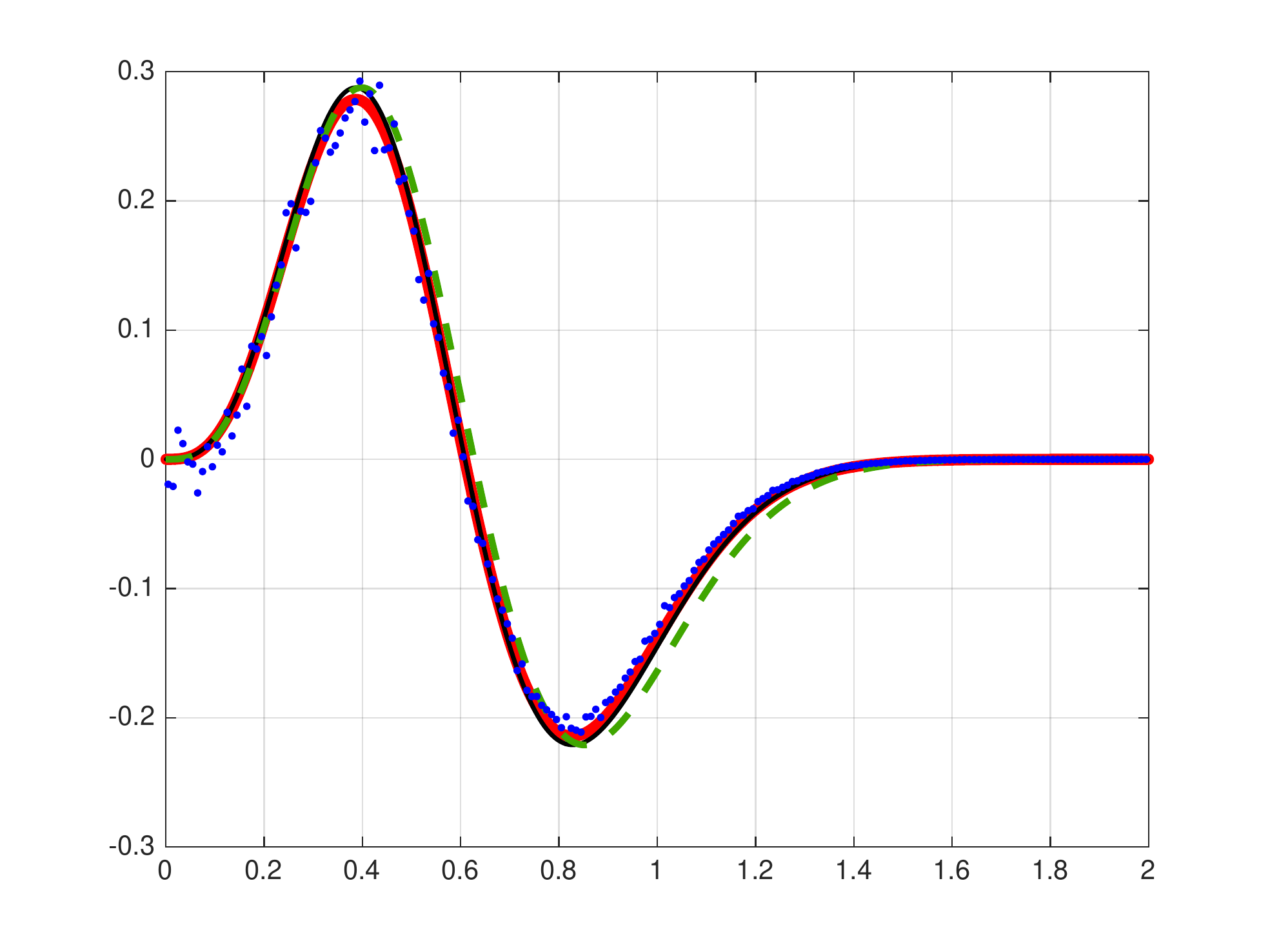}
\caption{Minimum spacing
from a randomly chosen origin: Riemann zeros data vs. random-matrix based prediction with effective dimension $N=\log(E/2\pi)/\sqrt{12\Lambda}\approx11.3$, no thinning ($\xi=1$). Left panel: a histogram of the Odlyzko data set with bin size $0.01$ (blue dots), where
the approximately $10^9$ zeros were broken into 1024 sets of nearly equal size, for each of which $10^7$ samples of a uniformly distributed random origin were drawn; 
the large $N$ limit $p_{{\rm Origin},\xi}(s)$ (red solid line). Right panel: the Riemann zero  data minus $p_{{\rm Origin},\xi}(s)$  scaled
by $N^2$ (blue dots); the leading correction term $r_{{\rm Origin},\xi}(s)$ with interior rescaling (red solid line), exterior rescaling (thin black line) and no rescaling (dashed green line).}\label{fig:3}
\end{figure}

\begin{figure}[tbp]
\hspace*{-0.5cm}
\includegraphics[width=0.55\textwidth]{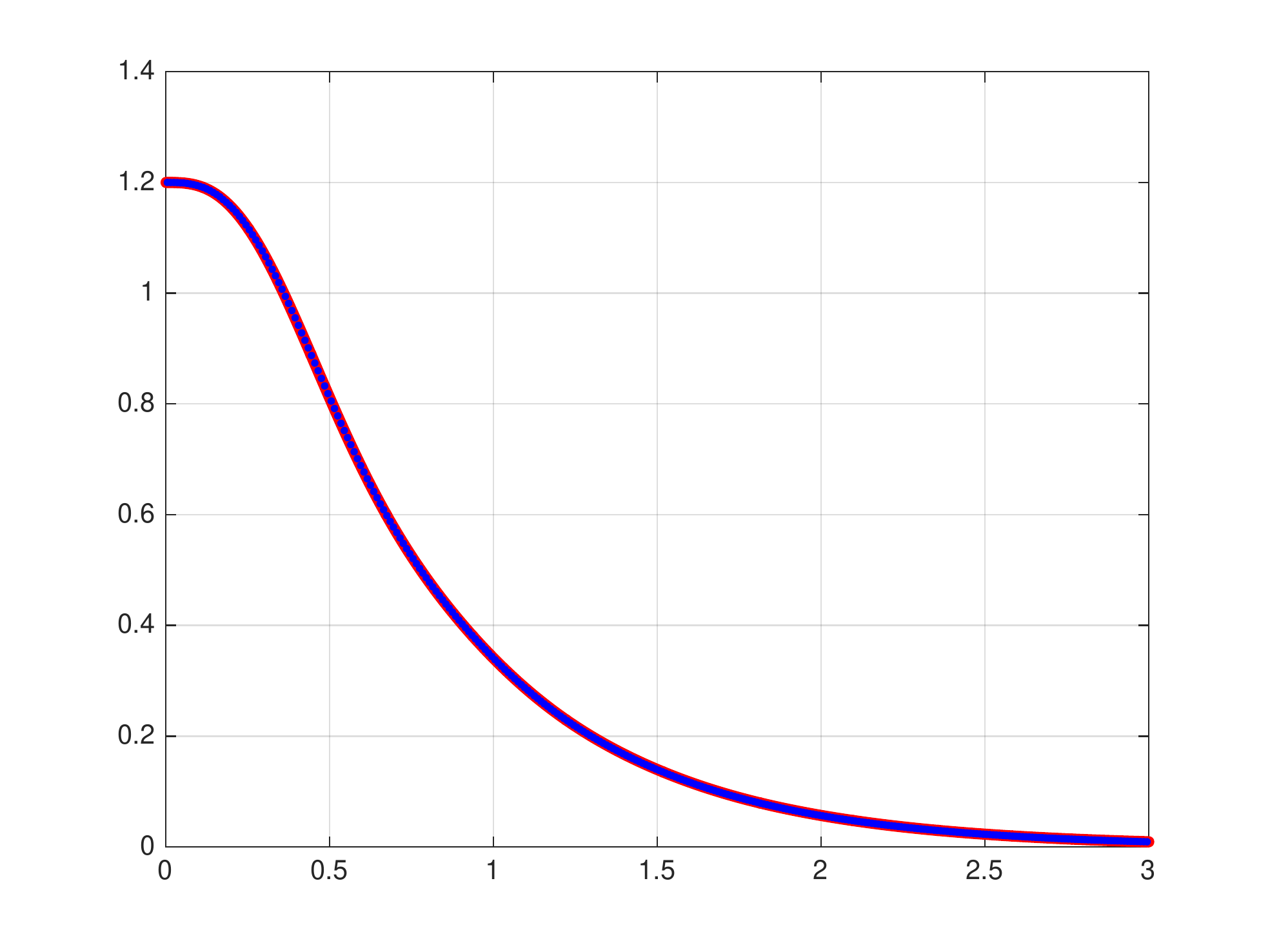}\hspace*{-0.5cm}
\includegraphics[width=0.55\textwidth]{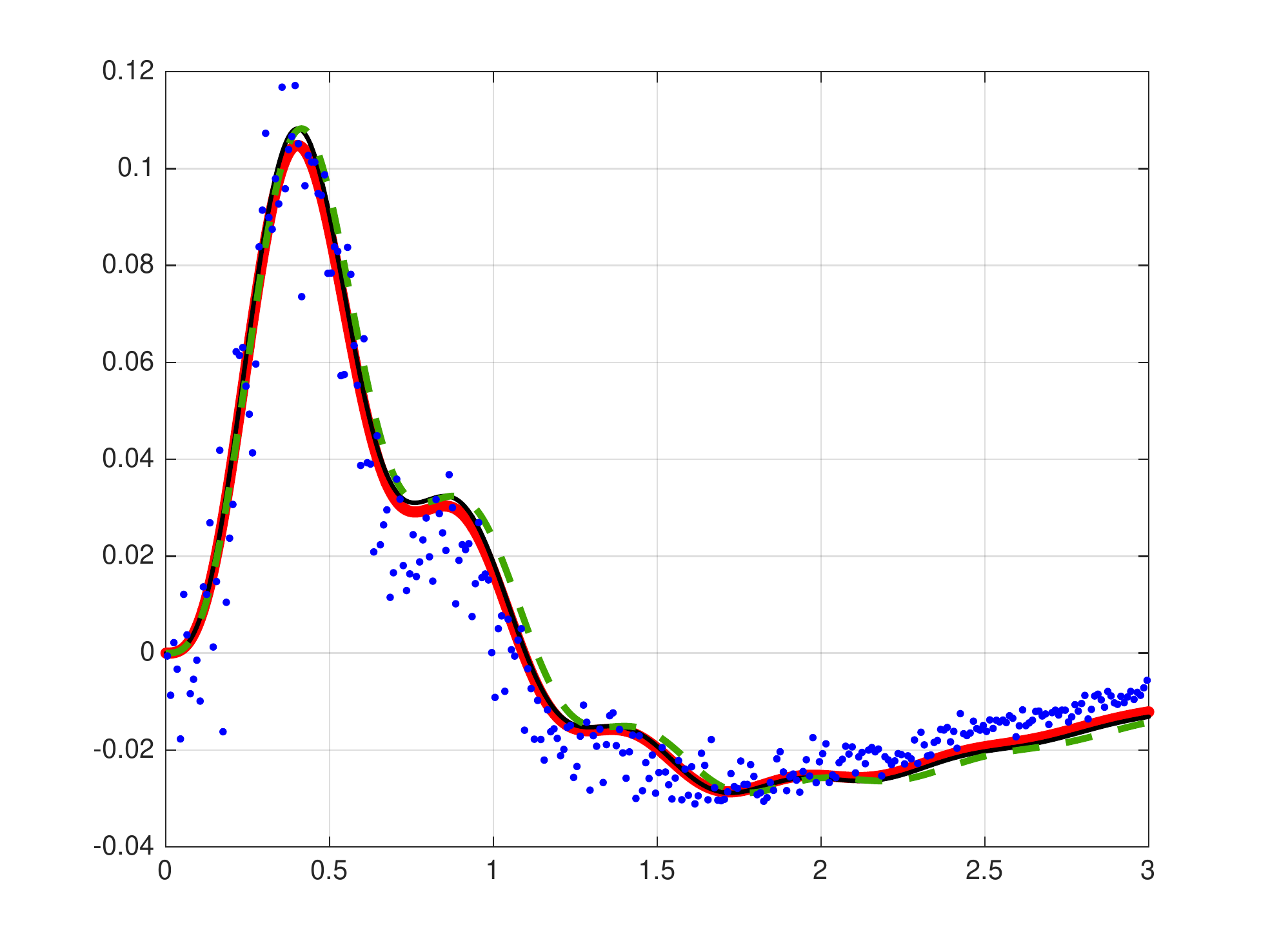}
\caption{As in Figure \ref{fig:3} but with thinning $\xi=0.6$.}\label{fig:4}
\end{figure}

\subsection{Interior rescaling of the leading correction terms}

The numerical data, physical arguments and mathematical
conjectures surrounding the fluctuations of the Riemann
zeros indicate that their statistics are asymptotically well
approximated, to more than just leading order, by some correlation kernel
$K_{\rm RZ}^N$. Consequently, to justify the mechanism of the exterior rescaling
introduced in the last paragraph, \cite[p.~10748]{BBLM06} suggested to pull-back the absorption of the $\O(\bar\rho^{-3})$ term from the two-point function into the expansion of such an assumed kernel.  
In the present paper we will use such an expansion to improve systematically the ${\rm CUE}_N$ fit of the fluctuation properties
of the Riemann zeros statistics from $\O(N^{-3})$ to $\O(N^{-4})$. 

By modifying the expansion~\eqref{eq:KNexpansion} of the CUE correlation
kernel $K^N$ to include an $\O(N^{-3})$ term,
\begin{equation}\label{eq:kernelexpansion}
K_{\rm RZ}^N(x,y) =  K(x,y) +  L(x,y) N^{-2} + M(x,y) N^{-3} + \O(N^{-4}),
\end{equation}
the task is thus to identify the unknown kernel $M(x,y)$ from the $\O(N^{-3})$ term in \eqref{eq:RRZ}. By translation invariance and symmetry
of the kernel $K_{\rm RZ}^N$ and by the normalization $K_{\rm RZ}^N(x,x)=1$ to mean spacing unity, we get the form
\[
M(x,y) = \mu(x-y)
\]
of a convolution kernel with a {\em symmetric} function $\mu$ satisfying $\mu(0)=0$. A brief calculation results in the
corresponding two-point function
\[
 R_2^{\rm RZ}(s) =  1 - \frac{\sin^2(\pi s)}{\pi^2 s^2} - \frac{\sin^2 (\pi s)}{3} N^{-2} -\frac{2\mu(s)\sin(\pi s)}{\pi s} N^{-3} + \O(N^{-4}).
 \]
Matching with \eqref{eq:RRZ} gives, cf. the first equality in \cite[Eq.~(28)]{BBLM06},
\[
\mu(s) = \eta \frac{\pi^2 s^2 \cos(\pi s)}{6},\qquad \eta = \frac{Q}{\Lambda\sqrt{3\Lambda}}.
\]
In the appendix of \cite{BBLM06} it is shown that this expansion of the kernel allows one to consistently reproduce a conjectural expansion of the three-point function (which was obtained from a calculation published much later in \cite{BK13}). This is further evidence that the kernel expansion \eqref{eq:kernelexpansion} induces, quite generally, the $O(N^{-3})$ correction terms in the determinantal point process fluctuation statistics.

Now, the correction kernel $M$ can be absorbed easily by an {\em interior} rescaling of the leading correction kernel $L$, namely 
by introducing
\begin{equation}\label{eq:Lbar}
 L_{\rm RZ}(x,y) = \frac{\pi(x-y)\sin(\pi\bar\alpha(x-y))}{6},
\end{equation}
which expands as $L_{\rm RZ}(x,y) = L(x,y) + M(x,y) N^{-1} + \O(N^{-2})$,
where\footnote{Note that \cite[Eq.~(28)]{BBLM06} contains a miscalculation by claiming that $\bar\alpha$ would be the same as $\alpha$.}
\begin{equation}\label{eq:alphabar}
  \bar \alpha = 1 + \eta N^{-1} = 1 + \frac{Q}{\pi\Lambda\bar\rho} =
1+ \frac{2Q}{\Lambda \log(E/2\pi)} = 2\alpha -1.
\end{equation}
To summarize, we have
\begin{equation}\label{eq:KRZ}
K_{\rm RZ}^N(x,y) =  K(x,y) +  L_{\rm RZ}(x,y) N^{-2} + \O(N^{-4})
\end{equation}
and thus get the following recipe to go from the leading correction term of a ${\rm CUE}_N$ fluctuation statistics to the
corresponding one of the Riemann zeros: simply replace the leading correction kernel $L$ by its interior rescaling $L_{\rm RZ}$.

In our experiments, the most pronounced difference between interior and exterior rescaling  can be observed in the statistics shown in
the right panel of Fig.~\ref{fig:4}. 

\subsection{The Odlyzko data set} The largest data set of Riemann zeros currently provided by Odlyzko, which was announced already in \cite{Od01},
consists of the $1\,041\,719\,075$ consecutive zeros starting with zero number
\[
10^{23} + 985\,531\,550.
\]
The first of them has height
\[
13\,066\,434\,408\,793\,621\,120\,027.39614\,65854\ldots,
\]
while the last one has height
\[
13\,066\,434\,408\,793\,754\,462\,591.63384\,74590\ldots.
\]
Because of the logarithmic dependence on the height, even within that large data set the smooth density $\bar\rho$, the effective dimension $N$ and the scaling parameters $\alpha$, $\bar\alpha$ 
remain essentially {\em constant} to 15 digits precision,
\begin{align*}
\bar\rho &= 7.81235\,22019\,1727\ldots,\\
N &= 11.29759\,09009\,547\ldots,\\
\alpha &= 1.02999\,00807\,6719\ldots,\\
\bar\alpha &= 1.05998\,01615\,3438\ldots.
\end{align*}

From this data set we extracted various spacing statistics,\footnote{The $0$-th next neighbour spacing can be found in \cite[Figs. 1--2]{Od01} (no thinning and no theoretical
prediction of the leading correction term), in \cite[Figs. 1--3]{BBLM06} (no thinning, exterior rescaling only) and
in \cite[Figs. 9--10]{FM15} (exterior rescaling only). Since in this case there is no difference
visible between interior and exterior rescaling we refrained from showing the results once more.} with ($\xi=0.6$) and without ($\xi=1$) thinning, and compared them up to the leading correction
with the theoretical prediction in terms of operator determinants $\det(\II-\KK)$ and their corrections $\Om{\KK}{\LL}$
as obtained in Section~\ref{sect:CUE} for the CUE. In Figs.~\ref{fig:1}--\ref{fig:8} the Riemann zero data are shown as blue dots, the leading order
term as a red solid line, and the correction terms 
\begin{itemize}
\item with interior rescaling $L\to L_{\rm RZ}$ as a red solid line,
\item with exterior rescaling $s\to \alpha s$ as a thin black line,
\item without rescaling as a green dashed line.
\end{itemize}

In particular, the following statistics of the Riemann zeros are shown to be in excellent agreement
up to the leading correction term with (interior) rescaling:

\medskip

{\begin{center}
\begin{tabular}{ccccc}\hline
spacing & density & CUE case$\phantom{\Big|}$ & no thinning & thinning $\xi=0.6$\\\hline 
$1$-st next neighbour & $p^{\rm RZ}_{\xi}(1;s)$ & Section~\ref{S2.3}$\phantom{\Big|}$ & Fig.~\ref{fig:1} & Fig.~\ref{fig:2}\\
random origin & $p^{\rm RZ}_{{\rm Origin},\xi}(s)$ & Section~\ref{sub:origin}$\phantom{\Big|}$ & Fig.~\ref{fig:3} & Fig.~\ref{fig:4}\\
nearest neighbour & $p^{\rm RZ}_{{\rm nn},\xi}(s)$ & Section~\ref{sub:nn}$\phantom{\Big|}$ & Fig.~\ref{fig:7} & Fig.~\ref{fig:8}\\\hline
\end{tabular}
\end{center}}

\medskip
Here we denote by $p^{\rm RZ}_{\;\ldots}$ those probability densities for the Riemann zeros that are defined analogously to 
the CUE case $p^{\rm CUE}_{\;\;\ldots}$. 
\begin{figure}
\hspace*{-0.5cm}
\includegraphics[width=0.55\textwidth]{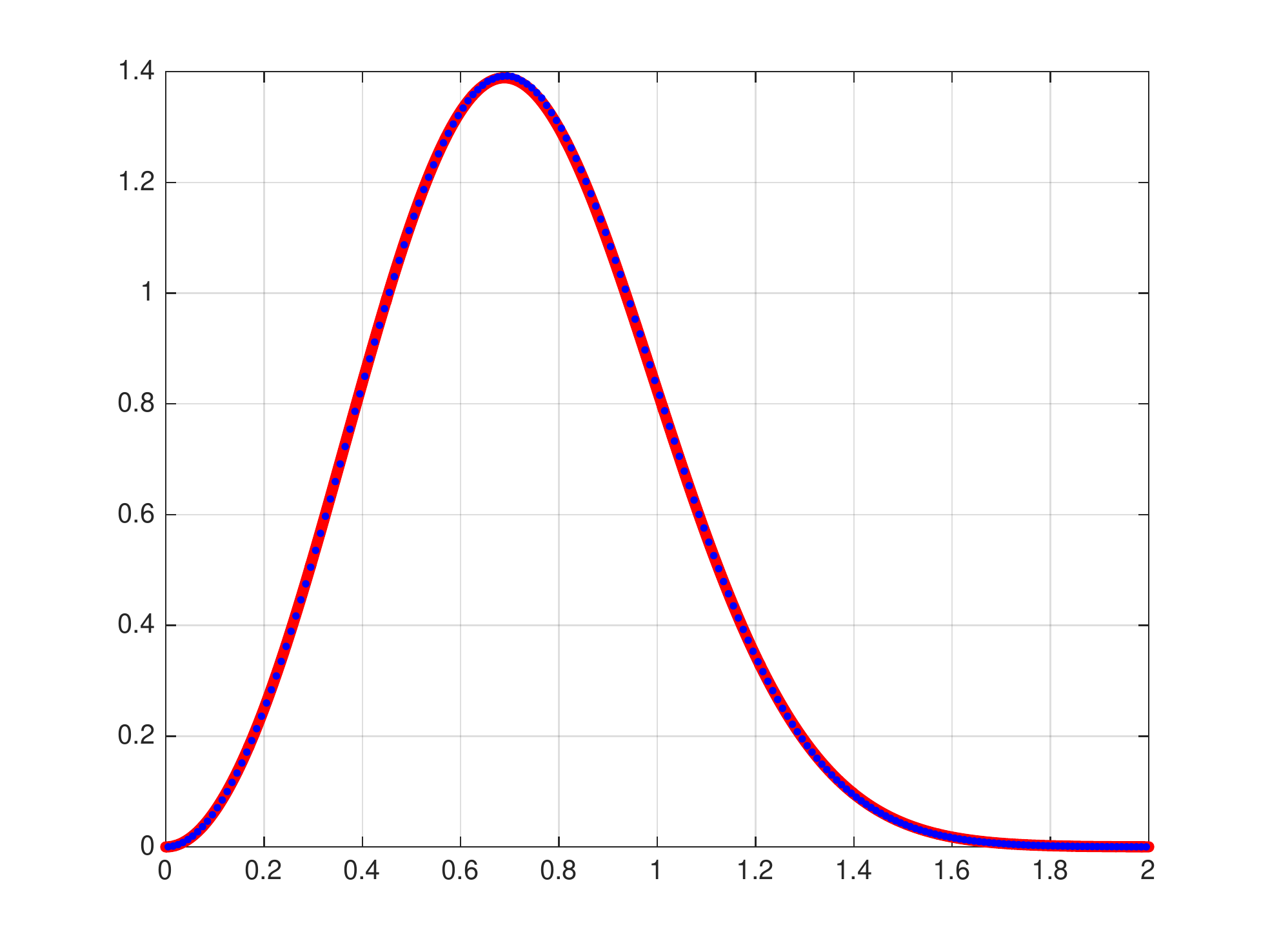}\hspace*{-0.5cm}
\includegraphics[width=0.55\textwidth]{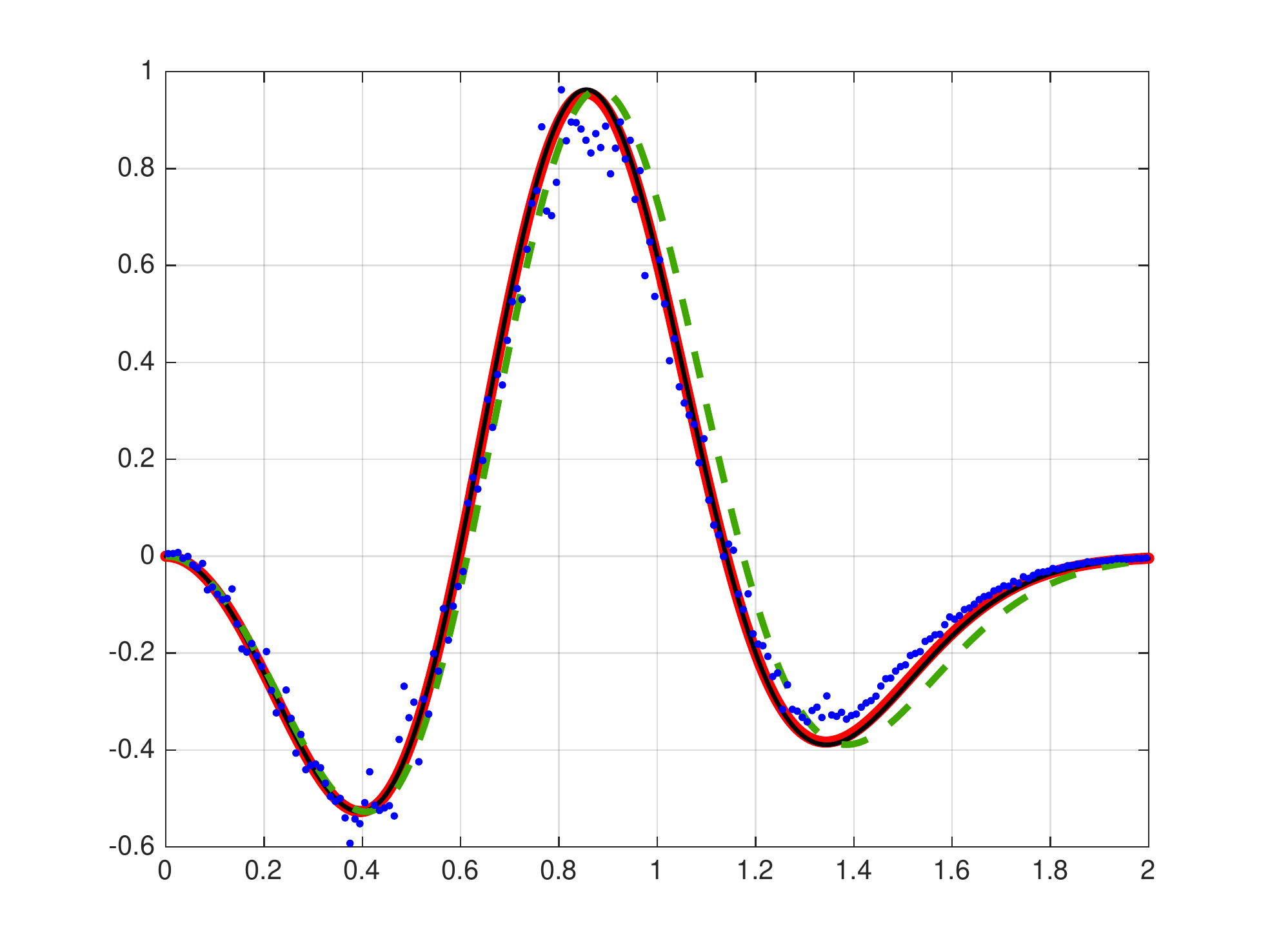}
\caption{Nearest neighbour spacing: Riemann zeros data vs. random-matrix based prediction with effective dimension $N=\log(E/2\pi)/\sqrt{12\Lambda}\approx11.3$, no thinning ($\xi=1$). Left panel: a histogram of the Odlyzko data set with bin size $0.01$ (blue dots); 
the large $N$ limit $p_{{\rm nn},\xi}(s)$ (red solid line). Right panel: the Riemann zero data minus $p_{{\rm nn},\xi}(s)$  scaled
by $N^2$ (blue dots); the leading correction term $r_{{\rm nn},\xi}(s)$ with interior rescaling (red solid line), exterior rescaling (thin black line) and no rescaling (dashed green line).}\label{fig:7}
\end{figure}

\begin{figure}
\hspace*{-0.5cm}
\includegraphics[width=0.55\textwidth]{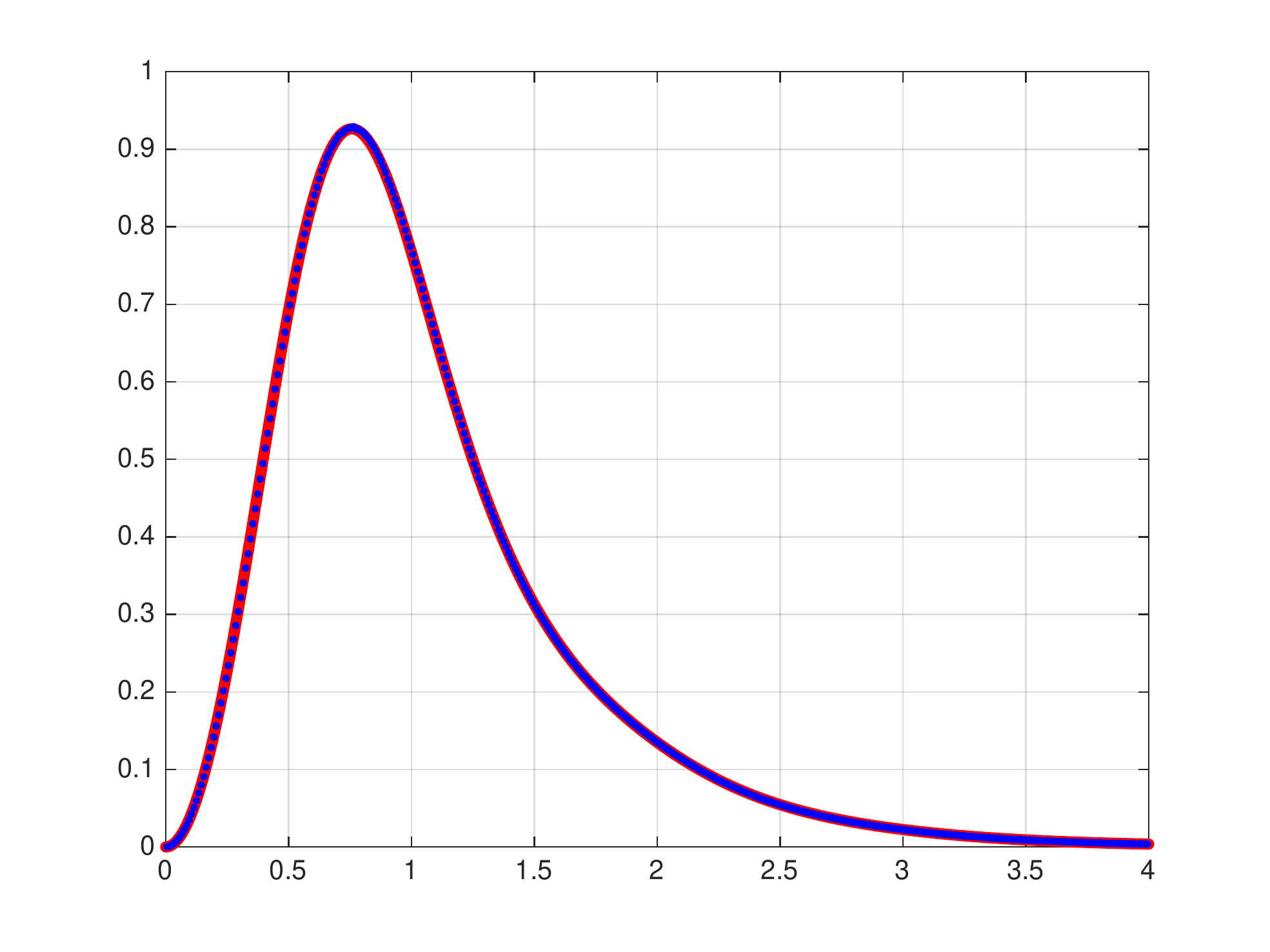}\hspace*{-0.5cm}
\includegraphics[width=0.55\textwidth]{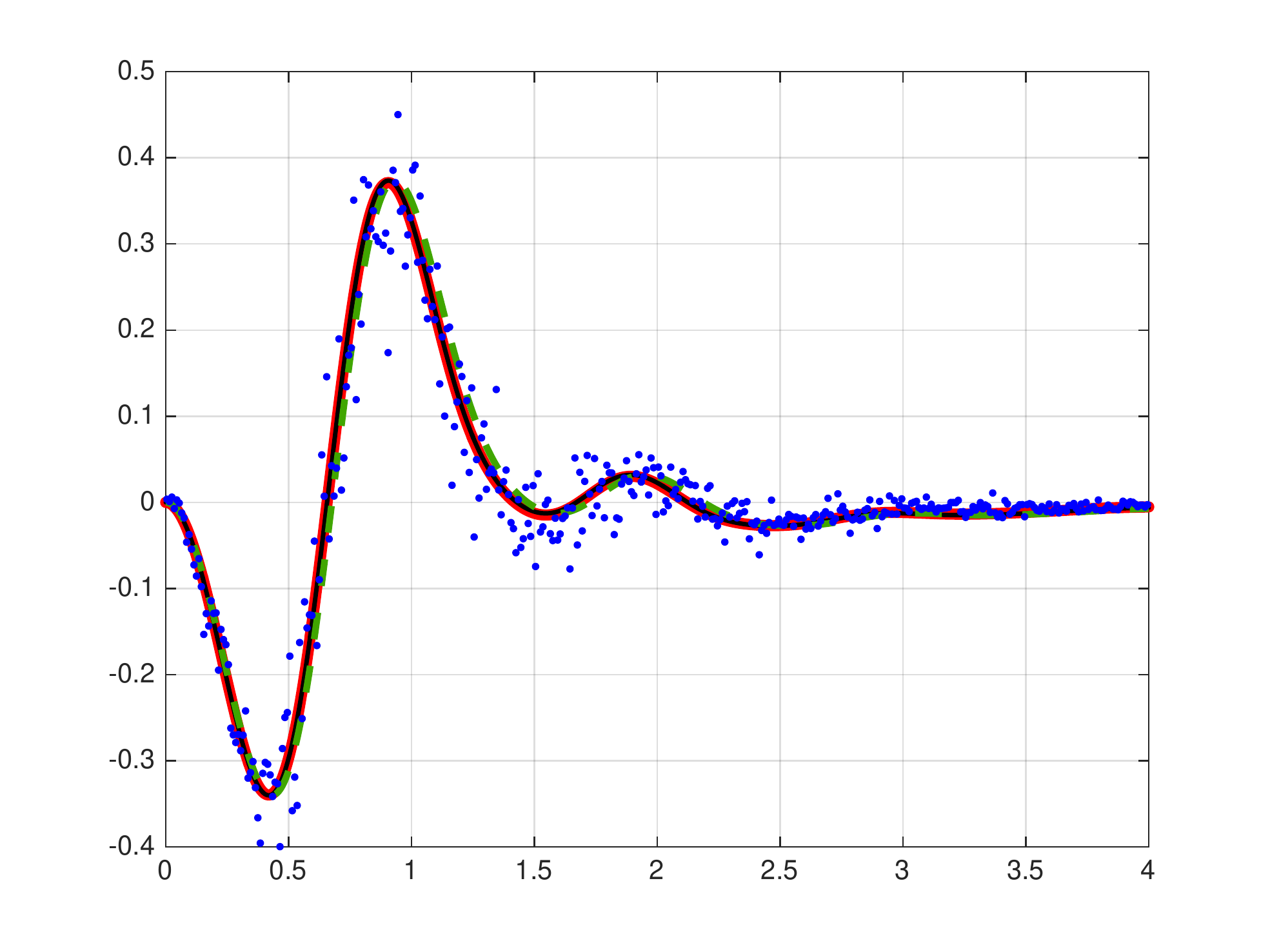}
\caption{As in Figure \ref{fig:7} but with thinning $\xi=0.6$.}\label{fig:8}
\end{figure}

\section{Spacing distributions for the COE and CSE}\label{sect:COECSE}
\subsection{Generating functions}
As already mentioned, the fact that the eigenvalues of COE matrices form a Pfaffian point process tells us that
the counterpart to the formula (\ref{RK}) involves a Pfaffian of a $2 \times 2$ anti-symmetric kernel function.
Substituting this in (\ref{ER}), the Pfaffian analogue of the identity implying (\ref{RK1}) (see e.g.~\cite[Eq.~(6.32)]{Fo10})
allows for a Pfaffian analogue in the case of the COE to be given. However, the resulting formula is not of the same
practical utility as (\ref{RK1}) since the $2 \times 2$ kernel function is not analytic, which in turn means that the
numerical quadrature methods of Section~\ref{sub:numerical} have poor convergence properties.

Fortunately, there is a second option. This presents itself due to fundamental inter-relations between
gap probabilities for the COE and CUE. These also involve the gap probabilities for Haar distributed
orthogonal matrices from the classical groups O${}^+(N)$ and O${}^-(N)$.
In reference to the latter, eigenvalues on the real axis and thus corresponding to $\theta = 0$ or $\pi$ are
to be disregarded, and only the eigenvalues with angles in $(0,\pi)$ are to be considered (the remaining eigenvalues
occur at the negative of these angles, i.e.~the complex conjugates). The first inter-relation of interest is
between the generating functions for the gap probabilities in the CUE, and in
Haar distributed real orthogonal matrices \cite{BR01a,Fo06}, \cite[Eq.~(8.127)]{Fo10}
\begin{equation}\label{W1}
{\mathcal E}^{\rm CUE}((-\theta,\theta);z) =
{\mathcal  E}^{\O^-(2\lfloor (N+1)/2 \rfloor + 1)}((0,\theta);z)
{\mathcal E}^{\O^+(2\lfloor N/2 \rfloor + 1)}((0,\theta);z).
\end{equation}
With ME denoting a particular matrix ensemble,
we have used the symbol ${\mathcal E}^{\rm ME}$ instead of $E^{\rm ME}$
to indicate that in this expression no scaling of the eigen-angles has been
imposed; this convention will be followed below. In fact we only use the scaled quantity for ensembles ${\rm ME} = {\rm C}\beta{\rm E}_N$ which are translationally invariant with uniform density $N/2 \pi$. The unscaled and scaled quantities are then related by
$$
{\mathcal E}^{ {\rm C} \beta{\rm E}_N} ((-2 \pi s/N, 2 \pi s/N);z) = E^{ {\rm C}\beta{\rm E}_N}(2s;z).
$$

To present the next inter-relation, introduce the generating functions
\begin{equation}\label{W2}
E^{{\rm C O E}, \pm}((-\theta,\theta);z)  := \sum_{n=0}^N (1 - z)^n \Big (
E^{{\rm C O E}}(2n,(0,\theta)) +  E^{{\rm C O E}}(2n \pm 1,(0,\theta)) \Big ).
\end{equation}
With COE$\, \cup \,$COE denoting the point process of $2N$ eigenvalues on the circle
that results by superimposing the eigenvalue sequences of two independent
COE matrices, a result conjectured by Dyson \cite{Dy62a} and proved by Gunson \cite{Gu62}
gives that
\begin{equation}\label{W3}
{\rm alt} \left({\rm COE}\, \cup \,{\rm COE}\right) = {\rm CUE},
\end{equation}
where the operation ``alt" refers to the operation of observing every second eigenvalue only.
As a consequence of this, one has that  \cite{Dy62a}, \cite{Me92}
\begin{equation}\label{W4}
E^{\rm CUE}((-\theta,\theta);z) = E^{{\rm C O E}, -}((-\theta,\theta);z)  E^{{\rm C O E}, +}((-\theta,\theta);z).
\end{equation}

According to (\ref{W1}), it follows from (\ref{W4}) that the gap probabilities for the COE are related to those
for   O${}^+(n)$ and O${}^-(n)$ with $n$ suitably chosen, but this alone does not determine 
the former. To be able to do this, additional inter-relationships between generating functions are required. 
In the case $N \mapsto 2N$ and thus $N$ even, the additional inter-relationships have been given
in \cite{Fo06} according to the generating function identity
\begin{equation}\label{W5}
{\mathcal E}^{{\rm C O E}, \pm}((-\theta,\theta);z) \Big |_{N \mapsto 2N} =  
{\mathcal E}^{\O^\pm(2N + 1)}((0,\theta);z).
\end{equation}
Note that (\ref{W5}) substituted in (\ref{W4}) is consistent with (\ref{W1}) in the case $N$ even.
The identity (\ref{W5}) has very recently \cite{BF15}
been shown to be a corollary of the identities between eigenvalue distributions
\begin{equation}\label{W6}
{\rm even} \, |{\rm COE}_{2N} | = \O^+(2N+1), \qquad
{\rm odd} \, |{\rm COE}_{2N} | = \O^-(2N+1).
\end{equation}
Here the notation ${\rm COE}_{2N}$ refers to the eigenvalue distribution of $2N \times 2N$ COE
matrices, while $|{\rm COE}_{2N}|$ refers to the distribution in the circumstance that the eigenvalues
with angles $-\pi < \theta < 0$ are reflected in the real axis by $\theta \mapsto - \theta$, and thus all
eigenvalues have angles between 0 and $\pi$. The operation even (odd) refers to observing only
those eigenvalues that occur an even (odd) number of places from $\theta = 0$, reading anti-clockwise.

Substituting (\ref{W5}) in (\ref{W2}), and setting $\bar{z} = 2 z - z^2$ so that $1 - \bar{z }= (1 - z)^2$
 we obtain,  after some minor manipulation,  a known closed formula for the  gap probabilities of ${\rm COE}_{2N}$ in terms of the gap probabilities for $\O^\pm(2N+1)$ \cite[Eq.~(8.150)]{Fo10}
\begin{equation}\label{W7} 
{\mathcal E}^{{\rm C O E}_{2N} }((-\theta,\theta);z)  =
{(1 - z) \mathcal E^{\O^+(2N+1)}((0,\theta);\bar{z}) +  \mathcal E^{\O^-(2N+1)}((0,\theta);\bar{z}) \over
2 - z}.
\end{equation}

The analogue of (\ref{W6}) for ${\rm COE}_{2N-1}$ allows us to deduce the analogue of
(\ref{W7}). The required formulas were not known until \cite{BF15} and consequently this is a new result.

\begin{prop}
We have
\begin{equation}\label{W8} 
\mathcal E^{{\rm C O E}_{2N-1}}((-\theta,\theta);z) =
{(1 - z) \mathcal  E^{\O^-(2N)}((0,\theta);\bar{z}) +  \mathcal  E^{\O^+(2N)}((0,\theta);\bar{z}) \over
2 - z}.
\end{equation}
\end{prop}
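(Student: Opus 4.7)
The plan is to mirror the derivation of~\eqref{W7} from~\eqref{W5}, with the only new input being an analogue of~\eqref{W6} for ${\rm COE}_{2N-1}$ recently established in~\cite{BF15}.

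First, I would invoke from~\cite{BF15} the eigenvalue distribution identities
\[
{\rm even}\,|{\rm COE}_{2N-1}| = \O^-(2N), \qquad {\rm odd}\,|{\rm COE}_{2N-1}| = \O^+(2N),
\]
which are the odd-dimensional counterpart of~\eqref{W6}. The swap of the $\pm$ superscripts compared with~\eqref{W6} is forced by a dimension count: $|{\rm COE}_{2N-1}|$ has $2N-1$ eigen-angles on $(0,\pi)$, of which $N-1$ lie at even positions (matching the $N-1$ eigen-angles of $\O^-(2N)$ on $(0,\pi)$) and $N$ lie at odd positions (matching those of $\O^+(2N)$).

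Second, exactly as in the passage from~\eqref{W6} to~\eqref{W5}, these distributional identities lift to generating functions. The sums $E^{{\rm COE}}(2n,(0,\theta))+E^{{\rm COE}}(2n\pm 1,(0,\theta))$ that enter the definition~\eqref{W2} are, by a standard parity argument, the probabilities that the even- respectively odd-positioned point process of $|{\rm COE}|$ has $n$ points in $(0,\theta)$, so one obtains
\[
\mathcal E^{{\rm COE},\pm}((-\theta,\theta);z)\big|_{N\mapsto 2N-1} = \mathcal E^{\O^\mp(2N)}((0,\theta);z).
\]

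Third, substituting $z\mapsto\bar{z}=2z-z^2$ (so that $1-\bar{z}=(1-z)^2$) and writing $p_k := E^{{\rm COE}_{2N-1}}(k,(0,\theta))$, the numerator on the right-hand side of~\eqref{W8} becomes
\[
(1-z)\sum_n (1-z)^{2n}(p_{2n}+p_{2n+1}) + \sum_n (1-z)^{2n}(p_{2n}+p_{2n-1}),
\]
which is structurally identical to the analogous quantity behind the derivation of~\eqref{W7}; only the orthogonal group labels on the right-hand side have been swapped. Regrouping by the parity of the exponent of $(1-z)$, every $p_k$ acquires a common factor $(2-z)$, yielding $(2-z)\sum_k (1-z)^k p_k$. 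Since $p_k$ is simultaneously the probability that ${\rm COE}_{2N-1}$ has $k$ eigen-angles in $(-\theta,\theta)$ --- via the reflection defining $|\cdot|$ --- this sum equals $(2-z)\,\mathcal E^{{\rm COE}_{2N-1}}((-\theta,\theta);z)$. Dividing through by $2-z$ yields~\eqref{W8}.

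The main obstacle is not algebraic but structural: once the two distributional identities above from~\cite{BF15} are granted, the derivation is literally parallel to that of~\eqref{W7}, with the single swap $\O^+\leftrightarrow\O^-$. The genuine content is the new input from~\cite{BF15}, whose even-dimensional counterpart~\eqref{W6} itself was only established long after the classical theorems of Dyson and Gunson; this is precisely why~\eqref{W8} --- although algebraically a minor modification of~\eqref{W7} --- is a new result.
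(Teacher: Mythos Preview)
Your proposal is correct and follows essentially the same approach as the paper's own proof, which simply invokes the identities ${\rm even}\,|{\rm COE}_{2N-1}|=\O^-(2N)$ and ${\rm odd}\,|{\rm COE}_{2N-1}|=\O^+(2N)$ from \cite[Th.~7.1]{BF15} and then says ``manipulating as in the derivation of~(\ref{W7}).'' You have in fact spelled out that manipulation in more detail than the paper does, and your observation that the $\O^+\leftrightarrow\O^-$ swap in the distributional identities exactly cancels the swap in the numerator of~(\ref{W8}) --- so that the algebra is literally the same as for~(\ref{W7}) --- is a nice clarification.
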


\begin{proof}
We read off from \cite[Th.~7.1]{BF15} that
\begin{equation}\label{W6a}
{\rm even} \, |{\rm COE}_{2N-1} | = \O^-(2N), \qquad
{\rm odd} \, |{\rm COE}_{2N-1} | = \O^+(2N).
\end{equation}
Substituting in (\ref{W2}) and manipulating as in the derivation of (\ref{W7}) gives (\ref{W8}).
\end{proof}

With $\nu = (-)^N$ the cases \eqref{W7} and \eqref{W8} can be combined into a single formula that holds for both parities
of $N$:
\begin{equation}\label{W8i} 
\mathcal E^{{\rm C O E}_{N}}((-\theta,\theta);z)
 = {(1 -  z) \mathcal  E^{\O^{\nu}(N+1)}((0,\theta);\bar z) +  \mathcal  E^{\O^{-\nu}(N+1)}((0,\theta);\bar z) \over
2 -  z}.
\end{equation}
The eigenvalues for $\O^\pm(N+1)$ in $(0,\pi)$ form a determinantal point process with kernel (see e.g.~\cite[Prop.~5.5.3]{Fo10})
\begin{equation}\label{eq:OpmKernel}
 {1 \over 2 \pi} \bigg ( {\sin N (x - y)/2 \over \sin (x - y)/2} \mp \nu {\sin N (x + y)/2 \over \sin (x + y)/2} \bigg )
 = \frac{N}{2\pi} K^{N,\mp\nu}(x N/2\pi, y N/2\pi),
\end{equation}
 where 
\begin{equation}\label{eq:Kpm}
 K^{N,\pm}(x,y) = K^N(x,y) \pm K^N(x,-y).
 \end{equation}
Scaling the eigen-angles of ${\rm COE}_N$ to have unit mean spacing by $\theta =2\pi s/N$ thus yields the following result:
 
\begin{corollary}\label{cor:COE} Let $\KK^{N,\pm}_s$ denote the integral operator on $(0,s)$ with kernel \eqref{eq:Kpm}
and denote $\bar{z} = 2 z - z^2$. We have
\begin{equation}\label{W8ii}
E^{{\rm C O E}_{N}}((-s,s);z) = \frac{(1-z)\det(I-\bar z \KK^{N,-}_s)+\det(I-\bar z \KK^{N,+}_s)}{2-z}.
\end{equation}
As in \eqref{Th} thinning is expressed by
$E_\xi^{{\rm C O E}_{N}}((-s,s);z) = E^{{\rm C O E}_{N}}((-s,s);\xi z).$ 
\end{corollary}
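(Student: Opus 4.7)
The plan is to assemble the corollary from three ingredients that appear earlier in the paper: the unified parity-independent identity \eqref{W8i}, the determinantal structure of the eigenvalue process of $\O^\pm(N+1)$ with the explicit kernel exhibited in \eqref{eq:OpmKernel}, and the standard Fredholm determinant evaluation of a gap probability generating function for a determinantal point process.

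First I would rewrite each of the two factors $\mathcal E^{\O^{\pm\nu}(N+1)}((0,\theta);\bar z)$ in \eqref{W8i} as a Fredholm determinant. Since the eigenvalues of $\O^\pm(N+1)$ in $(0,\pi)$ form a determinantal point process with the kernel displayed in \eqref{eq:OpmKernel}, substituting the DPP form $\rho_{(n)} = \det[K_{\rm dpp}(x_j,x_l)]$ into \eqref{ER} and collapsing the resulting series via the Fredholm expansion (exactly as in the derivation of \eqref{RK1}) gives
\begin{equation*}
\mathcal E^{\O^\pm(N+1)}((0,\theta);z) = \det\bigl(I - z\,\tilde\KK^\pm_{(0,\theta)}\bigr),
\end{equation*}
where $\tilde\KK^\pm_{(0,\theta)}$ is the integral operator on $(0,\theta)$ induced by the kernel in \eqref{eq:OpmKernel}.

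Next I would rescale $\theta = 2\pi s/N$ to pass to unit mean spacing. Under the substitution $x = 2\pi u/N$, $y = 2\pi v/N$ the kernel in \eqref{eq:OpmKernel} is by the very identity in that line equal to $\tfrac{N}{2\pi}K^{N,\mp\nu}(u,v)$; after accounting for the Jacobian $2\pi/N$ of the change of integration variable, the resulting operator on $(0,s)$ has kernel $K^{N,\mp\nu}(u,v)$, and since the Fredholm determinant is invariant under such a similarity one obtains
\begin{equation*}
\mathcal E^{\O^\pm(N+1)}\bigl((0,2\pi s/N);\bar z\bigr) = \det\bigl(I - \bar z\, \KK^{N,\mp\nu}_s\bigr).
\end{equation*}
Because $\nu^2 = 1$, the kernel attached to $\O^{\nu}(N+1)$ collapses to $K^{N,-}$ and the kernel attached to $\O^{-\nu}(N+1)$ collapses to $K^{N,+}$, independent of the parity of $N$. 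Plugging these two Fredholm determinants into \eqref{W8i} and invoking the scaling convention $\mathcal E^{{\rm C}\beta{\rm E}_N}((-2\pi s/N,2\pi s/N);z) = E^{{\rm C}\beta{\rm E}_N}(2s;z)$ recorded just before \eqref{W2} yields \eqref{W8ii}. The thinning assertion is the direct specialization of the general identity \eqref{Th} to the COE.

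The only genuinely delicate step is the parity bookkeeping: one must check that the sign $\mp\nu$ in the $\O^\pm(N+1)$ kernel, combined with the $\pm\nu$ prescription of the two factors in \eqref{W8i}, collapses via $\nu^2=1$ so that the resulting formula has no residual dependence on the parity of $N$. Everything else is a routine concatenation of identities already established in the excerpt.
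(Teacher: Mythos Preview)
Your proposal is correct and follows essentially the same route as the paper: start from the parity-unified identity \eqref{W8i}, use the determinantal structure of $\O^\pm(N+1)$ with kernel \eqref{eq:OpmKernel} to write the gap generating functions as Fredholm determinants, and rescale $\theta = 2\pi s/N$; the parity bookkeeping via $\nu^2=1$ that you spell out is exactly the mechanism by which the formula becomes parity-independent. The paper compresses all of this into the single sentence preceding the corollary, but your expansion of the steps is accurate.
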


We will now turn our attention to spacing probabilities for the CSE. These follow from knowledge of the spacing
distributions for the COE. Thus one has the inter-relation \cite{MD63}
\begin{equation}\label{W6b}
{\rm alt} \, {\rm COE}_{2N} = {\rm CSE}_N,
\end{equation}
and it follows from this that
\begin{multline}\label{W6B}
\mathcal  E^{{\rm CSE}_N}(n;(-\theta,\theta)) \\
 = \mathcal  E^{{\rm COE}_{2N}}(2n;(-\theta,\theta)) + {1 \over 2} \Big (
\mathcal  E^{{\rm COE}_{2N}}(2n-1;(-\theta,\theta)) + \mathcal  E^{{\rm COE}_{2N}}(2n+1;(-\theta,\theta))  \Big ).
\end{multline}
As noted in \cite[Eq.~(8.158)]{Fo10}, recalling the definition (\ref{W2}), and making use too of~(\ref{W5}), this gives
\begin{equation}\label{W6c}
\mathcal  E^{{\rm CSE}_N}((-\theta,\theta);z)  = {1 \over 2} \Big ( \mathcal E^{\O^+(2N+1)}((0,\theta);z) + \mathcal  E^{\O^-(2N+1)}((0,\theta);z) \Big ).
\end{equation}
As discussed before \eqref{eq:OpmKernel}, the kernel of the determinantal point process formed by the eigenvalues
of $\O^\pm(2N+1)$ 
is given by
\[
\frac{N}{\pi} K^{2N,\mp}(x N/\pi,y N/\pi).
\]
Scaling the eigen-angles of ${\rm CSE}_N$ to have unit mean spacing by $\theta =2\pi s/N$ thus yields the following result,
where we use \eqref{Th} to express the presence of thinning:

\begin{prop}\label{cor:CSE}
 Let $\KK^{N,\pm}_s$ denote the integral operator on $(0,s)$ with kernel~\eqref{eq:Kpm}. We have
\begin{equation}\label{W6C}
E_\xi^{{\rm C S E}_{N}}((-s,s);z) = \frac{1}{2}\left(\det(I- \xi z \KK^{2N,-}_{2s})+\det(I- 
\xi z \KK^{2N,+}_{2s})\right).
\end{equation}
\end{prop}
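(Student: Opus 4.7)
The plan is to chain together three ingredients already assembled in the excerpt: (i) the identity \eqref{W6c} which expresses the CSE gap probability generating function as an arithmetic mean of those for $\O^+(2N+1)$ and $\O^-(2N+1)$; (ii) the fact, recorded just above the proposition via \eqref{eq:OpmKernel} (with $N\mapsto 2N$, whence $\nu = (-1)^{2N} = +1$), that the eigen-angles of $\O^{\pm}(2N+1)$ in $(0,\pi)$ form a determinantal process whose kernel, after rescaling, is $K^{2N,\mp}$; and (iii) the thinning rule \eqref{Th}, which immediately converts a formula for $\xi=1$ into one for general $\xi$ by replacing $z$ by $\xi z$. So the strategy is: start from \eqref{W6c}, rewrite each $\mathcal E^{\O^{\pm}(2N+1)}((0,\theta);z)$ as a Fredholm determinant, perform a change of variables to bring it into unit mean-spacing form on $(0,2s)$, and finally invoke the thinning identity.

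First, I would apply the general identity \eqref{ER} to the determinantal process $\O^{\pm}(2N+1)$ and then use the standard expansion-formula argument leading to \eqref{RK1} (i.e., the fact that the sum-over-$n$ in \eqref{ER} of integrated determinants of the kernel equals the Fredholm determinant of the induced integral operator) to write
\begin{equation*}
\mathcal E^{\O^{\pm}(2N+1)}((0,\theta);z) = \det\bigl(\II - z\, \widetilde\KK^{\pm}_\theta\bigr),
\end{equation*}
where $\widetilde\KK^{\pm}_\theta$ is the integral operator on $(0,\theta)$ with kernel $\frac{N}{\pi}K^{2N,\mp}(xN/\pi,yN/\pi)$.

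Next, I would rescale the eigen-angles to unit mean spacing by setting $\theta = 2\pi s/N$ and changing variables $u = xN/\pi$, $v = yN/\pi$ in the Fredholm series. The Jacobian exactly compensates the prefactor $N/\pi$ of the kernel, transforming the operator into $\KK^{2N,\mp}_{2s}$ on $(0,2s)$ in the notation of Corollary~\ref{cor:COE}. Hence
\begin{equation*}
\mathcal E^{\O^{+}(2N+1)}((0,2\pi s/N);z) = \det(\II - z\, \KK^{2N,-}_{2s}), \qquad \mathcal E^{\O^{-}(2N+1)}((0,2\pi s/N);z) = \det(\II - z\, \KK^{2N,+}_{2s}).
\end{equation*}
Substituting these into \eqref{W6c} (taken at the scaled argument) yields the $\xi=1$ case of the proposition. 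Applying the thinning rule \eqref{Th} to send $z\mapsto \xi z$ then produces the claimed identity.

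The only mildly delicate point, and hence the main bookkeeping obstacle, is making sure the sign conventions line up correctly: the $\O^{+}$ process uses the kernel with the minus sign (i.e.\ $K^{2N,-}$) and vice versa, because in \eqref{eq:OpmKernel} the $\pm$ on the orthogonal group corresponds to $\mp\nu$ in the kernel, and here $\nu=+1$ since $2N$ is even. Once this sign correspondence is pinned down, the computation is a direct combination of \eqref{W6c}, \eqref{eq:OpmKernel}, \eqref{ER}, and \eqref{Th}.
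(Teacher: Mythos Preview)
Your proposal is correct and follows essentially the same approach as the paper: the text immediately preceding the proposition already records the kernel $\tfrac{N}{\pi}K^{2N,\mp}(xN/\pi,yN/\pi)$ for $\O^{\pm}(2N+1)$, prescribes the scaling $\theta=2\pi s/N$, and invokes \eqref{Th} for thinning --- and your argument simply spells out these steps, combining \eqref{W6c}, the Fredholm-determinant identity for determinantal processes, the change of variables, and the thinning rule. Your careful tracking of the $\pm\leftrightarrow\mp$ sign swap (arising from $\nu=+1$ when $N\mapsto 2N$) is exactly the bookkeeping the paper leaves implicit.
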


  \begin{figure}
\hspace*{-0.5cm}
\includegraphics[width=0.55\textwidth]{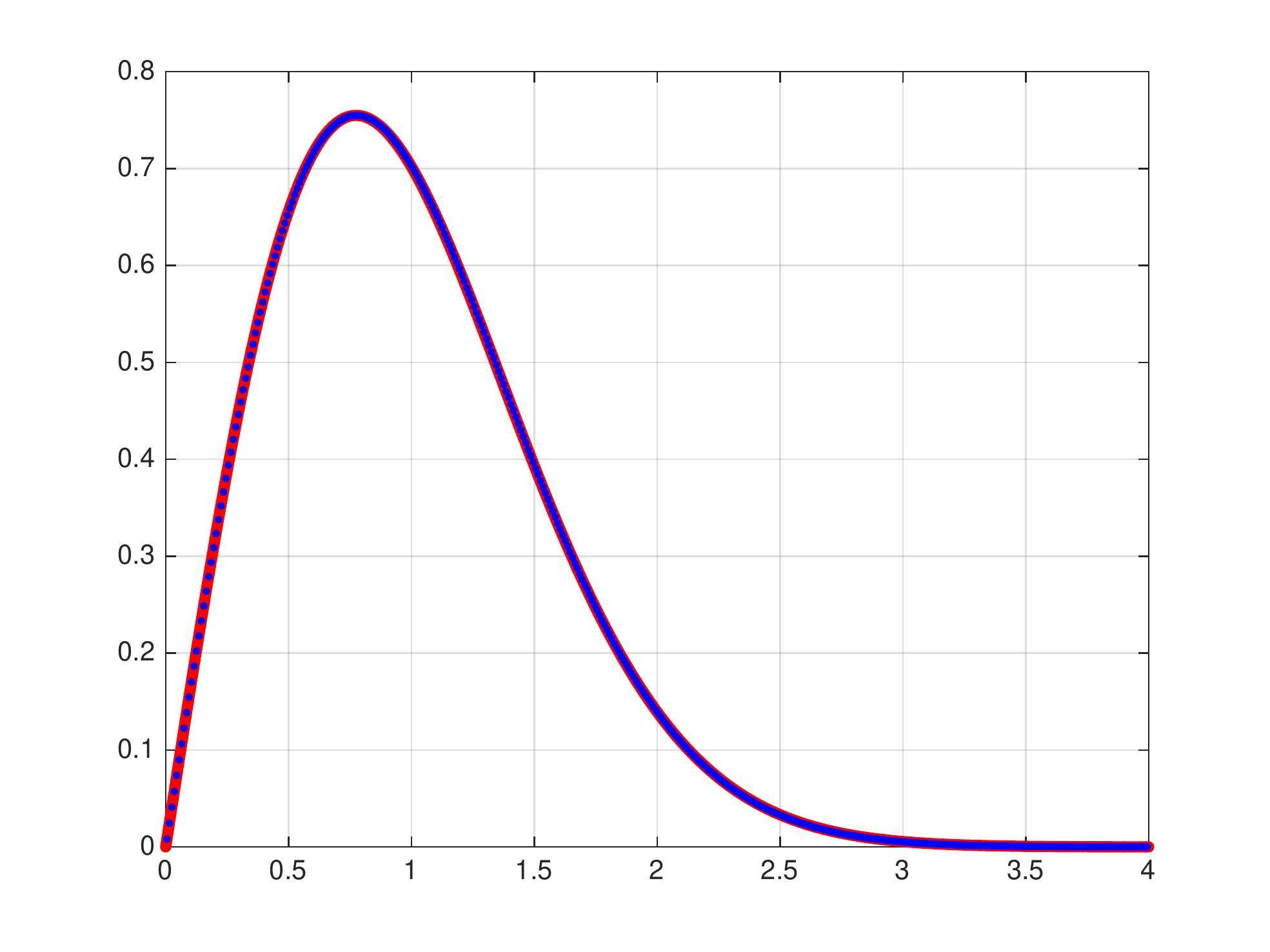}\hspace*{-0.5cm}
\includegraphics[width=0.55\textwidth]{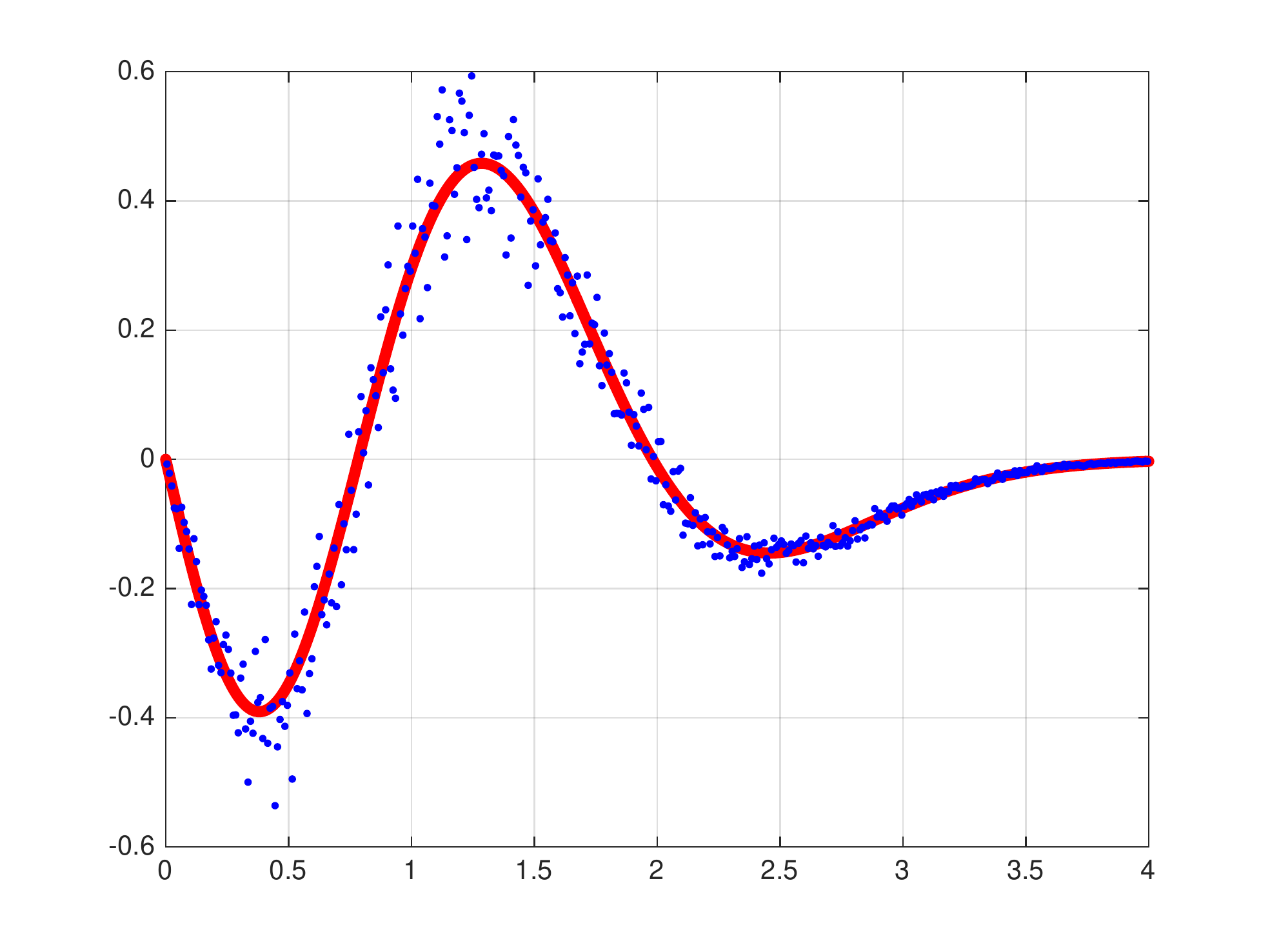}
\caption{\label{F8} $0$-th next neighbour spacing: simulation vs. formulae from theory for finite size COE, no thinning ($\xi=1$). Left panel: a histogram of empirical data from  ${\rm COE}_{N}$ with $N=20$ scaled to unit mean spacing, computed using a bin size of 0.01 and $10^8$ samples (blue dots); the large $N$ limit $p_{1,\xi}(0;s)$ (red solid line). Right panel: the simulation data minus $p_{1,\xi}(0;s)$  scaled
by $N^2$ (blue dots); the leading correction term $r_{1,\xi}(0,s)$ (red solid line).}
\end{figure}

 \begin{figure}
\hspace*{-0.5cm}
\includegraphics[width=0.55\textwidth]{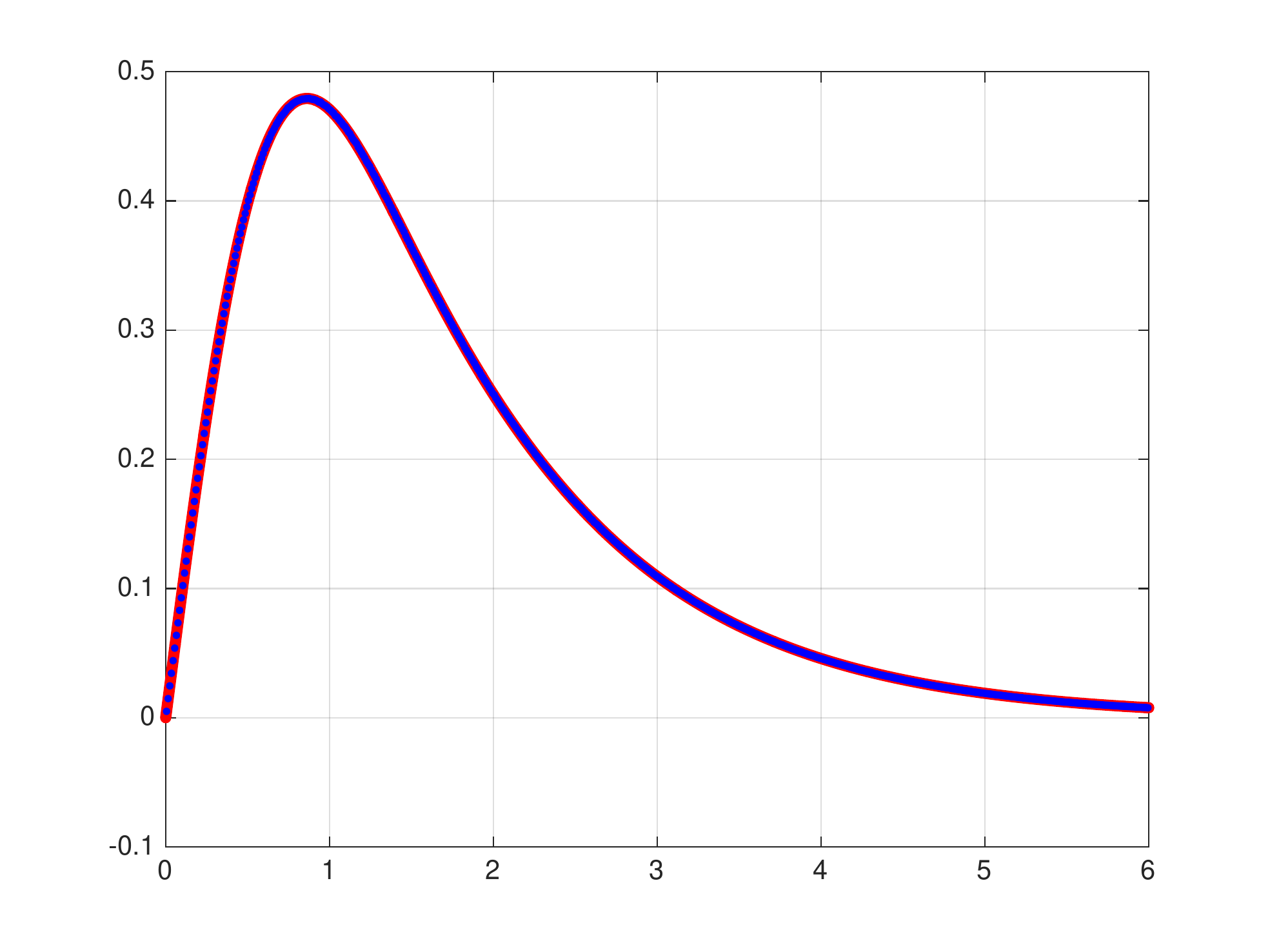}\hspace*{-0.5cm}
\includegraphics[width=0.55\textwidth]{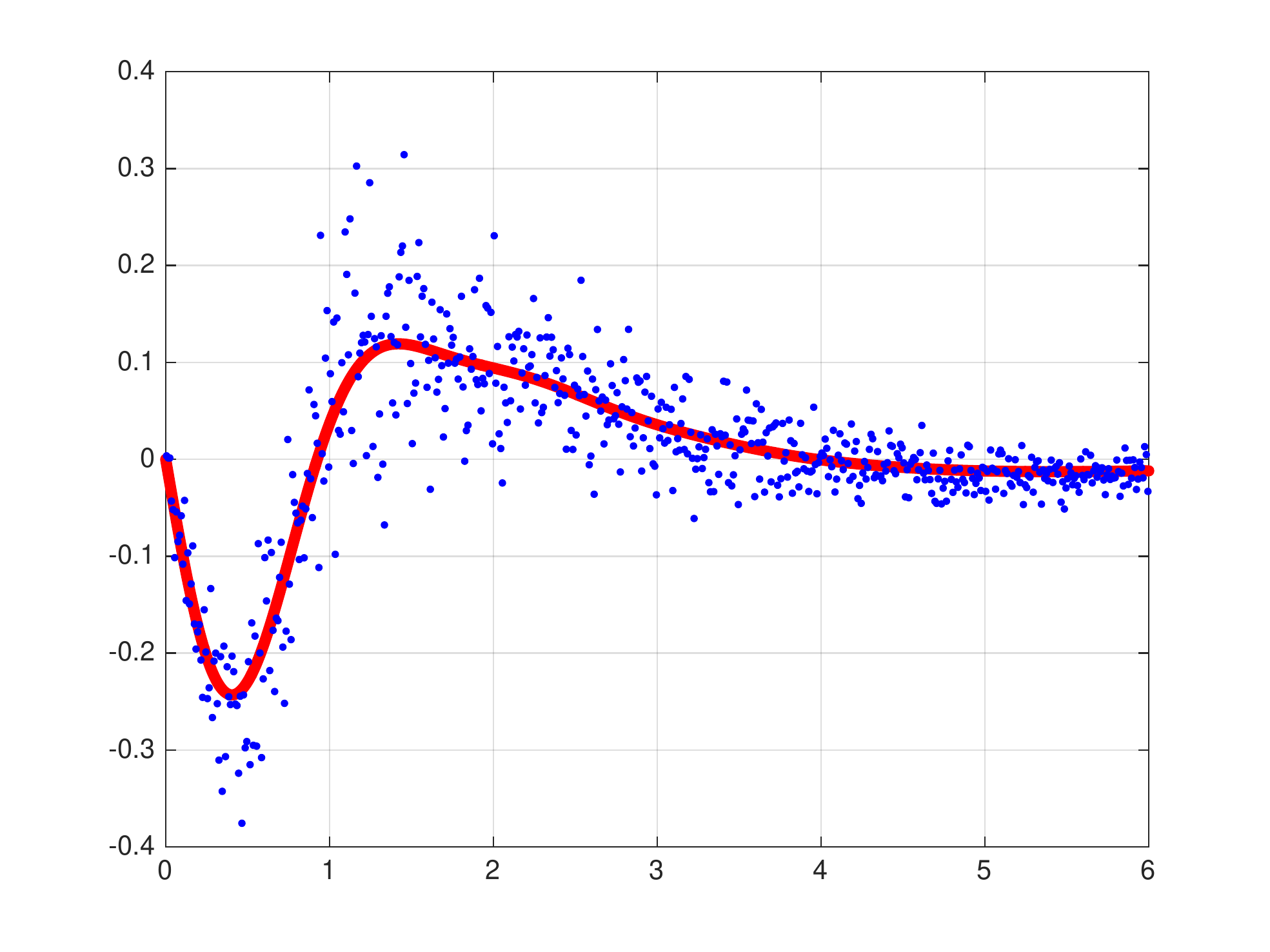}\caption{\label{F8a}As in Figure \ref{F8} but with thinning $\xi=0.6$. }
\end{figure}

\subsection{Expansion of spacing distributions} 

Making use of the second equation in (\ref{Th2}) we get the spacing 
distribution
\begin{equation}\label{Sa1}
 p_\xi^{{\rm COE}}(0;s)  = {1 \over  \xi} {d^2 \over d s^2} E_\xi^{{\rm COE}_N}(0;(-s/2,s/2)),
\end{equation}
from which knowledge of the large $N$ expansion of 
the determinants in Corollary~\ref{cor:COE} will allow us to determine the terms in the corresponding expansion of the spacing.

\begin{prop}\label{prop:COE} Let $\KK_s^\pm$ and $\LL_s^\pm$ denote the integral operators on $(0,s)$ with kernels $K(x,y) \pm K(x,-y)$
and $L(x,y)\pm L(x,-y)$ respectively.
 We have
\begin{equation}\label{Sa}
p^{{\rm COE}}_\xi(0;s)  = p_{1, \xi}(0;s) + {1 \over N^2} r_{1, \xi} (0;s) + \O\Big ( {1 \over N^4} \Big ),
\end{equation}
where, denoting $\bar{\xi} = 2 \xi - \xi^2$,
\begin{equation}\label{M1}
 p_{1,\xi}(0;s) = {1 \over \bar \xi}  {d^2 \over d s^2} \left( (1 - \xi)  
 \det( \II - \bar\xi \KK_{s/2}^-)  +  \det( \II - \bar\xi \KK_{s/2}^+)  \right)
\end{equation}
and 
\begin{equation}
r_{1,\xi}(0;s) = {d^2 \over d s^2} \left( (1 - \xi)  
 \Om{\bar\xi \KK_{s/2}^-}{\LL_{s/2}^-}  +  \Om{\bar\xi \KK_{s/2}^+}{\LL_{s/2}^+} \right)
 \end{equation}
 \end{prop}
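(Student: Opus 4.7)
The plan is to combine three ingredients already in hand: the gap-to-spacing identity \eqref{Sa1}, the closed form for the COE gap generating function from Corollary~\ref{cor:COE}, and the operator-level expansion machinery of Lemma~\ref{lem:folklore} applied to the kernels $K^{N,\pm}$.

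First, I would rewrite the right-hand side of \eqref{Sa1} using Corollary~\ref{cor:COE} and the thinning identity $E_\xi^{{\rm COE}_N}(0;(-s/2,s/2)) = E^{{\rm COE}_N}((-s/2,s/2);\xi)$. With $\bar\xi = 2\xi - \xi^2$ this yields
\[
p_\xi^{{\rm COE}}(0;s) = \frac{1}{\xi(2-\xi)} \frac{d^2}{ds^2}\Bigl[(1-\xi)\det(\II-\bar\xi\,\KK^{N,-}_{s/2}) + \det(\II-\bar\xi\,\KK^{N,+}_{s/2})\Bigr].
\]
The key simplification that eventually produces the stated form is the identity $\xi(2-\xi) = \bar\xi$, so the prefactor becomes $1/\bar\xi$.

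Next, I would expand each kernel $K^{N,\pm}(x,y) = K^N(x,y) \pm K^N(x,-y)$ using \eqref{eq:KNexpansion}. Since the expansion of $K^N$ proceeds in even inverse powers of $N$, the same holds for $K^{N,\pm}$, and one obtains
\[
K^{N,\pm}(x,y) = \bigl(K(x,y)\pm K(x,-y)\bigr) + \frac{1}{N^2}\bigl(L(x,y)\pm L(x,-y)\bigr) + \O(N^{-4})
\]
uniformly in $C^1$ on any bounded square, which is exactly the hypothesis of Lemma~\ref{lem:folklore} with $h=1/N^2$. Lifting this to trace-class expansions of the induced operators on $(0,s/2)$ and applying the lemma gives
\[
\det(\II-\bar\xi\,\KK^{N,\pm}_{s/2}) = \det(\II-\bar\xi\,\KK^{\pm}_{s/2}) + \frac{1}{N^2}\,\Om{\bar\xi\,\KK^{\pm}_{s/2}}{\bar\xi\,\LL^{\pm}_{s/2}} + \O(N^{-4}),
\]
where the linearity of $\Omega(\KK;\cdot\,)$ in its second argument lets me pull out one factor of $\bar\xi$. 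Substituting this into the displayed formula for $p_\xi^{{\rm COE}}(0;s)$ and using $\bar\xi/[\xi(2-\xi)] = 1$ to absorb the overall prefactor in the $N^{-2}$ correction produces the claimed identifications of $p_{1,\xi}(0;s)$ and $r_{1,\xi}(0;s)$.

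There is no real obstacle; the main points requiring care are bookkeeping the prefactor $\xi(2-\xi)=\bar\xi$, using the linearity of $\Omega$ in $\LL$ to extract the second factor of $\bar\xi$, and observing that because the underlying kernel expansion has only even powers of $1/N$ the remainder is automatically $\O(N^{-4})$ rather than $\O(N^{-3})$. The differentiations $d^2/ds^2$ commute with the expansion in $1/N$ since the remainder estimates in Lemma~\ref{lem:folklore} are uniform in the endpoint parameter; the chain-rule factor from the $s/2$ argument is absorbed into the definitions and causes no difficulty.
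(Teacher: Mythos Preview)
Your proposal is correct and follows exactly the route sketched in the paper: substitute Corollary~\ref{cor:COE} (with $z=\xi$, $\bar z=\bar\xi$) into \eqref{Sa1}, expand each kernel $K^{N,\pm}$ via \eqref{eq:KNexpansion}, and apply Lemma~\ref{lem:folklore}. Your bookkeeping of the prefactor $\xi(2-\xi)=\bar\xi$ and the linearity of $\Omega$ in its second argument is precisely what is needed, and the paper itself offers no further detail beyond pointing to these ingredients.
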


  \begin{remark}
 The case $\xi = 1$ of (\ref{M1}) is due to Gaudin \cite{Ga61}.
 \end{remark}

As an illustration and test of the above results, we took $10^8$ samples of ${\rm COE}_N$  ($N=20$,  using the $\beta=1$ CMV sparse matrix model \cite{KN04}), and from this made an empirical determination of
the spacing distribution scaled to unit mean spacing. This was then subtracted from the large $N$ limit
$p_{1,\xi}(0;s)$ and the difference compared against $r_{1,\xi}(0;s)$. Both
$\xi =1$ (no thinning) and $\xi=0.6$ were considered; see Figures \ref{F8} and \ref{F8a}.

 The analogue of (\ref{Sa1}),
\begin{equation}\label{Sa16}
 p_\xi^{{\rm CSE}}(0;s)  = {1 \over  \xi} {d^2 \over d s^2} E_\xi^{{\rm CSE}_N}(0;(-s/2,s/2)),
\end{equation}
allows us to determine the first terms of the large $N$ expansion by expanding correspondingly the determinants in Proposition~\ref{cor:CSE}.

\begin{prop}\label{prop:CSE}
With $\KK_s^\pm$ and $\LL_s^\pm$ as in Proposition \ref{prop:COE} we have
\begin{equation}\label{Sa7}
 p_\xi^{{\rm CSE}}(0; s) = p_{4,\xi}(0;s) + {1 \over N^2} r_{4,\xi}(0;s) + \O\Big ( {1 \over N^4} \Big ),
\end{equation}
where
\begin{equation}\label{M2}
 p_{4,\xi}(0;s) = {1 \over 2 \xi }  {d^2 \over d s^2} \left (
 \det ( \II - \xi \KK_s^-)  +  \det ( \II - \xi  \KK_s^+)  \right )
\end{equation}
and
\begin{equation}
 r_{4,\xi}(0;s) = {1 \over 8}  {d^2 \over d s^2} \left (
 \Om{\xi \KK_s^-}{\LL_s^-}  +  \Om{\xi  \KK_s^+}{\LL_s^+}  \right ).
 \end{equation} 
 \end{prop}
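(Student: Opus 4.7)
The plan is to combine Proposition \ref{cor:CSE} with the kernel expansion \eqref{eq:KNexpansion}, the folklore Lemma \ref{lem:folklore}, and the identity \eqref{Sa16}, exactly in parallel with the proof of Proposition \ref{prop:COE}. First, specializing Proposition \ref{cor:CSE} at $z=1$ and substituting $s \mapsto s/2$ (so that the interval $(-s,s)$ becomes $(-s/2,s/2)$) gives
\[
E_\xi^{\mathrm{CSE}_N}(0;(-s/2, s/2)) = \tfrac{1}{2}\bigl(\det(\II - \xi \KK^{2N,-}_s) + \det(\II - \xi \KK^{2N,+}_s)\bigr),
\]
where the factor $2N$ in the kernel index and the factor of $2$ inside the interval cancel neatly to leave operators supported on $(0,s)$.

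Second, the expansion \eqref{eq:KNexpansion} applied with $N \mapsto 2N$, together with the definition \eqref{eq:Kpm}, yields the kernel-level expansion
\[
K^{2N,\pm}(x,y) = \bigl(K(x,y) \pm K(x,-y)\bigr) + \frac{1}{4N^2}\bigl(L(x,y) \pm L(x,-y)\bigr) + \O(N^{-4}),
\]
uniformly together with its first derivatives on the compact square $[0,s]\times[0,s]$. By the first part of Lemma \ref{lem:folklore} this lifts to the trace-class expansion $\xi\KK^{2N,\pm}_s = \xi\KK^\pm_s + (4N^2)^{-1}\xi\LL^\pm_s + \O(N^{-4})$. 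Applying the determinantal part of the same lemma with $h = 1/(4N^2)$ and using linearity of $\Omega(\cdot)$ in its second slot gives
\[
\det(\II - \xi \KK^{2N,\pm}_s) = \det(\II - \xi \KK^\pm_s) + \frac{\xi}{4 N^2}\,\Om{\xi \KK^\pm_s}{\LL^\pm_s} + \O(N^{-4}).
\]

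Third, summing these two identities, halving, and substituting into \eqref{Sa16} produces
\[
p_\xi^{\rm CSE}(0;s) = \frac{1}{2\xi}\frac{d^2}{ds^2}\!\sum_{\pm}\det(\II - \xi\KK^\pm_s) + \frac{1}{8N^2}\frac{d^2}{ds^2}\!\sum_{\pm}\Om{\xi\KK^\pm_s}{\LL^\pm_s} + \O(N^{-4}),
\]
where the constant $1/8$ is precisely $(1/\xi)\cdot(1/2)\cdot(\xi/4)$, the three prefactors coming from \eqref{Sa16}, the $1/2$ in Proposition \ref{cor:CSE}, and the $\xi/(4N^2)$ produced by Lemma \ref{lem:folklore}. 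Reading off coefficients of $N^0$ and $N^{-2}$ recovers the claimed formulas for $p_{4,\xi}(0;s)$ and $r_{4,\xi}(0;s)$.

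The only nontrivial point is the term-by-term differentiation in $s$ of the trace-class expansion: one needs the remainder $\O(N^{-4})$ and its two $s$-derivatives to be controlled uniformly on compact intervals. Since the kernels $K,L$ extend analytically into $\CC$ and the dependence on the interval endpoint enters smoothly (through the domain of the integral operator and the restriction of the resolvent), this is exactly the same smooth-dependence argument that is tacitly used in Proposition \ref{P1} and in Proposition \ref{prop:COE}; no new obstacle arises beyond the bookkeeping of the factors $2N$ and $s/2$ that are characteristic of the CSE/COE reduction.
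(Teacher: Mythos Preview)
Your proof is correct and follows precisely the approach indicated in the paper: take the determinantal representation of $E_\xi^{\mathrm{CSE}_N}$ from Proposition~\ref{cor:CSE}, expand the kernels $K^{2N,\pm}$ using \eqref{eq:KNexpansion} (picking up the factor $1/(4N^2)$ from $N\mapsto 2N$), apply Lemma~\ref{lem:folklore}, and substitute into \eqref{Sa16}. Your tracking of the prefactor $1/8 = (1/\xi)\cdot(1/2)\cdot(\xi/4)$ is exactly right, and the remark on term-by-term $s$-differentiation correctly identifies the only analytic point, handled just as in Propositions~\ref{P1} and~\ref{prop:COE}.
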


   \begin{remark}
 The case $\xi = 1$ of (\ref{M2}) is due to Mehta and Dyson \cite{MD63}.
 \end{remark}

 \begin{figure}
\hspace*{-0.5cm}
\includegraphics[width=0.55\textwidth]{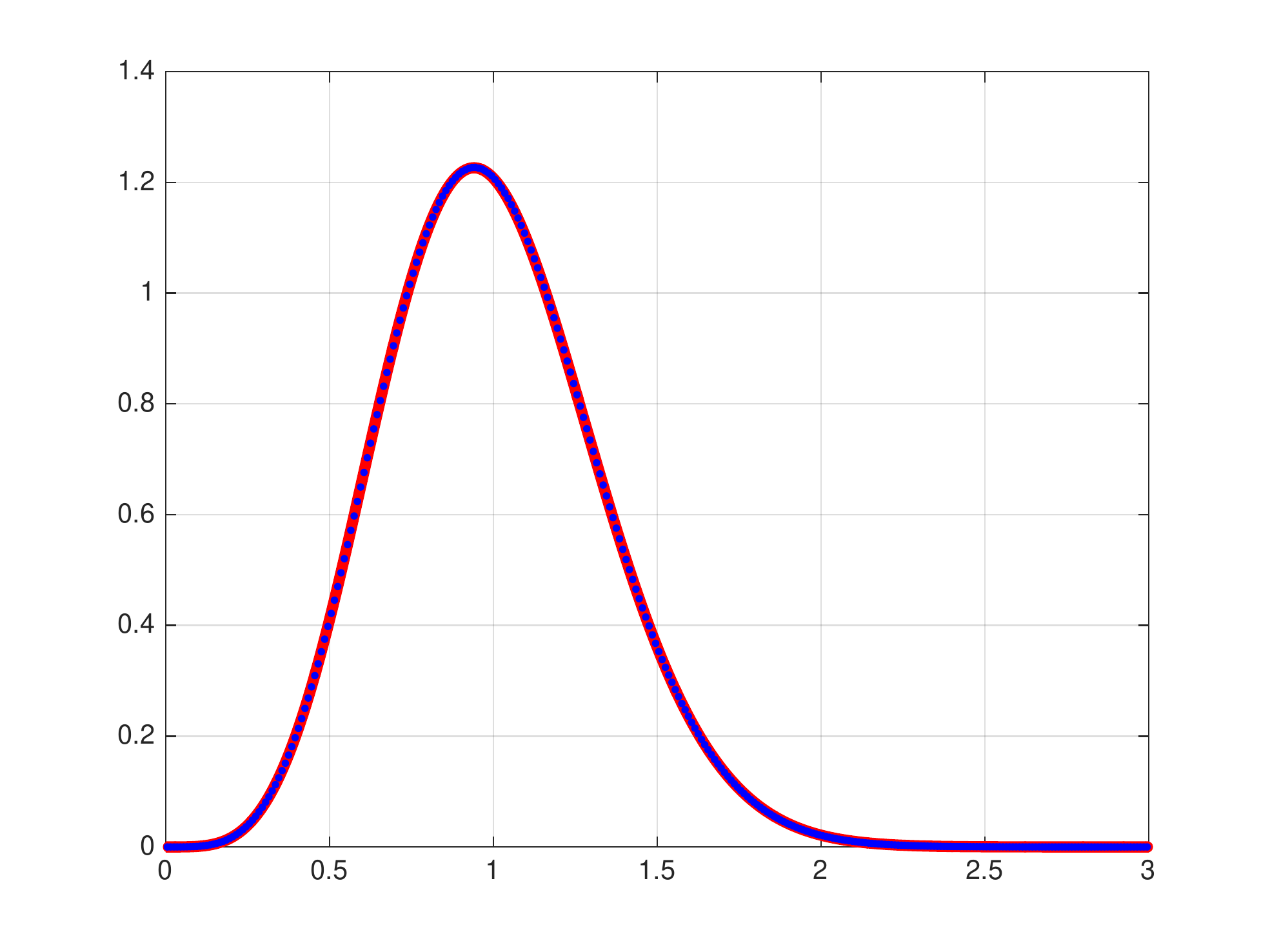}\hspace*{-0.5cm}
\includegraphics[width=0.55\textwidth]{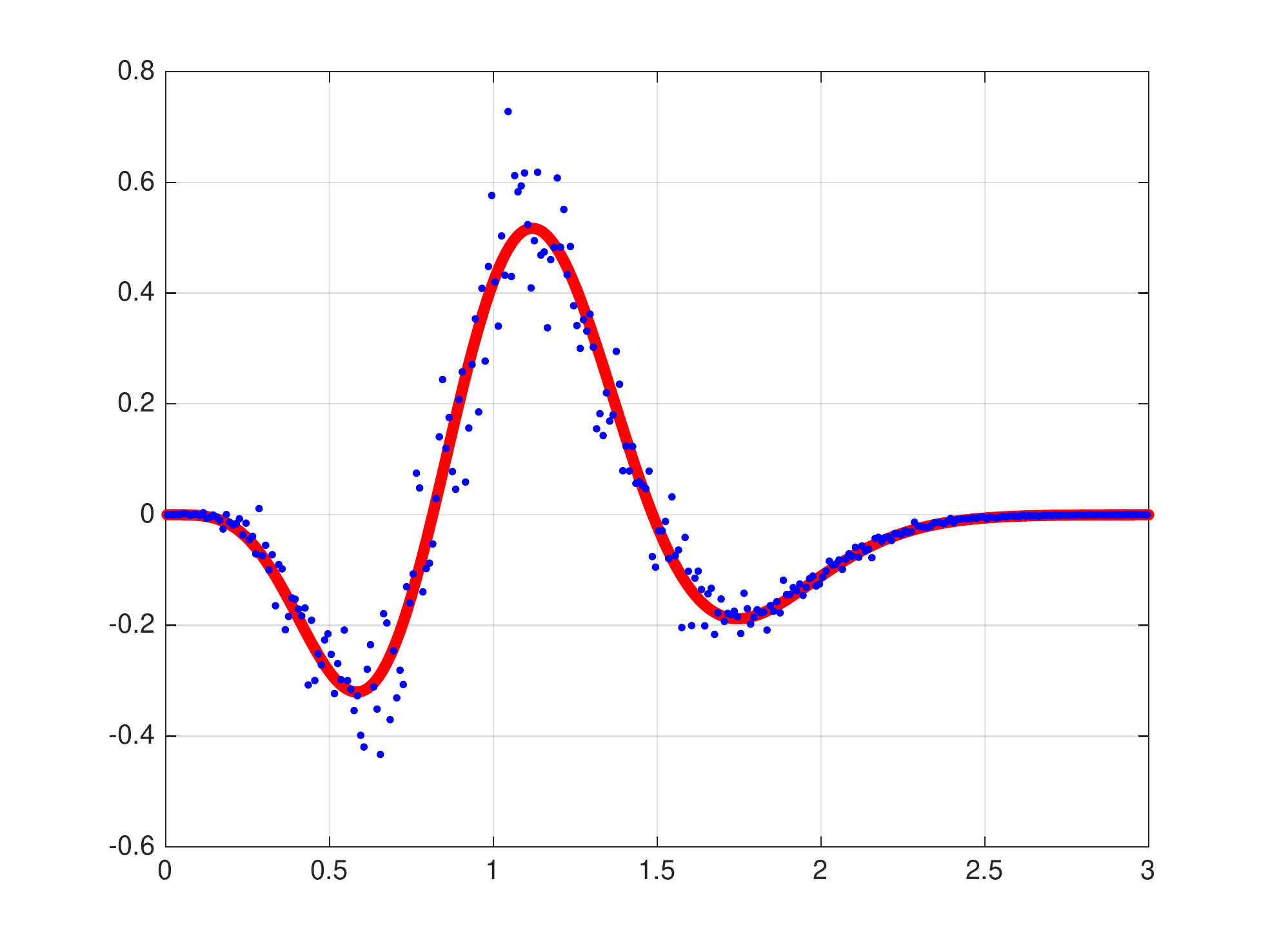}
\caption{\label{F9}$0$-th next neighbour spacing: simulation vs. formulae from theory for finite size CSE, no thinning ($\xi=1$). Left panel: a histogram of empirical data from ${\rm CSE}_{N}$ with $N=20$ scaled to unit mean spacing, computed using a bin size of $0.01$ and $10^8$ samples (blue dots); the large $N$ limit $p_{4,\xi}(0;s)$ (red solid line). Right panel: the simulation data minus $p_{4,\xi}(0;s)$  scaled
by $N^2$ (blue dots); the leading correction term $r_{4,\xi}(0,s)$ (red solid line).}
\end{figure}

\begin{figure}
\hspace*{-0.5cm}
\includegraphics[width=0.55\textwidth]{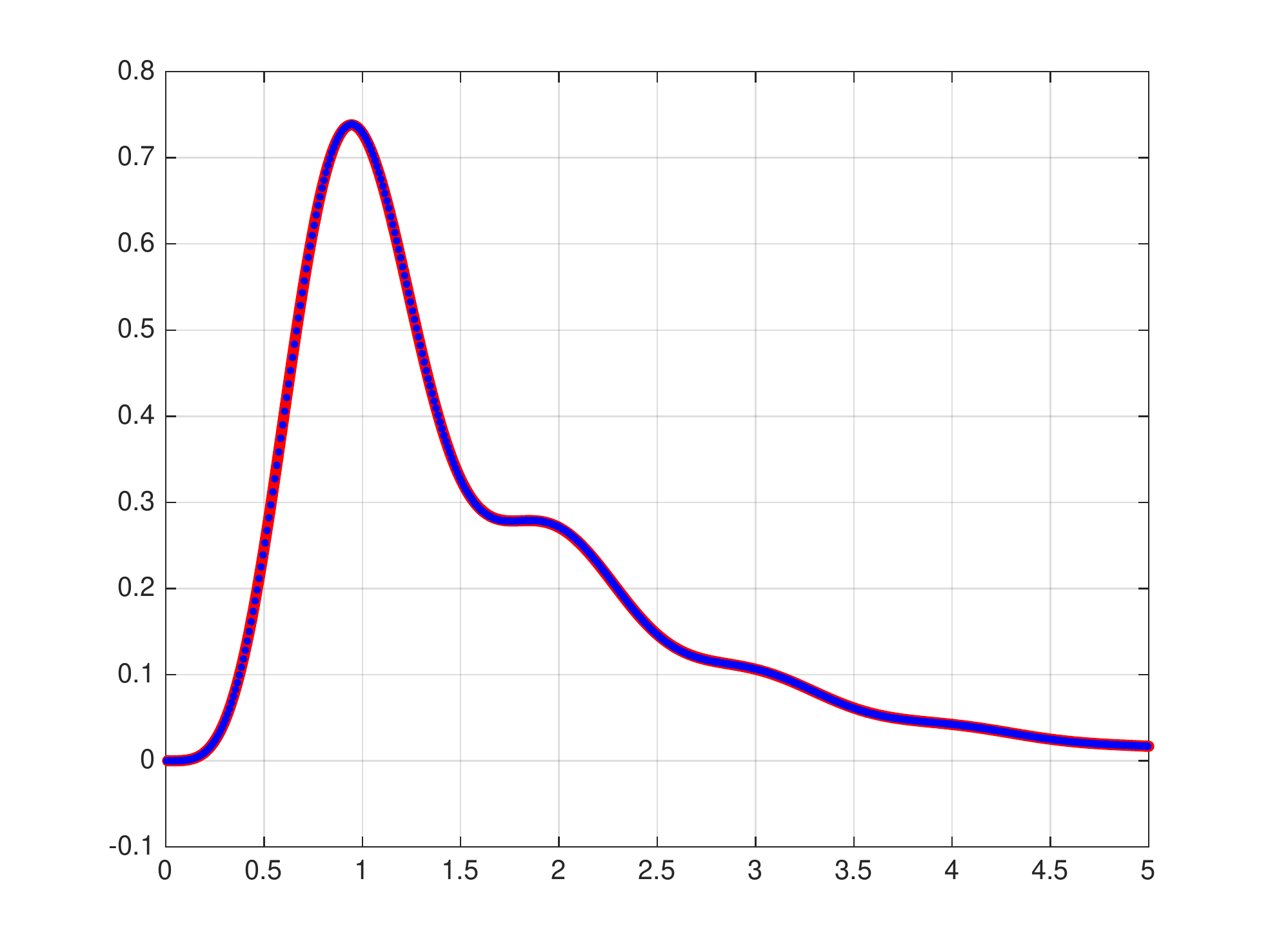}\hspace*{-0.5cm}
\includegraphics[width=0.55\textwidth]{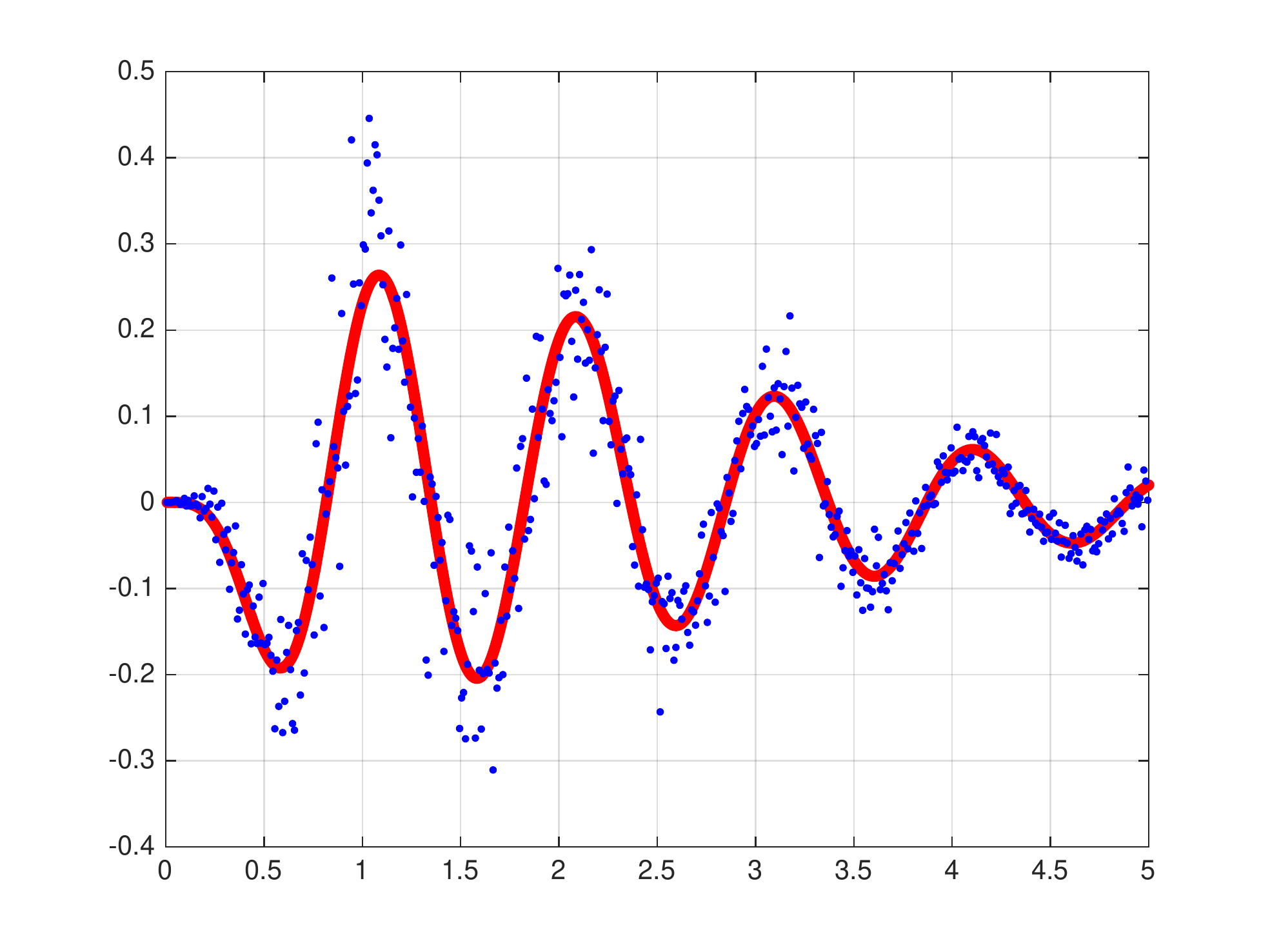}
\caption{\label{F9a}As in Figure \ref{F9} but with thinning $\xi=0.6$.}
\end{figure}

 As was done for the COE, to illustrate and test the above results, we took $10^8$ samples of ${\rm CSE}_N$  ($N=20$, again using the $\beta=4$ CMV sparse matrix model \cite{KN04}), and from this made an empirical determination of
the spacing distribution scaled to unit mean spacing. This was then subtracted from the large $N$ limit
$p_{4,\xi}(0;s)$ and the difference compared against $r_{4,\xi}(0;s)$. Both
$\xi =1$ (no thinning) and $\xi=0.6$ were considered; see Figures \ref{F9} and \ref{F9a}.

\subsection{Painlev\'e transcendent characterisation} 
 
In addition to the integral operator characterisation of the expansion terms in Propositions~\ref{prop:COE}
and \ref{prop:CSE} there is one in terms of Painlev\'e transcendents. Such an expression in the case of the CUE has been given in \cite{FM15} and
restated in \eqref{2.13d}--\eqref{Tq2} above. To simplify we restrict ourselves to $N$ being
even, writing $N \to 2N$ for definiteness.
We begin by noting that
 the joint eigenvalue PDF for the eigen-angles $\{ \theta_j \}$ of $\O^\pm(2N+1)$ matrices in the interval $(0,\pi)$, after changing variables
 $x_j = \sin^2(\theta_j/2)$, is proportional to
 $$
 \prod_{l=1}^N x_l^a (1 - x_l)^b \prod_{1 \le j < k \le N} (x_k - x_j)^2, \qquad 0 < x_l < 1,
 $$
 with $a = \pm 1/2$, $b=-a$, which is an example of the Jacobi unitary ensemble (JUE); see e.g.~\cite[\S 3.7.1]{Fo10}.
 Hence the corresponding generating functions for the gap probabilities are related by
 \begin{equation}\label{Sa1a}
 \mathcal E^{\O^\pm(2N+1)}((0,\theta);z) = \left. \mathcal E^{{\rm JUE}_N}\left ( \left (0,\sin^2{ \theta \over 2} \right );z \right ) \right |_{a = \pm 1/2, b = - a}.
 \end{equation}
 The significance of this result for present purposes is that the RHS can be expressed as a $\sigma$PVI transcendent \cite{HS99,FW04}.
 Specifically, reading off from \cite[Eqns.~(8.71), (8.75), (8.76)]{Fo10} we have
 \begin{equation}\label{Sa2} 
 \mathcal  E^{{\rm JUE}_N}((0,\sin^2 \theta/2);z) = \exp \int_0^{\sin^2 \theta/2} {f^\pm(t;z) \over t ( t - 1)} \, dt,
 \end{equation}
 where $f=f^\pm$ satisfies the particular $\sigma$PVI equation
 \begin{equation}\label{Sa3}  
 (t(1-t)f'')^2 + (f' - N^2)(2 f + (1 - 2t)f')^2 - (f')^2 \Big (f' - N^2 + {1 \over 4} \Big ) = 0,
 \end{equation}
 subject to the boundary condition
 \begin{equation}\label{Sa4} 
 f^\pm(t;z) \mathop{\sim}\limits_{t \to 0^+} \left \{ \begin{array}{ll} \displaystyle {8 N(N^2-1/4) \over 3 \pi} z t^{3/2}, & a=1/2, \\[.2cm]
  \displaystyle {2 N \over \pi} z t^{1/2},  & a=-1/2. \end{array} \right.
 \end{equation}
 We therefore have
 \begin{equation}\label{Sa5}  
  \mathcal E^{\O^\pm(2N+1)}((0,\pi s/N);z) =
 \exp \Big ( - \int_0^{(\pi s)^2} f^\pm(\sin^2 \sqrt{u}/2N;z) \, {du \over N \sqrt{u} \sin \sqrt{u}/N} \Big ).
 \end{equation}

 To obtain an expansion consistent with (\ref{Sa}) we make the ansatz
\begin{equation}\label{ffA}
f^\pm(\sin^2(\sqrt{w}/2N);z) = f_0^\pm + {f^\pm_1 \over N^2} + \O(N^{-4}).
 \end{equation}
 Changing variables in (\ref{Sa3}) $t = \sin^2(\sqrt{w}/ 2N)$, substituting (\ref{ffA}), and equating terms to leading order $(N^4)$ and to next
 leading order $(N^2)$, we obtain characterisations in terms of differential equations of $f_0^\pm$ and $f_1^\pm$.
 
 \begin{prop}
 The leading function $f_0^\pm$ in \eqref{ffA} satisfies the particular {\rm $\sigma{\rm PIII}'$} equation {\rm({\em with $v_1= v_2 = 1/2$ in the notation of \cite[Eq.~(8.15)]{Fo10}}\/)}
\begin{equation}\label{ffA1}
w^2 (f_0'')^2 - {(f_0')^2 \over 4} + 4 f_0 (f_0')^2 + w (f_0')^2 - 4 w (f_0')^3 - f_0 f_0' = 0,
 \end{equation} 
subject to the boundary condition
\begin{equation}\label{ffA2}
f_0^\nu(w;z) \sim \left \{ \begin{array}{ll} \displaystyle  {z \over 3 \pi} w^{3/2}, & \nu = +, \\*[.4cm]
\displaystyle {z \over \pi} w^{1/2}, & \nu = - . \end{array} \right.
\end{equation}
The function $f^\pm_1$ in \eqref{ffA} satisfies the second order linear differential equation
\begin{equation}\label{ffA3}
A_1(w;z) f_1'' + B_1(w;z) f_1' + C_1(w;z) f_1 + D_1(w;z) = 0,
\end{equation}
where the coefficients are given in terms of $f_0 = f_0^\pm$ according to
\begin{align*}
A_1(w;z) & = 2 w^2 f_0'', \\
B_1(w;z) & = 8 f_0 f_0' + 2 w f_0' - 12 w f_0^2 - f_0 - {f_0' \over 2}, \\
C_1(w;z) & = f_0'(4 f_0' - 1), \\
D_1(w;z) & = {f_0' \over 3} \Big ( 3 f_0^2 + w^2 f_0'' - 2 w^2 (f_0')^2 + w f_0 - {w f_0' \over 4} - 2 w f_0 f_0' \Big ) - {f_0^2 \over 4},
\end{align*}
 subject to the boundary condition
 \begin{equation}\label{ffA4}
 f_1^\pm(w;z) \sim - {z w^{3/2} \over 12 \pi},
\end{equation}
which is thus the same in both cases.
In terms of $f_0^\pm$ and $f_1^\pm$ we have 
\begin{multline}\label{3.31}
\mathcal E^{\O^\pm(2N+1)}((0,\pi s/N); z) = \exp \left ( - \int_0^{(\pi s)^2} {f_0^\pm (w;z) \over w} \, dw \right ) \\
\times \left ( 1 - {1 \over N^2} \int_0^{(\pi s)^2} {w f_0^\pm(w;z) + 6 f_1^\pm(w;z) \over 6 w} \, dw + \O\Big ( {1 \over N^4} \Big ) \right ).
\end{multline}
 \end{prop}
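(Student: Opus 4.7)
The plan is to pull back the $\sigma$PVI characterization \eqref{Sa2}--\eqref{Sa4} through the substitution $t = \sin^2(\sqrt w/(2N))$, insert the ansatz \eqref{ffA}, and collect terms order by order in $N^{-2}$. First, I would compute the variable-change data: $t'(w) = 1/(4N^2) - w/(24N^4) + \O(N^{-6})$, $t''(w) = \O(N^{-4})$, together with the exact identities $t(1-t) = \sin^2(\sqrt w/N)/4$ and $1 - 2t = \cos(\sqrt w/N)$. Writing $F(w) := f(\sin^2(\sqrt w/(2N));z) = f_0(w;z) + f_1(w;z) N^{-2} + \O(N^{-4})$ and using the chain rule, this gives for instance $t(1-t)f''(t) = 4N^2 w f_0'' + (4w f_1'' + 2wf_0'/3) + \O(N^{-2})$, and analogous expansions for $f'(t)-N^2$, $2f + (1-2t)f'(t)$ and $f'(t)-N^2+1/4$. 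The contributions depending on the next ansatz coefficient $f_2$ will need to be tracked but should cancel out at the $\O(N^2)$ level, so that the $N^2$-order equation is a closed equation for $f_1$ alone.

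Next, I would substitute into \eqref{Sa3} and match orders. The $\O(N^6)$ terms cancel between the last two summands automatically. The $\O(N^4)$ balance will read
\[
16 w^2 (f_0'')^2 + 16 f_0 f_0'(4f_0' - 1) - 16w (f_0')^2 (4f_0' - 1) - 4(f_0')^2 = 0,
\]
which after division by $16$ is exactly \eqref{ffA1}, the $\sigma$PIII$'$ instance with $v_1 = v_2 = 1/2$ from \cite[Eq.~(8.15)]{Fo10}. The $\O(N^2)$ balance is then linear in $(f_1, f_1', f_1'')$ by construction and will produce \eqref{ffA3} with the stated $A_1, B_1, C_1, D_1$, again after division by $16$.

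For the boundary conditions, the approach is to first solve \eqref{Sa3} as a power series in $t$ starting from the leading behavior \eqref{Sa4}, deriving inductively the $t$- and $t^{3/2}$-coefficients (respectively the $t^2$- and $t^{5/2}$-coefficients in the $\nu=+$ case) from the $t^{-1/2}$ and $t^0$ balances. Substituting $t = \sin^2(\sqrt w/(2N))$ and expanding this truncated $t$-series in $N^{-2}$ then yields \eqref{ffA2} at the $N^0$ level and \eqref{ffA4} at the $N^{-2}$ level. In the $\nu=-$ case the subleading $t^{3/2}$ coefficient is essential: the leading term $(2N/\pi)z\,t^{1/2}$ of \eqref{Sa4} alone produces only $-zw^{3/2}/(24\pi)$, which the $\sigma$PVI-determined $t^{3/2}$ correction will upgrade to the full $-zw^{3/2}/(12\pi)$ claimed by \eqref{ffA4}, matching the $\nu=+$ case (for which the stated leading coefficient in \eqref{Sa4} is sufficient).

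Finally, to establish \eqref{3.31} I would expand the measure factor in \eqref{Sa5} as $1/(N\sqrt u \sin(\sqrt u/N)) = (1/u)(1 + u/(6N^2) + \O(N^{-4}))$, multiply by $f^\pm(\sin^2(\sqrt u/(2N));z) = f_0^\pm(u;z) + f_1^\pm(u;z)/N^2 + \O(N^{-4})$, integrate over $u \in (0,(\pi s)^2)$, and expand the exponential to order $N^{-2}$; the two surviving terms reproduce \eqref{3.31} exactly. The hard part will be the $\O(N^2)$ balance in the second step: after multiplying out the three summands of \eqref{Sa3} one must systematically track many terms, verify that the $f_2$-dependent contributions cancel and that the $w^2$-nonlinearities between $(f_0')^2$-terms from the various quadratic products combine cleanly, and then check that the remaining pieces organise exactly into the stated $A_1 f_1'' + B_1 f_1' + C_1 f_1 + D_1$. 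This calculation is mechanical but lengthy, and its consistency is in any case predetermined by the existence of the ansatz \eqref{ffA}.
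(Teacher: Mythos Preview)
Your proposal is correct and follows exactly the approach the paper outlines: the paper itself gives only the one-sentence sketch ``Changing variables in (\ref{Sa3}) $t=\sin^2(\sqrt w/2N)$, substituting (\ref{ffA}), and equating terms to leading order $(N^4)$ and to next leading order $(N^2)$,'' and your plan is a faithful and accurate fleshing-out of precisely that computation, including the correct identification of the $\O(N^6)$ cancellation, the $\O(N^4)$ balance producing (\ref{ffA1}), the cancellation of the $f_2$-contributions at $\O(N^2)$, and the subtlety that the $\nu=-$ boundary condition (\ref{ffA4}) requires the $t^{3/2}$ correction to (\ref{Sa4}).
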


\begin{remark}
A problem for future study is to describe how the solutions of \eqref{Sig2} and \eqref{ffA3} relate to the broader Painlev\'e theory.\footnote{This was raised by one of the referees.} 
\end{remark}

Scaling the eigen-angles of ${\rm COE}_{2N}$ to have unit mean spacing by setting $\theta =\pi s/N$, 
we can now substitute (\ref{3.31}) into \eqref{W7}; then---after using \eqref{Th} to add the presence of thinning---substitute the result in (\ref{Sa1}) and compare with (\ref{Sa}) to obtain characterisations of the expansion terms $p_{1,\xi}(0;s)$ and $r_{1,\xi}(0;s)$ in terms of Painlev\'{e} transcendents.
 
 \begin{prop}\label{Cc1}
 We have
 \begin{multline}\label{Cc1a}
 p_{1,\xi}(0;s) = {1 \over \xi (2 - \xi)}  {d^2 \over d s^2} \left ( (1 - \xi) \exp \left ( - \int_0^{(\pi s/2)^2} {f_0^+(w;2 \xi - \xi^2) \over w} \, dw \right ) \right.\\
\left. + \exp  \left ( - \int_0^{(\pi s/2)^2} {f_0^-(w;2 \xi - \xi^2) \over w} \, dw \right ) \right)
 \end{multline}
 and
  \begin{multline}\label{Cc1b}
 r_{1,\xi}(0;s) = \\
 - {4 \over \xi (2 - \xi)} {d^2 \over d s^2} \left (  (1 - \xi)   \int_0^{(\pi s/2)^2} {w f_0^+(w;2\xi - \xi^2) + 6 f_1^+(w;2\xi - \xi^2) \over 6 w} \, dw \right. \\
 \times
 \exp \left ( - \int_0^{(\pi s/2)^2} {f_0^+(w;2\xi - \xi^2) \over w} \, dw \right ) \\
 \quad +  \int_0^{(\pi s/2)^2} {w f_0^-(w;2\xi - \xi^2) + 6 f_1^-(w;2\xi - \xi^2) \over 6 w} \, dw  \\
 \left. \times \exp \left ( - \int_0^{(\pi s/2)^2} {f_0^-(w;2\xi - \xi^2) \over w} \, dw \right ) \right ).
 \end{multline}
 \end{prop}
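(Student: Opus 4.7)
The plan is to combine the scaled COE gap-probability identity~\eqref{W7} with the Painlevé expansion~\eqref{3.31} for the orthogonal groups, then differentiate twice in $s$ via~\eqref{Sa1} and match the resulting series against the ansatz~\eqref{Sa}. The thinning identity~\eqref{Th} is used to introduce the parameter $\xi$: setting $z=\xi$ in~\eqref{W7} produces $\bar z = \bar\xi = 2\xi-\xi^2$ as the argument of the orthogonal generating function on the right-hand side.

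First I would fix the scaling. Following the convention $N\to 2N$ adopted at the start of the Painlevé subsection, the interval $(-s/2,s/2)$ in unit-mean-spacing variables corresponds to angular half-width $\theta = \pi s/(2N)$, so that~\eqref{W7} expresses $E_\xi^{{\rm COE}_{2N}}(0;(-s/2,s/2))$ entirely in terms of $\mathcal E^{\O^\pm(2N+1)}((0,\pi s/(2N));\bar\xi)$. Applying~\eqref{3.31} with $s \mapsto s/2$ then produces integrals from $0$ to $(\pi s/2)^2$, which already matches the integration limits appearing in~\eqref{Cc1a}--\eqref{Cc1b}.

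Next I would multiply out the numerator of~\eqref{W7}, expand in powers of $1/N^2$ to the required order while keeping the $1/(2-\xi)$ denominator factored out, and apply $(1/\xi)\,d^2/ds^2$. At leading order this directly produces~\eqref{Cc1a}. For the first correction, matching the Painlevé expansion with~\eqref{Sa} requires one key bookkeeping observation: because $N$ has been renamed to $2N$, the coefficient $1/N^2$ in~\eqref{Sa} corresponds in the current notation to $1/(2N)^2 = 1/(4N^2)$. Consequently the $-1/N^2$ term of~\eqref{3.31} contributes a factor of $-4$ to $r_{1,\xi}(0;s)$, combining with the $1/(\xi(2-\xi))$ coming from $(1/\xi)\cdot 1/(2-\xi)$ to give the prefactor $-4/(\xi(2-\xi))$ in~\eqref{Cc1b}.

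The main obstacle is not analytic but is the bookkeeping: aligning the various factors of $2$ that arise from (i) the renaming $N\to 2N$, (ii) the correspondence between the scaled interval $(-s/2,s/2)$ and the angular half-width $\theta = \pi s/(2N)$, and (iii) the $z\mapsto \bar z = 2z-z^2$ replacement built into~\eqref{W7}. Once these are carefully aligned the proposition follows by direct substitution, with no further analytic input required.
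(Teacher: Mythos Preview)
Your proposal is correct and follows exactly the approach the paper outlines in the sentence preceding the proposition: substitute the Painlev\'e expansion~\eqref{3.31} into the COE--orthogonal identity~\eqref{W7}, apply thinning via~\eqref{Th} (which turns $z$ into $\xi$ and hence $\bar z$ into $2\xi-\xi^2$), feed the result through~\eqref{Sa1}, and compare with~\eqref{Sa}. Your explicit tracking of the factor $4$ arising from the relabelling $N\to 2N$ and of the upper limit $(\pi s/2)^2$ arising from the half-interval $(-s/2,s/2)$ is precisely the bookkeeping the paper leaves implicit.
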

 
 \begin{remark}\label{R3.6}
 The case $\xi = 1$ of (\ref{Cc1a}) agrees with the $\sigma{\rm PIII}'$
 formula for $p_1(0;s)$ (no thinning) reported in \cite{FW00e} and \cite{Fo06}.
 \end{remark}
 
 The differential equations (\ref{ffA1}) and (\ref{ffA3}) can be used to successively generate terms in the series expansions of $f_0^{\pm}$ and $f_1^{\pm}$ about the origin,
 extending the boundary conditions (\ref{ffA2}) and (\ref{ffA4}). Substituting these into (\ref{Cc1a}) and (\ref{Cc1b}) then provides us with the small $s$
 expansion of $p_{1,\xi}(0;s)$ and $r_{1,\xi}(0;s)$.
 
 \begin{corollary}\label{C3.8}
 We have
  \begin{multline}\label{Cc1c}
  p_{1,\xi}(0;s) = \frac{1}{6} \pi ^2 \xi s   -\frac{1}{60} \pi ^4 \xi s^3  -\frac{1}{270} \pi ^4(\xi
   -2) \xi  s^4 +\frac{\pi ^6 \xi s^5  }{1680}  +\frac{\pi ^6
   (\xi -2) \xi  s^6 }{4725}  \\*[2mm] - \frac{\pi ^8 \xi s^7 }{90720} +\frac{\pi ^8  (\xi
   -2) (3 \xi -32)\xi s^8  }{5292000} +\frac{\pi ^{10} \xi s^9 }{7983360}   + \O(s^{10})
  \end{multline}
  and
 \begin{multline}\label{Cc1d}  
 r_{1,\xi}(0;s) = -\frac{1}{6} \pi ^2 \xi s +\frac{1}{18} \pi ^4 \xi s^3 +\frac{1}{54} 
   \pi ^4 (\xi -2) \xi s^4-\frac{1}{240} \pi ^6  \xi s^5  -\frac{4 \pi ^6
    (\xi -2) \xi s^6}{2025}\\*[2mm] +\frac{\pi ^8 \xi s^7}{7560}-\frac{\pi ^8
    (\xi -2) \left(3 \xi -32 \right)\xi s^8}{352800}-\frac{\pi^{10} \xi s^9}{435456} + \O\left(s^{10}\right).
 \end{multline}    
  \end{corollary}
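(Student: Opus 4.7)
The plan is a direct but lengthy series computation, exactly as flagged in the paragraph preceding the statement. Given Proposition \ref{Cc1}, everything follows once we have enough terms of the Taylor expansions of $f_0^\pm(w;z)$ and $f_1^\pm(w;z)$ around $w=0$; the boundary conditions \eqref{ffA2} and \eqref{ffA4} fix the leading terms, and the ODEs \eqref{ffA1} and \eqref{ffA3} supply recursions for the rest. The whole calculation is purely mechanical; the only obstacle is the sheer amount of algebra, best handled by a computer algebra system.

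\emph{Step 1 --- series for $f_0^\pm$.} From \eqref{ffA2}, the natural ansatz is
\[
f_0^-(w;z) = \sum_{k\geq 1} a_k^-(z)\, w^{k/2},\qquad
f_0^+(w;z) = \sum_{k\geq 3} a_k^+(z)\, w^{k/2},
\]
with $a_1^-(z) = z/\pi$ and $a_3^+(z) = z/(3\pi)$. Substituting into the nonlinear $\sigma\mathrm{PIII}'$ equation \eqref{ffA1} and collecting like powers of $w^{1/2}$ gives, from the top order downwards, a recursion that determines each $a_k^\pm(z)$ as a polynomial in $z$. I would iterate this up to $k\approx 11$ so that after the final two differentiations in $s$ the result is accurate through $s^9$.

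\emph{Step 2 --- series for $f_1^\pm$.} Equation \eqref{ffA3} is \emph{linear} in $f_1$ with coefficients that are polynomial in $w$ and in the (now known) $f_0^\pm$. So the ansatz
\[
f_1^\pm(w;z) = \sum_{k\geq 3} b_k^\pm(z)\, w^{k/2},
\]
normalised by \eqref{ffA4} as $b_3^\pm(z) = -z/(12\pi)$, again produces a straightforward recursion for the $b_k^\pm(z)$, this time decoupled (linear) once the $a_k^\pm(z)$ have been tabulated.

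\emph{Step 3 --- integration, exponentiation, differentiation.} Set $W = (\pi s/2)^2$, so $W^{k/2} = (\pi s/2)^k$. Each integrand $f_0^\pm(w;z)/w$ is a Laurent-type series in $w^{1/2}$ starting at $w^{-1/2}$ (for $-$) or $w^{1/2}$ (for $+$); either way the integral $\int_0^W$ exists and is a power series in $W^{1/2}$ starting at the positive order $1$ or $3/2$ respectively. The same applies to $(wf_0^\pm+6f_1^\pm)/(6w)$. Exponentiating the former as a formal power series, combining the two branches as in \eqref{Cc1a} and \eqref{Cc1b} with the prefactor $1/(\xi(2-\xi)) = 1/\bar\xi$, differentiating twice in $s$, and finally specialising $z = \bar\xi = 2\xi-\xi^2$ yields polynomials in $\xi$ whose denominator $(2-\xi)$ should cancel against the numerator, producing the factors $\xi$ and $(\xi-2)\xi$ visible in \eqref{Cc1c} and \eqref{Cc1d}.

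\emph{Where difficulty lies.} No conceptual obstacle arises; the challenge is bookkeeping. Carrying the $w^{1/2}$-series to the required order, and tracking their $z$-polynomial coefficients through a nonlinear recursion (Step 1), a linear recursion (Step 2), an exponential (Step 3), and two derivatives, is error-prone by hand. I would implement it symbolically, and sanity-check by verifying that the specialisation $\xi=1$ of \eqref{Cc1c} reproduces the Gaudin/Forrester--Witte small-$s$ expansion of $p_1(0;s)$ referenced in Remark \ref{R3.6}.
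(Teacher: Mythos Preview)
Your proposal is correct and follows exactly the approach described in the paper: generate the series expansions of $f_0^\pm$ and $f_1^\pm$ about the origin from the ODEs \eqref{ffA1}, \eqref{ffA3} with boundary conditions \eqref{ffA2}, \eqref{ffA4}, then substitute into \eqref{Cc1a} and \eqref{Cc1b}. The paper gives no further detail than that one-sentence recipe, so your more explicit breakdown into Steps 1--3 (including the half-integer power ansatz and the specialisation $z=\bar\xi$) is a faithful elaboration of the same method.
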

We remark that setting $\xi=1$ in (\ref{Cc1c}) extends the expansion
of $p_{1,\xi}(0;s)|_{\xi = 1}$ given in \cite[Eq.~(8.141)]{Fo10}.

 For the analogous results in the case of the ${\rm CSE}_N$, scaling the eigen-angles to have unit mean spacing 
 by setting $\theta =2\pi s/N$, we substitute (\ref{3.31}) into 
 (\ref{W6c});  then---after using \eqref{Th} to add the presence of thinning---substitute the result in (\ref{Sa16}) and compare
 with (\ref{Sa7}).

  \begin{prop}\label{Cc2}
 We have
  \begin{multline}\label{Cc2a}
 p_{4,\xi}(0;s) \\
 = {1 \over 2 \xi }  {d^2 \over d s^2} \left (  \exp \left ( - \int_0^{(\pi s)^2} {f_0^+(w; \xi ) \over w} \, dw \right ) +
 \exp  \left ( - \int_0^{(\pi s)^2} {f_0^-(w; \xi ) \over w} \, dw \right ) \right )
 \end{multline}
 and
  \begin{multline}\label{Cc2b}
 r_{4,\xi}(0;s) = - {1 \over 2 \xi } {d^2 \over d s^2} \left (    \left ( \int_0^{(\pi s)^2} {w f_0^+(w;\xi ) + 6 f_1^+(w;\xi) \over 6 w} \, dw \right ) \right.\\ \times
 \exp \left ( - \int_0^{(\pi s)^2} {f_0^+(w;\xi) \over w} \, dw \right ) \\
 \quad \left. + 
 \left ( \int_0^{(\pi s)^2} {w f_0^-(w;\xi ) + 6 f_1^-(w;\xi) \over 6 w} \, dw \right ) 
 \exp \left ( - \int_0^{(\pi s)^2} {f_0^-(w;\xi) \over w} \, dw \right ) 
  \right ).
 \end{multline}
 \end{prop}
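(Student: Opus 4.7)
The plan is to combine the operator-level identity~(\ref{Sa16}) with~(\ref{W6c}) and the thinning rule~(\ref{Th}), substitute the Painlev\'e expansion~(\ref{3.31}) for each of the two $\O^\pm(2N+1)$ gap probabilities, and then read off the leading and subleading terms in the large-$N$ expansion by matching against~(\ref{Sa7}).

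First, from~(\ref{Sa16}), (\ref{W6c}), and~(\ref{Th}), tracing the scaling (the CSE$_N$ mean spacing is $2\pi/N$, so the scaled interval $(-s/2, s/2)$ from~(\ref{Sa16}) becomes the angular interval $(-\pi s/N, \pi s/N)$) I would write
\begin{equation*}
 p^{{\rm CSE}}_\xi(0;s) = \frac{1}{2\xi}\frac{d^2}{ds^2}\Big(\mathcal E^{\O^+(2N+1)}\big((0,\pi s/N);\xi\big) + \mathcal E^{\O^-(2N+1)}\big((0,\pi s/N);\xi\big)\Big).
\end{equation*}
The angular endpoint $\pi s/N$ is exactly the one for which~(\ref{3.31}) is stated.

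Second, substituting~(\ref{3.31}) with $z=\xi$ turns each summand into
\begin{equation*}
 \exp\Big(-\int_0^{(\pi s)^2}\frac{f_0^\epsilon(w;\xi)}{w}\,dw\Big) \cdot \Big(1 - \frac{1}{N^2}\int_0^{(\pi s)^2}\frac{w f_0^\epsilon(w;\xi)+6f_1^\epsilon(w;\xi)}{6w}\,dw + \O(N^{-4})\Big),
\end{equation*}
for $\epsilon\in\{+,-\}$. Expanding in $N^{-2}$, applying $d^2/ds^2$, and matching with~(\ref{Sa7}) identifies the $N^0$-coefficient with $p_{4,\xi}(0;s)$ as in~(\ref{Cc2a}), and the $N^{-2}$-coefficient with $r_{4,\xi}(0;s)$ as in~(\ref{Cc2b}); the two summands in each formula correspond to $\epsilon=\pm$. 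As an internal consistency check the leading order must agree with Proposition~\ref{prop:CSE}, which amounts to the identity $\det(\II-\xi\KK_s^\pm) = \exp(-\int_0^{(\pi s)^2} f_0^\pm(w;\xi)/w\,dw)$ established as the bulk scaling limit of the Jacobi-type Painlev\'e characterization~(\ref{Sa1a})--(\ref{Sa5}).

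The main potential obstacle is the careful bookkeeping of three intertwined scalings: the CSE$_N$ mean spacing $2\pi/N$, the $\O^\pm(2N+1)$ mean spacing $\pi/N$ on $(0,\pi)$, and the change of variable $t=\sin^2(\sqrt w/(2N))$ built into~(\ref{3.31}). The parallel COE computation of Proposition~\ref{Cc1} provides a template; its upper limits $(\pi s/2)^2$ reflect the half-sized interval $(0,s/2)$ in~(\ref{M1}), whereas in the CSE case the operators in~(\ref{M2}) live on the full interval $(0,s)$, which accounts for the $(\pi s)^2$ upper limits appearing in~(\ref{Cc2a})--(\ref{Cc2b}).
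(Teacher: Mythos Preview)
Your proposal is correct and follows essentially the same route as the paper: combine (\ref{Sa16}), (\ref{W6c}), and (\ref{Th}) with the scaling $\theta=\pi s/N$ for the half-interval, substitute the expansion (\ref{3.31}) at $z=\xi$, and read off the $N^0$ and $N^{-2}$ coefficients by comparison with (\ref{Sa7}). The paper states the argument in a single sentence immediately before the proposition; your additional remarks on bookkeeping the scalings and on the consistency with Proposition~\ref{prop:CSE} are accurate but not needed for the proof itself.
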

 
  \begin{remark}
  As in Remark \ref{R3.6}
 the case $\xi = 1$ of (\ref{Cc2a}) agrees with the $\sigma\text{PIII}'$ 
 formula for $p_4(0;s)$ (no thinning) reported in \cite{FW00e} and \cite{Fo06}.
 \end{remark}
 
 As with Corollary \ref{C3.8} we can use these characterisations to generate power series expansions about the origin.
 
\begin{corollary}\label{C3.9}
 We have
\begin{align}
\nonumber  p_{4,\xi}(0;s) &= \frac{16}{135} \pi ^4 \xi s^4-\frac{128  \pi ^6 \xi  s^6}{4725}+\frac{512 \pi ^8 \xi s^8}{165375}-\frac{34816  \pi ^{10} \xi  s^{10}}{147349125}+ \O(s^{12}),\\*[2mm]
\nonumber r_{4,\xi}(0;s) &= -\frac{4}{27} \pi ^4 \xi s^4+\frac{128  \pi ^6 \xi  s^6}{2025}-\frac{128 \pi ^8 \xi s^8}{11025}+\frac{17408  \pi ^{10} \xi  s^{10}}{13395375}+ \O(s^{12}).
\end{align}
 \end{corollary}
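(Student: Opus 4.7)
My plan mirrors the derivation of Corollary \ref{C3.8}: starting from the small-$w$ boundary conditions \eqref{ffA2} and \eqref{ffA4}, I would solve the ODEs \eqref{ffA1} and \eqref{ffA3} recursively to extend $f_0^\pm(w;\xi)$ and $f_1^\pm(w;\xi)$ as power series in $w^{1/2}$, then substitute into the Painlevé-transcendent representations \eqref{Cc2a}--\eqref{Cc2b} and take two $s$-derivatives.

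First, I would generate the series for $f_0^\pm$. For the ``$+$'' branch, the ansatz $f_0^+(w;\xi) = \sum_{k\ge 3} c_k^+(\xi)\, w^{k/2}$ respects the boundary condition $c_3^+ = \xi/(3\pi)$ and, substituted into the nonlinear $\sigma$PIII$'$ equation \eqref{ffA1}, yields a recursion for $c_{k+1}^+$ in terms of $\{c_j^+\}_{j\le k}$. Similarly, for the ``$-$'' branch, $f_0^-(w;\xi) = \sum_{k\ge 1} c_k^-(\xi)\, w^{k/2}$ with $c_1^- = \xi/\pi$. Because \eqref{ffA3} is linear in $f_1^\pm$ with coefficients built from $f_0^\pm$, one can then recursively determine the series $f_1^\pm(w;\xi) = \sum_{k\ge 3} d_k^\pm(\xi)\, w^{k/2}$ starting from $d_3^\pm = -\xi/(12\pi)$. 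Note that the $\xi$-dependence enters only through the boundary conditions, which explains why the low-order $s$-coefficients depend polynomially on $\xi$.

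Next, I would substitute $w = (\pi s)^2$ throughout \eqref{Cc2a}--\eqref{Cc2b}. This turns the half-integer powers of $w$ into integer powers of $s$, so the integrals $\int_0^{(\pi s)^2} f_0^\pm(w;\xi)/w\,dw$ and $\int_0^{(\pi s)^2} (w f_0^\pm + 6 f_1^\pm)/(6w)\,dw$ can be evaluated term by term. Expanding the resulting exponentials as standard power series in $s$, summing the ``$+$'' and ``$-$'' contributions with the combination dictated by \eqref{Cc2a}--\eqref{Cc2b}, and differentiating twice in $s$ then produces the claimed expansions through order $s^{12}$.

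The only real obstacle is bookkeeping. One must carry enough terms in the $w^{1/2}$ expansions of $f_0^\pm$ and $f_1^\pm$ (through order $w^{11/2}$ or so) to ensure that, after the convolution of all the series and the two $s$-derivatives, the remainders are controlled at the advertised order. The process is purely mechanical and well suited to a symbolic computer algebra system; consistency with the $\xi = 1$ result of \cite{FW00e, Fo06} for $p_4(0;s)$ and the expected $\sim s^4$ leading behaviour reflecting the $\beta = 4$ level repulsion at the origin serve as useful sanity checks.
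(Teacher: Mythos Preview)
Your proposal is correct and follows essentially the same approach as the paper: the paper states only that, as with Corollary~\ref{C3.8}, one uses the differential equations \eqref{ffA1} and \eqref{ffA3} together with the boundary conditions \eqref{ffA2} and \eqref{ffA4} to generate the series for $f_0^\pm$ and $f_1^\pm$, and then substitutes into the Painlev\'e representations \eqref{Cc2a}--\eqref{Cc2b}. Your write-up is in fact more detailed than the paper's own justification.
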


\renewcommand{\thesection}{A}
\section*{Appendix: Implementation}

The method of Section~\ref{sub:numerical} for numerically evaluating general terms of integral operators is most easily added to the Matlab toolbox described
in \cite{Bo09}; the basic code needed to run it is as follows:  

\begin{lstlisting}[numbers=left,basicstyle=\ttfamily\footnotesize,texcl]
function [val,err,n] = OperatorTerm(term,varargin)

%OPERATORTERM evaluates terms of integral operators that have a scalar value.
%
%   OPERATORTERM(term,K1,...,Km) returns the value of the
%   expression 'term(K1,...,Km)' for m discrete integral 
%   operators K1,...,Km (that is, for Nyström matrices of
%   a variable dimension n, which is chosen adaptively). 
%   The value to be returned must be scalar.

tol = 5e-15;
n = 7; n_max = 1000;

operators = cell(size(varargin));

val0 = inf;
while n < n_max
    n = floor(1.41*n);
    for k = 1:length(varargin)
        operators{k} = varargin{k}(n);
    end
    val = term(operators{:});
    err = abs(val-val0);
    if err < tol; break; end
    val0 = val;
end
\end{lstlisting}

From the user perspective this command is called with a considerable symbolic look and feel. For example, the leading  order term in \eqref{Tq0}, that is,
\[
\det(\II-\KK_s)
\]
evaluates for $s=1$ to the following value:

\begin{lstlisting}[numbers=left,basicstyle=\ttfamily\footnotesize,texcl]
s = 1;
K = op(@(x,y) sinc(pi*(x-y)),[0,s]);
I = @(K) eye(size(K));
lead = @(K) det(I(K) - K);
[val,err,n] = OperatorTerm(lead,K);
PrintCorrectDigits(val,err);
 
    0.170217421379185   
\end{lstlisting}
and the leading correction term in \eqref{Tq0}, that is, 
\[
\Om{\KK_s}{\LL_s} = -\det(\II - \KK_s) \tr \left((\II - \KK_s)^{-1}\LL_s \right),
\]
evaluates for $s=1$ to the following value:
\begin{lstlisting}[numbers=left,firstnumber=9,basicstyle=\ttfamily\footnotesize,texcl]
L = op(@(x,y) pi*(x-y).*sin(pi*(x-y))/6,[0,s]);
corr = @(K,L) -det(I(K)-K)*trace((I(K)-K)\L);
[val,err,n] = OperatorTerm(corr,K,L);
PrintCorrectDigits(val,err);
 
    -0.075241982465122
\end{lstlisting}
Since the estimated error has been taken into account when printing these values, they are good to about 15 digits; 
in both cases the adaptively chosen number of Gauss--Legendre quadrature points (that is, the dimension of the 
Nyström matrices representing the
integral operators) was as small as $n=15$, which is a clear sign of the exponential convergence of the method. 
CPU time is about a millisecond.

\section*{Acknowledgements}
We are grateful to A.~Odlyzko for providing us with the Riemann zero data set. We would also like to thank the referees for their thoughtful comments. The research of F.B. was supported by the DFG-Collaborative Research Center, TRR 109, ``Discretization in Geometry and Dynamics.'' The research of P.J.F. and A.M. is part of the program of study supported by the 
ARC Centre of Excellence for Mathematical \& Statistical Frontiers.


\providecommand{\bysame}{\leavevmode\hbox to3em{\hrulefill}\thinspace}
\providecommand{\MR}{\relax\ifhmode\unskip\space\fi MR }
\providecommand{\MRhref}[2]{%
  \href{http://www.ams.org/mathscinet-getitem?mr=#1}{#2}
}
\providecommand{\href}[2]{#2}

\end{document}